\def\pure{{\rm PURE}}
\def\pos{{\rm Pos}}
\def\cptp{{\rm CPTP}}
\def\tho{{\rm TO}}
\def\gp{{\rm GPO}}
\def\gpc{{\rm GPC}}
\def\cto{{\rm CTO}}
\def\distill{{\rm Distill}}
\def\cost{{\rm Cost}}
\def\type{{\rm Type}}
\def\1{\mathds{1}}
\def\>{\rangle}
\def\<{\langle}
\def\id{\mathsf{id}}
\def\mE{\mathcal{E}}
\def\mN{\mathcal{N}}
\def\mP{\mathcal{P}}
\def\mV{\mathcal{V}}
\renewcommand{\qedsymbol}{\nobreak \ifvmode \relax \else
	\ifdim \lastskip<1.5em \hskip-\lastskip \hskip1.5em plus0em
	minus0.5em \fi \nobreak \vrule height0.75em width0.5em
	depth0.25em\fi}
\renewcommand{\geq}{\geqslant}
\renewcommand{\leq}{\leqslant}
\newtheorem{theorem}{Theorem}[section]
\newtheorem{corollary}{Corollary}[section]
\newtheorem{lemma}{Lemma}[section]
\newtheorem{definition}{Definition}[section]
\newtheorem*{remark}{Remark}
\newcommand{\bea}{\begin{eqnarray}}
\newcommand{\eea}{\end{eqnarray}}
\newcommand{\be}{\begin{equation}}
\newcommand{\ee}{\end{equation}}
\newcommand{\ba}{\begin{equation}\begin{aligned}}
\newcommand{\ea}{\end{aligned}\end{equation}}
\newcommand{\bee}{\begin{enumerate}}
\newcommand{\eee}{\end{enumerate}}
\newcommand{\ben}{\begin{enumerate}}
\newcommand{\een}{\end{enumerate}}
\def\be{\begin{equation}}
\def\ee{\end{equation}}
\newcommand{\mU}{\mathcal{U}}
\newcommand{\cov}{{\rm COV}}
\newcommand{\mb}{\mathfrak{B}}
\newcommand{\md}{\mathfrak{D}}
\newcommand{\mf}{\mathfrak{F}}
\newcommand{\ms}{\mathfrak{S}}
\newcommand{\mM}{\mathcal{M}}
\newcommand{\lr}{\rangle\langle}
\newcommand{\la}{\langle}
\newcommand{\ra}{\rangle}
\newcommand{\tr}{{\rm Tr}}
\newcommand{\eps}{\varepsilon}
\newcommand{\mbb}[1]{\mathbb{#1}}
\newcommand{\eqdef}{\coloneqq}
\def\r{\mathbf{r}}
\def\s{\mathbf{s}}
\def\p{\mathbf{p}}
\def\q{\mathbf{q}}
\def\g{\mathbf{g}}
\def\x{\mathbf{x}}
\def\t{\mathbf{t}}
\def\u{\mathbf{u}}
\def\0{\mathbf{0}}
\def\tA{\tilde{A}}
\def\tB{\tilde{B}}
\def\trho{\tilde{\rho}}
\def\tsigma{\tilde{\sigma}}
\newcommand{\GG}[1]{\rm \textcolor{red}{ #1}}
\newcommand{\Gg}[1]{\textcolor{red}{ #1}}
\newtcolorbox{myt}[2][]{%
  attach boxed title to top center
               = {yshift=-4pt},
  colback      = blue!5!white,
  colframe     = blue!75!black,
  halign       = flush left,
  fonttitle    = \bfseries\sffamily,
  colbacktitle = blue!65!black,
  title        = #2,#1,
  enhanced,
}
\newtcolorbox{myd}[2][]{%
  attach boxed title to top center
               = {yshift=-4pt},
  colback      = violet!5!white,
  colframe     = violet!75!black,
  halign       = flush left,
  fonttitle    = \bfseries\sffamily,
  colbacktitle = violet!65!black,
  title        = #2,#1,
  enhanced,
}
\newtcolorbox{mye}[2][]{%
  attach boxed title to top center
               = {yshift=-4pt},
  colback      = purple!5!white,
  colframe     = purple!75!black,
  halign       = flush left,
  fonttitle    = \bfseries\sffamily,
  colbacktitle = purple!65!black,
  title        = #2,#1,
  enhanced,
}
\newtcolorbox{myg}[2][]{%
  attach boxed title to top center
               = {yshift=-4pt},
  colback      = green!5!white,
  colframe     = green!75!black,
  halign       = flush left,
  fonttitle    = \bfseries\sffamily,
  colbacktitle = green!65!black,
  title        = #2,#1,
  enhanced,
}
\def\bpm{\begin{pmatrix}}
\def\epm{\end{pmatrix}}
\begin{document}
	
	
	\title{On the Role of Quantum Coherence in  Thermodynamics}

\author{Gilad Gour}\email{gour@ucalgary.ca}
\affiliation{
Department of Mathematics and Statistics, Institute for Quantum Science and Technology,
University of Calgary, AB, Canada T2N 1N4}

	\date{\today}
	
	\begin{abstract}
	We find necessary and sufficient conditions to determine the inter-convertibility of quantum systems under  time-translation covariant evolution, and use it to solve several problems in quantum thermodynamics both in the single-shot and asymptotic regimes. It is well known that  the resource theory of quantum athermality is not reversible, but in PRL 111, 250404 (2013) it was claimed that the theory becomes reversible ``provided a sublinear amount of coherent superposition over energy levels is available". Here we show that if a sublinear amount of coherence among energy levels were considered free, then the resource theory of athermality would become trivial. Instead, we show that by considering a sublinear amount of energy to be free, the theory of athermality becomes reversible for the pure-state case. A proof of the same claim for the mixed-state case is still lacking.  
\end{abstract}

	\maketitle

	\section{Introduction}
Thermodynamics is one of the most prevailing theories in physics with vast applications spreading from its early days focus on steam engines to modern applications in biochemistry, nanotechnology and black hole physics, just to name a few~\cite{GMM2004,BCG+2018,DC2019}. Despite the success of this field, the foundations of thermodynamics remain controversial even today. Not only is there persistent confusion over the relation between the macroscopic and microscopic laws, in particular their reversibility and time-symmetry, there is not even consensus on how best to formulate the second law. Indeed, as the Nobel laureate Percy Bridgman remarked in 1941 “there are almost as many formulations of the Second Law as there have been discussions of it” and the situation hasn’t improved much since then. In recent years, researchers have begun to adopt a new perspective on these foundational problems by reformulating thermodynamics as a resource theory~\cite{CG2019,CFS2016,HO2013}. In 
this approach to thermodynamics, a system that is not in equilibrium with its environment is considered as a resource called \emph{athermality}. Athermality is the fuel that is consumed, in work extraction, computational erasure operation, and other thermodynamical tasks~\cite{JW2000,OHHH2002,RAR+2011,BHO+2013,HO2013,SSP2014,BHN+2015,LKJ+2015,EDR2015,FOR2015,GMN+2015,LJR2015,K2016,GHR+2016,AOP2016,GPS+2016,HFOW2016,WKFR2016,YR2016,LJR2017,MO2017,GJB+2018,BBM2018,H2018,MSP2019,Bernardo2020}.

The resource theoretic approach to thermodynamics focuses on how to quantify a state’s deviation from equilibrium, how to use this for useful tasks in quantum thermodynamics, and what the necessary and sufficient conditions are for one state to be converted to another. In this approach one can consider various notions of state conversion: exact and approximate, single-copy and multiple-copy, with and without the help of a catalyst. 
Such quantum-information techniques lead to many novel insights, particularly given the historical significance of the notion of information for such foundational topics as Maxwell’s demon~\cite{MNV2009}, the thermodynamic reversibility of computation~\cite{B1973,B1982}, Landauer’s principle about the work cost of erasure~\cite{L1961,JW2000}, and Jaynes’s use of maximum entropy principles in deriving statistical mechanics~\cite{J1957a,J1957b}.
Moreover, the resource theoretic approach to thermodynamics demonstrates that the standard formulation of the second law of thermodynamics, as the non-decrease of entropy, is inadequate as a criterion for deciding whether or not a given state conversion is possible. Nonetheless, one can identify a set of measures of the degree of nonequilibrium (including the entropy), such that the state conversion is possible if and only if all of these measures are not increasing~\cite{BHN+2015,GJB+2018}. 

The role of quantum coherence in the resource theory of athermality has several subtleties that were overlooked in some of these works, including the seminal paper~\cite{BHO+2013} that introduced the resource theory of athermality~\footnote{We should point out that the work of~\cite{JWZ+2000} already introduced the resource theory of athermality 13 years earlier}. Specifically, one of the main results of~\cite{BHO+2013} asserts that the free energy ``quantifies the rate at which resource states can be reversibly interconverted asymptotically, provided that a sublinear amount of coherent superposition over energy levels is available, a situation analogous to the sublinear amount of classical
communication required for entanglement dilution". However, it is relatively simple to show (see~\cite{GMS2009} as well as~\eqref{108} below) that the quantum coherence of a pure quantum state $|\psi\ra^{\otimes n}$ grows at most logarithmically with $n$, so that if a sublinear amount of coherence among energy levels were considered free, then the resource theory of athermality would become trivial. What is meant in~\cite{BHO+2013} is that coherence is provided among energy levels that grows sublinearly with $n$. In other words, they assume that the total energy (not coherence) grows sublinearly with $n$, but the proof given in~\cite{BHO+2013} contains some gaps; see Appendix for more specific details.

In this paper we refine this assumption, by considering ``asymptotically negligible resources" to be sequences of quantum states $\{\omega_n\}_{n\in\mbb{N}}$, with $O(\log(n))$ amount of coherence, but whose total energy grows sublinearly with $n$. 
Since the energy of $n$ copies of any (non-zero-energy) state grows linearly with $n$, this assumption is reasonable as it allows for coherence only among energy eigenvectors with asymptotically negligible energy. 
Under this mild assumption we are able to recover the reversibility of the resource theory of quantum athermality in the pure-state regime. 

The paper is organized as follows. After introducing notations and several pertaining preliminary results in Sec.~\ref{preliminaries}, we develop the resource theory of time-translation asymmetry in section~\ref{sectts}, in which we find a simple necessary and sufficient conditions for exact manipulation of quantum coherence. We then apply this result in Sec.~\ref{sec4} for interconversions among athermality states in the single-shot regime. In Sec.~\ref{sec5} we develop the resource theory of quantum athermality in the asymptotic regime, and prove that it is reversible if we allow for sublinear amount of quantum athermality. Finally, in Sec.~\ref{sec6} we end with a discussion and outlook.

\section{Notations and Preliminaries}\label{preliminaries}

In this section we introduce our notations and several results from earlier works. We also present some new results and observations. We denote both quantum systems as well as their corresponding Hilbert spaces by the letters $A$, $B$ ,$A'$, $B'$, and $R$. We will only consider finite dimensional systems and use vertical lines such as $|A|$, $|B|$, to denote the dimension of systems $A$, $B$, respectively. Replicas of a physical system will be denoted with the tilde symbol above them. For example, $\tA$ and $\tB$ are replicas of $A$ and $B$, respectively, and in particular $|\tA|=|A|$ and $|B|=|\tB|$. The set of positive semidefinite matrices acting on system $A$ will be denoted by $\pos(A)$, and quantum states (also called density matrices) in $\pos(A)$ will be denoted by $\md(A)$. The set of pure states in $\md(A)$ will be denoted by $\pure(A)$. The set of all completely positive trace preserving (CPTP) maps, i.e.\ quantum channels, from system $A$ to $B$ are denoted by $\cptp(A\to B)$.

We will use superscripts to indicate actions on subsystems of a composite physical system. For example, let $\rho\in\pos(A)$, $\sigma\in\pos(AB)$, and $\mE\in\cptp(B\to B')$. Then, the notation $\rho^A\sigma^{AB}$ is a short version corresponding to $\left(\rho^A\otimes I^B\right)\sigma^{AB}$, and similarly $\mE^{B\to B'}\left(\sigma^{AB}\right)$ is a short notation of $(\id^A\otimes\mE^{B\to B'})(\sigma^{AB})$, where $\id^A$ is the identity channel. With these notations, the Choi matrix of a channel $\mE\in\cptp(A\to B)$ is defined as
\be
J_\mE^{AB}\eqdef\mE^{\tA\to B}\left(\Phi^{A\tA}\right)\;,
\ee
where $\Phi^{A\tA}\eqdef|\Phi^{A\tA}\lr\Phi^{A\tA}|$, and $|\Phi^{A\tA}\ra\eqdef\sum_{x=1}^m|xx\ra^{A\tA}$ (with $m\eqdef|A|$) is the unnormalized maximally entangled state.

In this paper we only consider physical systems whose Hamiltonians are well defined (i.e. no interactions with other systems) . For example, the Hamiltonians of physical systems $A$ and $B$ will be denoted respectively by $H^A$ and $H^B$.
Moreover, the Hamiltonian of system $A$ (and similarly of system $B$, etc) will be expressed as
\be\label{hamil}
H^A=\sum_{x=1}^ma_x\Pi_x^A
\ee
where $\{a_x\}_{x=1}^m$ are distinct energy eigenvalues, and $\{\Pi_x\}_{x=1}^m$ are orthogonal projectors satisfying $\Pi_x^A\Pi_{y}^A=\delta_{xy}\Pi_x^A$ for all $x,y\in[m]\eqdef\{1,...,m\}$.

\subsection{Notations of Types}

Let  $x^n\eqdef(x_1,...,x_n)$ be a sequence with $n$ elements such that $x_i\in[m]$ for all $i=1,...,n$. For any $z\in[m]$ let $N(z|x^n)$ be the number of elements in the sequence $x^n\eqdef(x_1,...,x_n)$ that are equal to $z$. The \emph{type} of the sequence $x^n$ is a probability vector in $\mbb{R}_{+}^m$ given by
\be
\t(x^n)\eqdef\big(t_1(x^n),...,t_m(x^n)\big)^T\;,
\ee
where 
\be
t_z(x^n)\eqdef\frac{1}{n}N(z|x^n)\;\;\forall\;z\in[m].
\ee

The significance of types to our work comes into play when we consider an i.i.d$\sim \p$ source. In this case, the probability of a sequence $x^n$ drawn from the source is given by (see e.g.~\cite{CT2006})
\ba\label{tpxn}
p_{x^n}\eqdef p_{x_1}\cdots p_{x_n}=2^{-n\big(H(\t(x^n))+D\left(\t(x^n)\|\p\right)\big)}\;,
\ea
where $H(\t(x^n))$ is the Shannon entropy of the type of the sequence $x^n$, and $D\left(\t(x^n)\|\p\right)$ is the Kullback-Leibler divergence between $\t(x^n)$ and $\p$.

We denote  by $\type(n,m)$ the set of all types of sequences in $[m]^n$, and point out that its number of elements is bounded by~\cite{CT2006}
\be\label{typeb}
|\type(n,m)|\leq(n+1)^m\;.
\ee
The set of all sequences $x^n$ of a given type $\t=(t_1,...,t_m)$ will be denoted as $x^n(\t)$. We emphasize that $x^n(\t)$ denotes a \emph{set} of sequences whose type is $\t$, whereas $\t(x^n)$ denotes a \emph{single} probability vector (i.e. the type of a specific sequence $x^n$). The number of sequences
in the set $x^n(\t)$ is given by the combinatorial formula of arranging  $nt_1,...,nt_m$ objects in a sequence; i.e.\
\be\label{xnts}
\left|x^n(\t)\right|={n\choose nt_1,...,nt_m}\eqdef\frac{n!}{\prod_{x=1}^m(nt_x)!}\;.
\ee
The above formula is somewhat cumbersome, but by using Stirling's approximation it can be bounded by~\cite{CT2006}
\be\label{738}
\frac{1}{(n+1)^m}2^{nH(\t)}\leq \left|x^n(\t)\right|\leq 2^{nH(\t)}\;.
\ee

\subsection{Time-Translation Symmetry}

In this subsection we state a few facts about time-translation symmetry. We say that a quantum state $\rho^A$ is time-translation invariant, or quasi-classical, if for all $t\in\mbb{R}$ we have
\be
e^{-iH^At}\rho^A e^{iH^At}=\rho^A\;.
\ee

\begin{definition}
Let $\mE\in\cptp(A\to B)$.
We say that $\mE^{A\to B}$ is time-translation covariant if for all $t\in\mbb{R}$ and all $\rho\in\md(A)$	
\be\label{ttcc}
	\mE^{A\to B}\left(e^{-iH^At}\rho^A e^{iH^At}\right)=e^{-iH^Bt}\mE^{A\to B}\left(\rho^A\right) e^{iH^Bt}\;.
	\ee
The set of all the channels in $\cptp(A\to B)$ that are time-translation covariant is denoted by $\cov(A\to B)$.
\end{definition}

The set of channels $\cov(A\to A)$ contains a special quantum channel known as the \emph{twirling} channel. Expressing the Hamiltonian of system $A$ as in~\eqref{hamil}, the twirling channel on system $A$ is defined by
\be\label{gtwirl}
\mP^{A\to A}\left(\rho^A\right)\eqdef\sum_{x=1}^m\Pi_x^A\rho^A\Pi_x^A\;.
\ee
This twirling channel, also known as the ``pinching channel" (see e.g.~\cite{T2015}), has the property that a state $\rho\in\md(A)$ is quasi-classical iff $\mP(\rho)=\rho$, and if a quantum channel $\mE\in\cov(A\to A)$ then $\mP\circ\mE=\mE\circ\mP$. Moreover, if the Hamiltonian $H^A$ is non-degenerate then
\be
\mP^{A\to A}=\Delta^{A\to A}\;,
\ee
where $\Delta^{A\to A}$ is the completely dephasing channel defined as
\be
\Delta^{A\to A}\left(\rho^A\right)=\sum_{x=1}^{m}\la x|\rho^A|x\ra\; |x\lr x|^A\quad\quad\forall\;\rho\in\md(A)\;.
\ee

The twirling channel can also be used to quantify time-translation asymmetry. For example, the relative entropy distance of a quantum state $\rho\in\md(A)$ to its twirled state $\mP(\rho)$ is a time-translation asymmetry (sometimes referred to as coherence) measure given by
\be\label{coherence}
C(\rho)\eqdef D\left(\rho\big\|\mP(\rho)\right)=H\big(\mP(\rho)\big)-H(\rho)\;,
\ee
where $D(\rho\|\sigma)\eqdef\tr[\rho\log\rho]-\tr[\rho\log\sigma]$ is the Umegaki relative entropy and $H(\rho)\eqdef-\tr[\rho\log\rho]$ is the von-Neumann entropy. The above function is non-increasing under time-translation covariant operations, and achieves its maximal value of $\log|A|$ on the maximally coherent state $|+\ra\eqdef\frac1{\sqrt{|A|}}\sum_{x=1}^{|A|}|x\ra$.

For $n$ copies of system $A$, we will denote by $\mP_n\in\cov(A^n\to A^n)$ the pinching channel associated with the total Hamiltonian $H^{A^n}$ given by
\ba
H^{A^n}&\eqdef H^A\otimes I^A\otimes\cdots\otimes I^A+I^A\otimes H^A\otimes\cdots\otimes I^A\\
&+\cdots+I^A\otimes\cdots\otimes I^A\otimes H^A\;.
\ea
With these notations we have $\mP=\mP_1$. 
In~\cite{GMS2009} it was shown that $C(\rho^{\otimes n})$ grows logarithmicly with $n$ (see also~\eqref{108} below) and in particular,
\be\label{016}
\lim_{n\to\infty}\frac1nC\left(\rho^{\otimes n}\right)=0\;.
\ee

\subsection{The resource theory of athermality}

In this subsection we review the resource theory of athermality. We put emphasis on some subtleties that are quite often overlooked in the existing literature. In particular, we distinguish between thermal operations and closed thermal operations.
Moreover, we prove some new results. Specifically, to the author's knowledge, all the lemmas and theorems presented here are new.

\subsubsection{Free states and athermality resource-states}

The free states in the resource theory of athermality correspond to physical systems that are in thermal equilibrium with their surrounding. For a heat bath that is held at a fixed inverse temperature $\beta\eqdef\frac1{k_BT}$, the free thermodynamics state, 
\be
\gamma^B\eqdef\frac{e^{-\beta H^B}}{\tr\left[e^{-\beta H^B}\right]}\;,
\ee
is the thermal equilibrium state known as the Gibbs state (here $H^B$ is the Hamiltonian associated with the heat bath). We will always use the greek letter $\gamma$ to indicate a Gibbs state. For example, the notation $\gamma^A$, $\gamma^{A'}$, and $\gamma^B$, correspond to the Gibbs states of systems $A$, $A'$, and $B$, respectively. Moreover, the joint Gibbs state of two non-interacting systems $A$ and $B$ will be denoted simply by $\gamma^{AB}=\gamma^A\otimes\gamma^B$.

In the QRT of athermality, every physical system that can be used as a resource has a well defined Hamiltonian. Therefore, a physical system $A$ cannot be characterized just by a density matrix $\rho\in\md(A)$ since the resourcefulness of the state depends also on the Hamiltonian of the system, $H^A$. For this reason, every thermodynamic state in quantum thermodynamics comprise of a quantum state $\rho\in\md(A)$ acting on the Hilbert space $A$, and a time-independent Hamiltonian $H^A\in\pos(A)$ that governs the dynamics of the quantum system $A$. That is, a \emph{state} of athermality can be characterized by a pair $(\rho^A,H^A)$. This is indeed the characterization used extensively in literature. 

From the resource-theoretic perspective, this characterization of an athermality state has several drawbacks. First, it is not invariant under an energy shift of the form $H^A\mapsto H^A+cI^A$, where $c\in\mbb{R}$ is some constant. Indeed, the choice of setting the minimal energy of a system to be zero is somewhat arbitrary. Second, the resourcefulness of the state $\rho^A$ is determined in relation to its deviation from the Gibbs state $\gamma^A$ of system $A$. 
Therefore, it seems more natural to characterize  athermality states (i.e. the ``objects" of this theory) by pairs of the form $(\rho^A,\gamma^A)$. Note that all the relevant information about the Hamiltonian $H^A$ is contained in the Gibbs state $\gamma^A$ which is invariant under energy shifts.

\subsubsection{Free Operations}

The set of free operations relative to a background heat bath at temperature $T$ comprise of three basic steps:
\begin{enumerate}
\item Thermal equilibrium. Any subsystem $B$, with Hamiltonian $H^{B}\in\pos(B)$, can be prepared in its thermal Gibbs state $\gamma^B$.
\item Conservation of energy.  Unitary operation on a composite physical system that commutes with the total Hamiltonian can be implemented. 
\item Discarding subsystems. It is possible to trace over any subsystem (with a well defined Hamiltonian) of a composite system.
\end{enumerate}

Any CPTP map comprising of the above three steps is called a \emph{thermal operation}. Any thermal operation $\mE\in\cptp(A\to A)$ can be expressed as
\be\label{reato}
\mE^{A\to A}(\rho^A)=\tr_{B}\left[\mU^{AB\to AB}\left(\rho^A\otimes\gamma^B\right)\right]
\ee
where $\mU\in\cptp(AB\to AB)$ is a unitary channel that is Gibbs preserving; i.e. 
\be
\mU^{AB\to AB}\left(\gamma^{AB}\right)=\gamma^{AB}\;,
\ee 
where $\gamma^{AB}=\gamma^A\otimes\gamma^B$. In the lemma below we show that $\cptp(A\to A')$ with $|A|\neq |A'|$ also contains thermal operations.

\begin{myt}{}
\begin{lemma}
Let $AB$, $A'B'$ be two composite physical systems with $|AB|=|A'B'|$, and let $\mU\in\cptp(AB\to A'B')$ be a Gibbs preserving unitary channel; that is,
$$
\mU^{AB\to A'B'}\left(\gamma^{AB}\right)=\gamma^{A'B'}\;.
$$
Then, the map (defined on all $\omega\in\md(A)$)
\be\label{reato}
\mN^{A\to A'}\left(\omega^{A}\right)\eqdef\tr_{B'}\left[\mU^{AB\to A'B'}\left(\omega^A\otimes\gamma^B\right)\right]
\ee
is a thermal operation. 
\end{lemma}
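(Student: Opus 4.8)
The plan is to realize $\mN^{A\to A'}$ through the three physical steps that define a thermal operation; the only nontrivial point is that $\mU$ maps between the \emph{distinct} systems $AB$ and $A'B'$ rather than acting on a single system with a fixed Hamiltonian. I would avoid trying to ``change the dimension'' with a single energy-conserving unitary and instead bring the entire output system $A'B'$ into play as a Gibbs ancilla, using a generalized swap to transport the action of $\mU$ into it.

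First I would extract the structural consequence of the Gibbs-preserving hypothesis. Writing $\gamma^{AB}=e^{-\beta H^{AB}}/Z_{AB}$ and $\gamma^{A'B'}=e^{-\beta H^{A'B'}}/Z_{A'B'}$, the identity $U\gamma^{AB}U^\dagger=\gamma^{A'B'}$ gives, upon taking the matrix logarithm of both (positive-definite) sides,
\be
U H^{AB}U^\dagger=H^{A'B'}+cI\;,\qquad c\eqdef\tfrac1\beta\log\tfrac{Z_{A'B'}}{Z_{AB}}\;,
\ee
where $U$ is any unitary implementing the channel $\mU$. Thus $U$ intertwines the two Hamiltonians up to the constant energy offset $c$; equivalently, it sends each eigenvector of $H^{AB}$ of energy $E$ to an eigenvector of $H^{A'B'}$ of energy $E-c$.

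Next I would construct the required energy-conserving unitary. On the joint system $AB\otimes A'B'$ (recall $|AB|=|A'B'|$) define the generalized swap $V$ by $V\,|\phi\ra^{AB}|\chi\ra^{A'B'}\eqdef|U^\dagger\chi\ra^{AB}|U\phi\ra^{A'B'}$, extended by linearity; it is unitary since it permutes the orthonormal basis $\{|e_i\ra^{AB}\otimes U|e_j\ra^{A'B'}\}$ (with $\{|e_i\ra\}$ an eigenbasis of $H^{AB}$ of energies $E_i$) by $(i,j)\mapsto(j,i)$. The key point is that $V$ is energy conserving with respect to $H^{AB}\otimes I+I\otimes H^{A'B'}$: the basis vector indexed by $(i,j)$ carries energy $E_i+(E_j-c)$, which is symmetric under $i\leftrightarrow j$, so $V$ maps each joint energy eigenspace to itself (equivalently $V$ commutes with $e^{-iH^{AB}t}\otimes e^{-iH^{A'B'}t}$, the offset $c$ entering symmetrically and producing a common phase on both sides). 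This cancellation of $c$ is the crux of the argument and the step I expect to require the most care.

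Finally I would assemble the thermal operation. Prepare the ancilla $B$ in $\gamma^B$ and the whole output system $A'B'$ in its Gibbs state $\gamma^{A'B'}=\gamma^{A'}\otimes\gamma^{B'}$ (step 1), apply the energy-conserving unitary $V$ (step 2), and discard $AB$ together with $B'$ (step 3). Since $V$ acts on a product input slot-wise, $V\big((\omega^A\otimes\gamma^B)\otimes\gamma^{A'B'}\big)V^\dagger=(U^\dagger\gamma^{A'B'}U)^{AB}\otimes\big(U(\omega^A\otimes\gamma^B)U^\dagger\big)^{A'B'}$; tracing out $AB$ leaves $\mU(\omega^A\otimes\gamma^B)$ on $A'B'$, and tracing out $B'$ yields exactly $\tr_{B'}[\mU(\omega^A\otimes\gamma^B)]=\mN^{A\to A'}(\omega^A)$. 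This exhibits $\mN$ as a composition of the three elementary moves, hence a thermal operation. The conceptual obstacle---reconciling a Gibbs-preserving unitary between inequivalent systems with the single-Hamiltonian, energy-conserving unitary demanded by step 2---is resolved precisely by introducing $A'B'$ as a Gibbs ancilla rather than attempting to inject the input into a distinguished ``sector''.
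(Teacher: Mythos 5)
Your proof is correct and takes essentially the same route as the paper: your generalized swap $V=\mathrm{SWAP}\circ\left(U\otimes U^\dagger\right)$ is exactly the unitary channel $\mV^{ABA'B'\to ABA'B'}=\mU^{AB\to A'B'}\otimes\mU^{*A'B'\to AB}$ that the paper applies to $\omega^A\otimes\gamma^{BA'B'}$ before discarding $AB$ and $B'$. The only difference is one of presentation: the paper justifies admissibility of $\mV$ by noting it preserves the joint Gibbs state $\gamma^{ABA'B'}$, whereas you verify the equivalent energy-conservation property explicitly through the intertwining relation $UH^{AB}U^\dagger=H^{A'B'}+cI$ and the cancellation of the offset $c$ under the swap.
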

\end{myt}
\begin{proof}
Consider the joint Gibbs state $\gamma^{ABA'B'}\eqdef \gamma^{AB}\otimes \gamma^{A'B'}$ and let $\mV\in\cptp(ABA'B'\to ABA'B')$ be the unitary channel given by
\be
\mV^{ABA'B'\to ABA'B'}\eqdef \mU^{AB\to A'B'}\otimes \mU^{*A'B'\to AB}\;.
\ee 
Observe that $\mV^{ABA'B'\to ABA'B'}$ preserves the joint Gibbs state $\gamma^{ABA'B'}$.
Hence, the channel
\begin{align*}
&\tr_{ABB'}\left[\mV^{ABA'B'\to ABA'B'}\left(\omega^A\otimes\gamma^{BA'B'}\right)\right]\\
&=\tr_{ABB'}\left[\mU^{AB\to A'B'}\left(\omega^A\otimes\gamma^B\right)\otimes\mU^{*A'B'\to AB}\big(\gamma^{A'B'}\big)\right]\\
&=\tr_{B'}\left[\mU^{AB\to A'B'}\left(\omega^A\otimes\gamma^B\right)\right]\\
&=\mN^{A\to A'}\left(\omega^{A}\right)\;,
\end{align*} 
is a thermal operation. This completes the proof.
\end{proof}

We denote by $\tho(A\to A')$ the set of all thermal operations in $\cptp(A\to A')$.
For fixed systems $A$ and $A'$ the set $\tho(A\to A')$ is in general not closed and not convex. It stems from the fact that the dimensions of systems $B$ and $B'$ as appear in~\eqref{reato} are unbounded. Therefore, it will be convenient to define the closure of $\tho(A\to A')$, denoted by $\cto(A\to A')$, as a set of channels in $\cptp(A\to A')$ with the property that $\mE\in\cto(A\to A')$ if and only if there exists a sequence of thermal operations $\{\mE_k\}_{k\in\mbb{N}}$, where each $\mE_k\in\tho(A\to A')$ and 
\be
\lim_{k\to\infty}\mE_k=\mE\;.
\ee
By definition, the set $\cto(A\to A')$ is closed.  We now prove that it is also convex.

\begin{myt}{}
\begin{theorem}\label{ctoconvex}
The set $\cto(A\to A')$ is convex.
\end{theorem}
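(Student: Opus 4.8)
The plan is to reduce convexity of the closed set $\cto(A\to A')$ to a single statement about \emph{thermal} operations, and then to build the required map explicitly with a control register. Since $\cto(A\to A')$ is closed, it suffices to show that for any two thermal operations $\mF_0,\mF_1\in\tho(A\to A')$ and any $\lambda\in[0,1]$ the mixture $\lambda\mF_0+(1-\lambda)\mF_1$ lies in $\cto(A\to A')$. Indeed, given general $\mE_0,\mE_1\in\cto(A\to A')$ I would choose thermal operations $\mF_0^{(k)}\to\mE_0$ and $\mF_1^{(k)}\to\mE_1$, observe that each $\lambda\mF_0^{(k)}+(1-\lambda)\mF_1^{(k)}\in\cto(A\to A')$, and invoke closedness to place the limit $\lambda\mE_0+(1-\lambda)\mE_1$ in $\cto(A\to A')$.

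To realize the mixture of $\mF_0$ and $\mF_1$, I would write each as $\mF_i(\omega)=\tr_{B_i'}\big[\mathfrak{U}_i(\omega^A\otimes\gamma^{B_i})\mathfrak{U}_i^\dagger\big]$ with Gibbs preserving unitaries $\mathfrak{U}_i\colon AB_i\to A'B_i'$, and introduce a control qubit $R$ whose Hamiltonian $H^R$ is tuned so that its Gibbs state is $\gamma^R=\lambda\op{0}{0}^R+(1-\lambda)\op{1}{1}^R$; for $\lambda\in(0,1)$ this only fixes the gap $E_1-E_0=\beta^{-1}\log\frac{\lambda}{1-\lambda}$, and the endpoints are trivial. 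The first attempt is the controlled unitary $\op{0}{0}^R\otimes\mathfrak{U}_0+\op{1}{1}^R\otimes\mathfrak{U}_1$ acting on $R$ together with the joint bath $B_0B_1$ (each $\mathfrak{U}_i$ acting trivially on the other bath), after which one traces out $R$ and the baths. Because $\gamma^R$ is diagonal, the block structure reproduces exactly $\lambda\mF_0+(1-\lambda)\mF_1$.

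The hard part is that this first attempt is \emph{not} Gibbs preserving, which is precisely the manifestation of the non-convexity of $\tho(A\to A')$ flagged above: the two branches leave the bath in the \emph{different} states $\gamma^{B_0'}\otimes\gamma^{B_1}$ and $\gamma^{B_0}\otimes\gamma^{B_1'}$, whose Hamiltonians need not even be isospectral, so the global output is not of the product form $\gamma^{A'}\otimes\gamma^{R}\otimes\gamma^{\rm bath}$ required by the Lemma. I would remove this obstruction by symmetrizing the two branches with a \emph{dummy} replica $\tA$ of the input, prepared in $\gamma^{\tA}$. Using two controlled swaps — one routing $(A,\tA)\mapsto(\tA,A)$ before, and one exchanging the two $A'$ outputs after — I would apply \emph{both} $\mathfrak{U}_0$ and $\mathfrak{U}_1$ in \emph{every} branch, one acting on the genuine input and one on the dummy. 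Since feeding a Gibbs state through a Gibbs preserving unitary returns a Gibbs state, both baths are now driven to $\gamma^{B_0'}\otimes\gamma^{B_1'}$ \emph{independently} of the control value, while the only branch dependence — which of the two $A'$ slots carries the real output — is undone by the final controlled swap. The resulting global unitary $\mathfrak{W}$ on $A\tA R B_0B_1$ is a composition of Gibbs preserving maps (the swaps are energy conserving because they exchange isospectral systems), hence Gibbs preserving; by the Lemma the induced channel $\omega\mapsto\tr[\mathfrak{W}(\omega^A\otimes\gamma^{\tA R B_0B_1})\mathfrak{W}^\dagger]$, with the designated $A'$ slot retained, is a genuine thermal operation, and a direct check shows it equals $\lambda\mF_0+(1-\lambda)\mF_1$. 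This places the mixture in $\cto(A\to A')$ and, by the reduction above, proves convexity; and if one prefers to avoid the exact tuning of $H^R$, the same construction with control weights converging to $(\lambda,1-\lambda)$ exhibits the mixture as a limit of thermal operations, which again lies in the closed set $\cto(A\to A')$.
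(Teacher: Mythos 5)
Your proof is correct, and it takes a genuinely different route from the paper's. The paper first shows that \emph{rational} mixtures $\sum_x\frac{m_x}{m}\mN_x$ of thermal operations are exactly thermal operations, using a direct-sum bath $B=\bigoplus_{y=1}^m B_{k_y}$ whose Gibbs state places uniform weight $\frac1m$ on the blocks together with a block-diagonal Gibbs-preserving unitary; convexity of $\cto(A\to A')$ then follows by approximating \emph{both} the coefficients (by rationals) and the channels (by thermal operations) and invoking closedness. You instead realize an \emph{arbitrary} binary mixture of two thermal operations exactly: the weight $\lambda$ is encoded in the populations of a thermal control qubit via the tuned gap $E_1-E_0=\beta^{-1}\log\frac{\lambda}{1-\lambda}$, and the genuine obstruction in the paper's $A\to A'$ formulation---that the two branches of a naive controlled unitary land in output composites carrying different Hamiltonians, so the global unitary cannot be Gibbs preserving---is removed by your dummy-replica and controlled-swap symmetrization, which makes both $\mathfrak{U}_0$ and $\mathfrak{U}_1$ act in every branch. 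I checked the bookkeeping: each of the three factors of $\mathfrak{W}$ is Gibbs preserving (the controlled swaps exchange identical replicas and commute with the total Hamiltonian, including $H^R$, since the control basis is the eigenbasis of $H^R$), the diagonality of $\gamma^R$ kills all cross terms, and the induced channel is indeed $\lambda\mF_0+(1-\lambda)\mF_1$; the paper's Lemma on Gibbs-preserving unitary dilations with differing input and output composites then certifies it as a thermal operation. Your route buys a strictly stronger intermediate statement: since a qubit with an arbitrary (suitably shifted) energy gap is an admissible bath under the paper's axioms, you show that $\tho(A\to A')$ is itself convex, so only one limiting step is needed, for the CTO elements; notably, this sharpens---in fact contradicts---the paper's unproven side remark that $\tho(A\to A')$ is ``in general \ldots not convex'' (only the non-closedness part of that remark survives your construction). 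The paper's route buys algebraic economy: no ancillary gadgets, no Hamiltonian tuning, and $\ell$-term rational mixtures handled in one stroke. Both arguments ultimately rest on the same two pillars, namely the paper's Lemma realizing Gibbs-preserving unitary dilations as thermal operations, and the closedness of $\cto(A\to A')$ by definition.
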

\end{myt}
\begin{proof}
We start by showing that $\tho(A\to A')$ is closed under convex combination with rational coefficients. Specifically, let 
\be
\mN^{A\to A'}\eqdef\sum_{x=1}^\ell \frac{m_x}m\mN_x^{A\to A'}
\ee
where each $m_x\in\mbb{N}$, $m\eqdef\sum_{x=1}^\ell m_x$, and each $\mN_x\in\tho(A\to A')$. Since each $\mN_x$ is a thermal operation it can be expressed as
\be
\mN_x^{A\to A'}\left(\omega^A\right)\eqdef\tr_{B_x'}\left[\mU_x^{AB_x\to A'B_x'}\left(\omega^{A}\otimes\gamma^{B_x}\right)\right]\;,
\ee
for some systems $B_x,B_x'$ and some unitary channel $\mU_x\in\cptp(AB_x\to A'B_x')$. For each $y\in[m]$,
let $k_y$ be the integer in $[\ell]$ 
satisfying
\be
\sum_{x=1}^{k_y-1}m_x\leq y<\sum_{x=1}^{k_y}m_{x}\;,
\ee
and define 
\be
B\eqdef\bigoplus_{y=1}^mB_{k_y}\;,\;\;B'\eqdef\bigoplus_{y=1}^mB_{k_y}'\quad\text{and}\quad\gamma^B\eqdef\frac1m\bigoplus_{y=1}^m\gamma^{B_{k_y}}\;.
\ee
Finally, for any $\eta^{AB}=\bigoplus_{y=1}^m\eta^{AB_{k_y}}\in\md(AB)$ we define the action of the unitary channel $\mU\in\cptp(AB\to A'B')$ as
\be
\mU^{AB\to A'B'}(\eta^{AB})=\bigoplus_{y=1}^m\mU_{k_y}^{AB_{k_y}\to A'B_{k_y}'}(\eta^{AB_{k_y}})\;.
\ee
With these definitions we get
\ba
&\tr_{B'}\left[\mU^{AB\to A'B'}\left(\omega^{A}\otimes\gamma^{B}\right)\right]\\
&=\frac1m\sum_{y=1}^m\tr_{B'_{k_y}}\left[\mU^{AB_{k_y}\to A'B'_{k_y}}\left(\omega^{A}\otimes\gamma^{B_{k_y}}\right)\right]\\
&=\frac1m\sum_{x=1}^\ell m_x\tr_{B_x'}\left[\mU_x^{AB_x\to A'B_x'}\left(\omega^{A}\otimes\gamma^{B_x}\right)\right]
\ea
where in the last line we used the fact that for any $x\in[\ell]$ there exist $m_x$ values of $y\in[m]$ for which $k_y=x$. Finally, observe that the RHS of the equation above is precisely $\mN^{A\to A'}(\omega^A)$. Therefore, $\mN^{A\to A'}$ is a thermal operation.
This completes the proof that any rational convex combination of thermal operations is a thermal operation. 

To prove the convexity of $\cto(A\to A')$ let $\{\mM_x\}_{x=1}^k$ be $k$ channels in $\cto(A\to A')$ and let 
\be
\mM^{A\to A'}\eqdef\sum_{x=1}^k p_x\mM_x^{A\to A'}
\ee
be a convex combination of the $k$ channels $\{\mM_x\}$. For each $n\in\mbb{N}$ let $\mM_x^{(n)}\in\tho(A\to A')$ be such that 
$\lim_{n\to\infty}\mM_x^{(n)}=\mM_x$, and let $\{p_x^{(n)}\}_{x=1}^k$ be a rational probability distribution with the property that $\lim_{n\to\infty}p_{x}^{(n)}=p_x$. Now, from the previous argument we have that for all $n\in\mbb{N}$ the rational convex combination
\be
\sum_{x=1}^k p_x^{(n)}\mM_x^{(n)}
\ee
is in $\tho(A\to A')$. Therefore, by definition, the limit
\be
\lim_{n\to\infty}\sum_{x=1}^k p_x^{(n)}\mM_x^{(n)}=\mM
\ee
is in $\cto(A\to A')$. This completes the proof.
\end{proof}

Every thermal operation $\mE\in\cptp(A\to A')$ has two key properties: 
\ben
\item  $\mE^{A\to A'}$ is \emph{Gibbs preserving operation} ($\gp$); that is, $\mE(\gamma^A)=\gamma^{A'}$.
\item $\mE^{A\to A'}$ is time-translation covariant; i.e. $\mE\in\cov(A\to A')$.
\een
The set of all Gibbs preserving operations in $\cptp(A\to A')$ will be denoted by $\gp(A\to A')$, and those that are Gibbs preserving covariant (GPC) quantum channels (i.e.\ channels that satisfy the above two properties) will be denoted by $\gpc(A\to A')$. In what follows, we will also use the notations
\be
(\rho^A,\gamma^A)\xrightarrow{\;\;\mf\;\;}(\sigma^B,\gamma^B)\;,
\ee
to indicate that $(\rho^A,\gamma^A)$ can be converted to $(\sigma^B,\gamma^B)$ by the free operations $\mf$. In our context, $\mf$ can stand for thermal operations, closed thermal operations (CTO), GPC, and GPO. Since GPC form a closed set of operations we have for any two systems $A$ and $A'$
\ba
\tho(A\to A')\subset\cto(A\to A')&\subset\gpc(A\to A')\\
&\subset\gp(A\to A')\;.
\ea

We now show that the pinching channel is a thermal operation.
\begin{lemma}\label{lem2}
Consider the pinching channel $\mP\in\cptp(A\to A)$ associated with the Hamiltonian of system $A$. Then, $\mP\in\tho(A\to A)$.
\end{lemma}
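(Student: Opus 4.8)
The plan is to realize $\mP$ as a single energy-conserving unitary acting on $A$ together with an ancilla prepared in its Gibbs state, after first rewriting $\mP$ as a finite, uniformly weighted average of the time-translation unitaries $e^{-iH^At}$. The point of departure is the elementary observation that each $e^{-iH^At}$ commutes with $H^A$, so conjugation by it is already energy conserving; the only real difficulty is that $\mP$ is an \emph{average} of such conjugations, and I must produce this average by a single thermal operation with a \emph{finite}-dimensional ancilla.

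First I would establish the key algebraic fact: there is a finite set of times $t_1,\dots,t_K$ (with $K=2^N$) such that
\be
\mP(\rho)=\frac1K\sum_{k=1}^K e^{-iH^At_k}\,\rho\,e^{iH^At_k}\qquad\forall\,\rho\in\md(A)\;.
\ee
Writing $\rho$ in blocks relative to the spectral projectors, the $(x,y)$ block of $e^{-iH^At}\rho\,e^{iH^At}$ is $e^{-i(a_x-a_y)t}\,\Pi_x^A\rho\,\Pi_y^A$, so averaging against any probability distribution $\{p_k,t_k\}$ multiplies this block by $\sum_k p_k e^{-i(a_x-a_y)t_k}$. Hence it suffices to find a finitely supported probability measure $\mu$ on $\mbb{R}$ whose Fourier transform vanishes at every nonzero energy gap $a_x-a_y$ and equals $1$ at $0$, since then only the diagonal blocks $\Pi_x^A\rho\,\Pi_x^A$ survive and their sum is exactly $\mP(\rho)$. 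Because there are only finitely many distinct positive gaps $\omega_1,\dots,\omega_N$, such a $\mu$ is obtained as the convolution $\mu=\nu_1*\cdots*\nu_N$ of the two-point measures $\nu_j\eqdef\frac12\big(\delta_0+\delta_{\pi/\omega_j}\big)$: convolution multiplies Fourier transforms, the $j$-th factor evaluates to $\frac12(1+e^{-i\pi})=0$ at $\omega=\pm\omega_j$, and the total mass is $1$. This $\mu$ is uniform over the $K=2^N$ outcomes indexed by subsets $S\subseteq[N]$, with $t_S\eqdef\sum_{j\in S}\pi/\omega_j$ (coincidences among the $t_S$ being harmless). This step — handling possibly incommensurate gaps \emph{exactly} rather than only in a limit — is the main obstacle, and the convolution trick is precisely what resolves it without passing to a continuous clock.

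With the finite average in hand, I would realize it as a thermal operation by a controlled unitary. Introduce a control register $B$ of dimension $K$ carrying the trivial Hamiltonian $H^B\propto I^B$, so that its Gibbs state is the maximally mixed state $\gamma^B=\frac1K I^B=\frac1K\sum_{k}|k\lr k|^B$. Define the unitary channel $\mU^{AB\to AB}$ generated by $U=\sum_{k=1}^K e^{-iH^At_k}\otimes|k\lr k|^B$. Since each $e^{-iH^At_k}$ commutes with $H^A$ and $H^B$ is proportional to the identity, $U$ commutes with the total Hamiltonian $H^{AB}=H^A\otimes I^B+I^A\otimes H^B$, so $\mU$ is a Gibbs-preserving energy-conserving unitary. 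A direct computation then gives
\be
\tr_B\big[\mU^{AB\to AB}\big(\rho^A\otimes\gamma^B\big)\big]=\frac1K\sum_{k=1}^K e^{-iH^At_k}\rho\,e^{iH^At_k}=\mP(\rho)\;,
\ee
which is precisely the form of a thermal operation in~\eqref{reato}. This exhibits $\mP\in\tho(A\to A)$ (the $A'$ in the statement being a typo for $A$) and completes the proof.
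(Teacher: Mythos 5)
Your proof is correct, but it takes a different route from the paper's. The paper writes the pinching channel as the uniform mixture $\mP(\rho)=\frac1m\sum_{x=1}^mU_x\rho U_x^*$ of the $m$ discrete-Fourier phase unitaries $U_x=\sum_{x'}e^{2\pi i xx'/m}\Pi_{x'}^A$, which commute with $H^A$ irrespective of the arithmetic structure of the spectrum, and then invokes the fact (established inside the proof of Theorem~\ref{ctoconvex}) that any rational convex combination of thermal operations is a thermal operation. You instead decompose $\mP$ as a uniform average of \emph{genuine time-evolution} unitaries $e^{-iH^At_k}$, which forces you to solve the incommensurate-gap problem; your convolution trick (two-point measures $\frac12(\delta_0+\delta_{\pi/\omega_j})$, one per distinct positive gap, whose product Fourier transform vanishes at every nonzero gap) resolves this exactly and is correct, at the price of needing $2^N$ times rather than $m$ unitaries. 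Your dilation step — a controlled unitary $\sum_k e^{-iH^At_k}\otimes|k\lr k|^B$ with a trivial-Hamiltonian control register in its maximally mixed Gibbs state — is energy conserving and reproduces the form~\eqref{reato}, so it is a valid thermal operation; in fact this is essentially the same mechanism that is hidden inside the paper's rational-convexity argument, just instantiated directly. What the paper's choice buys is economy and spectrum-independence (no Fourier analysis, only $m$ terms); what yours buys is a physically transparent reading of pinching as dephasing by randomized time translation, achieved exactly with finitely many times rather than by a limiting time average.
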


\begin{proof}
Expressing the Hamiltonian of system $A$ as in~\eqref{hamil}, the pinching channel $\mP\in\cptp(A\to A)$ can be written as a mixture of unitaries of the form (see for example~\cite{T2015}) 
\be\label{27}
\mP(\rho)=\frac1m\sum_{x=1}^mU_x\rho U_x^*\quad\quad\forall\;\rho\in\md(A)\;,
\ee
where
\be
U_x^A\eqdef\sum_{x'\in[m]}e^{\frac{2\pi ixx'}{m}}\Pi_{x'}^A\;.
\ee
Clearly, each of the $m$ unitaries $\{U_x^A\}$ commutes with the Hamiltonian $H^A$. Therefore, each unitary channel $\mU_x\in\cptp(A\to A)$, defined via $\mU_x(\cdot)\eqdef U_x(\cdot)U_x^*$, is a thermal operation. In the proof of Theorem~\ref{ctoconvex} we showed that any rational convex combination of thermal operations is itself a thermal operation. Therefore, the mixture of unitaries in~\eqref{27} is a thermal operation. This completes the proof.
\end{proof}

\subsubsection{Quasi-Classical Athermality}

We say that an athermality state $(\rho^A,\gamma^A)$ is quasi-classical if $\rho^A$ and $\gamma^A$ commute; that is, $\rho$ is diagonal in the energy eigenbasis of system $A$. In this case, we will denote the athermality state  $(\rho^A,\gamma^A)$ as $(\p^A,\g^A)$, where $\p^A$ and $\g^A$ are probability vectors consisting of the diagonals of $\rho^A$ and $\gamma^A$, respectively. In this quasi-classical regime, for two athermality states $(\p^A,\g^A)$ and $(\p^B,\g^B)$ we have (see Theorem~5 in~\cite{JWZ+2000})
\be\label{19}
(\p^A,\g^A)\xrightarrow{\cto}(\q^B,\g^B)\;\;\iff\;\;(\p^A,\g^A)\succ(\q^B,\g^B)\;,
\ee
where $\succ$ denotes \emph{relative majorization}. Relative majorization is a pre-order defined between two pairs of probability vectors. Specifically, we say that $(\p^A,\g^A)$ relatively majorizes $(\q^B,\g^B)$ (and write it as in the equation above) if there exists a column stochastic matrix $E$ such that $\q^B=E\p^A$ and $\g^B=E\g^A$. Relative majorization has several characterizations including a geometrical one given by Lorenz curves and testing regions (see e.g.~\cite{R2016}). 

If the Hamiltonian of system $A$ is fully degenerate (we will say in this case that the Hamiltonian is trivial) then $H^A=cI^A$ for some constant $c\geq 0$ and the corresponding Gibbs state, 
\be
\g^A=\u^{(m)}\eqdef\frac1m\bpm 1\\\vdots\\ 1\epm\;,
\ee 
is the $m$-dimensional uniform probability vector. 
We say that two athermality states,
$(\p^A,\g^A)$ and $(\p^B,\g^B)$, are equivivalent, and write
\be
 (\p^A,\g^A)\sim(\p^B,\g^B)
\ee
 if both $(\p^A,\g^A)\succ(\p^B,\g^B)$ and $(\p^B,\g^B)\succ(\p^A,\g^A)$. One of the remarkable properties of quasi-classical thermodynamics is that a dense set of athermality states are equivalent to states with a trivial (i.e.\ zero) Hamiltonian~\cite{GT2021}. Specifically, let $\g=(g_1,...,g_m)^T$ be the Gibbs state of system $A$ and suppose that its components $\{g_x\}$ are rational. Then, there exists $k_1,...,k_m\in\mbb{N}$ such that for each $x\in[m]$ we have $g_x=\frac{k_x}{k}$, where $k\eqdef\sum_{x=1}^mk_x$ is the common denominator. With such a Gibbs state, for any probability vector $\p=(p_1,...,p_m)^T$ we have that~\cite{GT2021}
 \be
 (\p,\g)\sim(\r,\u^{(k)})\quad\text{where}\quad\r\eqdef\bigoplus_{x=1}^mp_x\u^{(k_x)}\;.
 \ee
 The above equivalence indicates that athermality of the the quasi-classical system $A$ can be fully characterized by the \emph{non-uniformity} of the vector $\r$, with $\r=\frac1k(1,...,1)^T$ being the least resourceful and $\r=(1,0,...,0)^T$ being the most resourceful. Therefore, in the quasi-classical regime the resource theory of athermality is essentially equivalent to the resource theory of non-uniformity, also known as the resource theory of informational non-equilibrium~\cite{GMN+2015}.

\subsubsection{The Golden Unit of Athermality}

A ``golden unit" of a resource theory is a constituent of a resource that can be used to measure the resource very much like ebits are used to measure entanglement.
Due to the equivalence between athermality and non-uniformity in the quasi-classical regime, we can use units of non-uniformity to measure the athermality of a given state. 
Specifically, we can take the golden unit to have the form
$(|0\lr 0|^A,\u^A)$. This golden unit is equivalent to~\cite{WW2019}
\be
(|0\lr 0|^A,\u^A)\sim\left(|0\lr 0|^X,\u^X_m\right)\;,
\ee
where $X$ is a two-dimensional classical system, $m\eqdef|A|$, and
\be\label{334}
\u_m^X\eqdef \frac1m|0\lr 0|^X+\frac{m-1}{m}|1\lr 1|^X\;.
\ee
Therefore, we can always consider the golden unit to be a qubit. Moreover, note that $\u_m^X$ is well defined even if $m$ is not an integer. 
This can help simplifying certain expressions, and we will therefore consider also the states $\left(|0\lr 0|^X,\u^X_m\right)$ with $m\in\mbb{R}_+$.

\subsubsection{Cost and Distillation}

We will denote by $\mf$ the free operations of the resource theory of athermality. We will consider three cases in which $\mf=\cto$, $\mf=\gp$, and $\mf=\gpc$. In either of these cases, we define the conversion distance as
\be\label{32}
d_\mf\big((\rho^A,\gamma^A)\to(\sigma^B,\gamma^B)\big)\eqdef \min_{\mE\in\mf(A\to B)}\frac12\left\|\sigma^B-\mE\left(\rho^A\right)\right\|_1\;.
\ee
The conversion distance measures the closest distance (in trace norm) that $\rho^A$ can reach $\sigma^B$  by using only free operations. 
For any $\eps>0$ and $\rho,\gamma\in\md(A)$, this conversion distance can be used to define the $\eps$-single-shot distillable athermality as
\begin{align}\label{33}
&\distill_{\mf}^{\eps}\left(\rho,\gamma\right)
\eqdef\log\sup_{0<m\in\mbb{R}}\nonumber\\
&\Big\{m\;:\;d_\mf\Big(\left(\rho^A,\gamma^A\right)\to\left(|0\lr 0|^X,\u_m^X\right)\Big)\leq\eps\Big\}\;.
\end{align}

The asymptotic distillation of an athermality state $(\rho,\gamma)$ is defined as
\ba
&\distill_{\mf}\left(\rho,\gamma\right)
\eqdef\lim_{\eps\to 0^+}\sup_{\ell,n\in\mbb{N}}\\
&\left\{\frac \ell n\;:\;d_\mf\Big(\left(\rho^{\otimes n},\gamma^{\otimes n}\right)\to\left(|0\lr 0|^{\otimes \ell},\u_2^{\otimes \ell}\right)\Big)\leq\eps\right\}\;,
\ea
where $\u_2$ is the 2-dimensional maximally mixed state. The single-shot and asymptotic distillation rates are related by
\be
\distill_\mf(\rho,\gamma)=\lim_{\eps\to 0^+}\limsup_{n\to\infty}\frac1n\distill^\eps_\mf\left(\rho^{\otimes n},\gamma^{\otimes n}\right)\;.
\ee
We point out that $\distill_\mf(\rho,\gamma)$ has the property that for any $k\in\mbb{N}$
\ba\label{39}
&\frac1k\distill_\mf\left(\rho^{\otimes k},\gamma^{\otimes k}\right)\\
&=\lim_{\eps\to 0^+}\limsup_{n\to\infty}\frac1{nk}\distill^\eps_\mf\left(\rho^{\otimes nk},\gamma^{\otimes nk}\right)\\
&\leq \lim_{\eps\to 0^+}\limsup_{n'\to\infty}\frac1{n'}\distill^\eps_\mf\left(\rho^{\otimes n'},\gamma^{\otimes n'}\right)\\
&=\distill_\mf(\rho,\gamma)\;.
\ea

Similarly, the conversion distance can be used to define the $\eps$-single-shot athermality cost as
\begin{align}
&\cost_{\mf}^{\eps}\left(\rho,\gamma\right)
\eqdef\log\inf_{0<m\in\mbb{R}}\nonumber\\
&\Big\{m\;:\;d_\mf\Big(\left(|0\lr 0|^X,\u_m^X\right)\to \left(\rho^A,\gamma^A\right)\Big)\leq\eps\Big\}\;.
\end{align}
The asymptotic athermality cost of the state $(\rho,\gamma)$ is defined as
\ba
&\cost_{\mf}\left(\rho,\gamma\right)
\eqdef\lim_{\eps\to 0^+}\inf_{m,n\in\mbb{N}}\\
&\left\{\frac mn\;:\;d_\mf\Big(\left(|0\lr 0|^{\otimes m},\u_2^{\otimes m}\right)\to\left(\rho^{\otimes n},\gamma^{\otimes n}\right)\Big)\leq\eps\right\}\;.
\ea
The single-shot and asymptotic athermality costs are related by
\be
\cost_\mf(\rho,\gamma)=\lim_{\eps\to 0^+}\liminf_{n\to\infty}\frac1n\cost^\eps_\mf\left(\rho^{\otimes n},\gamma^{\otimes n}\right)\;.
\ee
For the case that $\mf=\gp$ all the quantities above have
relatively simple closed formulas. In the single-shot regime we have ~\cite{WW2019}
\ba\label{34}
&\distill_{\gp}^{\eps}\left(\rho,\gamma\right)=D_{\min}^\eps(\rho\|\gamma)\\
&\cost_\gp^\eps(\rho,\gamma)=D_{\max}^\eps(\rho\|\gamma)\;,
\ea
where $D_{\min}^\eps$ is the Hypothesis testing divergence defined as
\ba
D_{\min}^\eps(\rho\|\gamma)\eqdef\min_{0\leq\Lambda\leq I^A}\Big\{\tr[\gamma\Lambda]\;:\;\tr[\Lambda\rho]\geq 1-\eps\Big\}
\ea
and $D_{\max}^\eps$ is the smoothed max relative entropy defined as
\be
D_{\max}^\eps(\rho\|\gamma)\eqdef\min\Big\{D_{\max}(\rho'\|\gamma)\;:\;\frac12\|\rho-\rho'\|_1\leq\eps\Big\}
\ee
and $D_{\max}(\rho\|\gamma)\eqdef\log\min\{t\geq 0\;:\;t\gamma\geq\rho\}$. In the asymptotic regime, under GPO, the resource theory of athermality is reversible as reflected by the equality
\be
\distill_{\mf}\left(\rho,\gamma\right)=\cost_{\mf}\left(\rho,\gamma\right)=D(\rho\|\gamma)\;,
\ee
where $D(\rho\|\gamma)\eqdef\tr[\rho\log\rho]-\tr[\rho\log\gamma]$ is the Umegaki relative entropy.

The hypothesis testing divergence that appear above in the formula for the single-shot distillable athermality is neither additive nor subadditive under tensor products. Instead it satisfies a weaker type of subadditivity given in the lemma below.
\begin{lemma}
Let $\eps>0$, $\rho,\gamma\in\md(A)$, and $\rho',\gamma'\in\md(A')$. Then,
\be\label{47}
D_{\min}^\eps\left(\rho\otimes\rho'\big\|\gamma\otimes\gamma'\right)\leq D_{\min}^\eps(\rho\|\gamma)+D_{\max}\left(\rho'\|\gamma'\right)
\ee
\end{lemma}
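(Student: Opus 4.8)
The plan is to reformulate~\eqref{47} as a single operator inequality combined with a reduction of a two-system hypothesis test to a one-system test. Write $\beta_\eps(\rho\|\gamma)\eqdef\min_{0\leq\Lambda\leq I^A}\{\tr[\gamma\Lambda]\,:\,\tr[\Lambda\rho]\geq 1-\eps\}$ for the optimal hypothesis-testing error, so that $D_{\min}^\eps(\rho\|\gamma)=-\log\beta_\eps(\rho\|\gamma)$, and set $\lambda\eqdef 2^{D_{\max}(\rho'\|\gamma')}=\min\{t\geq 0\,:\,t\gamma'\geq\rho'\}$, so that the operator inequality $\rho'\leq\lambda\gamma'$ holds. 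Applying $-\log$ to both sides, the claimed bound~\eqref{47} is equivalent to the lower bound
\[
\beta_\eps\big(\rho\otimes\rho'\,\big\|\,\gamma\otimes\gamma'\big)\geq\lambda^{-1}\beta_\eps(\rho\|\gamma)\;.
\]
Because the left-hand side is a minimum over tests, I would establish it by showing that \emph{every} feasible test for the product problem has objective value at least $\lambda^{-1}\beta_\eps(\rho\|\gamma)$.

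To this end, let $\Lambda$ be any operator on $AA'$ with $0\leq\Lambda\leq I^{AA'}$ and $\tr\left[(\rho\otimes\rho')\Lambda\right]\geq 1-\eps$; the goal is to bound $\tr\left[(\gamma\otimes\gamma')\Lambda\right]$ from below. The key step is to push $\Lambda$ down to a test on $A$ alone by averaging against $\rho'$, defining
\[
\tilde\Lambda^A\eqdef\tr_{A'}\left[\left(I^A\otimes\sqrt{\rho'}\right)\Lambda\left(I^A\otimes\sqrt{\rho'}\right)\right]\;,
\]
which is manifestly positive semidefinite. Sandwiching the inequality $\Lambda\leq I^{AA'}$ gives $\tilde\Lambda^A\leq\tr_{A'}\left[I^A\otimes\rho'\right]=I^A$, where the normalization $\tr[\rho']=1$ is used, so that $0\leq\tilde\Lambda^A\leq I^A$; and by construction $\tr\left[\rho\tilde\Lambda^A\right]=\tr\left[(\rho\otimes\rho')\Lambda\right]\geq 1-\eps$. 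Hence $\tilde\Lambda^A$ is a feasible test for the single-system problem, which yields $\tr\left[\gamma\tilde\Lambda^A\right]\geq\beta_\eps(\rho\|\gamma)$.

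It then remains to invoke the max-relative-entropy inequality once. Since $\tr\left[\gamma\tilde\Lambda^A\right]=\tr\left[(\gamma\otimes\rho')\Lambda\right]$, and $\rho'\leq\lambda\gamma'$ forces $\gamma\otimes\rho'\leq\lambda(\gamma\otimes\gamma')$ (using $\gamma\geq 0$ together with $\Lambda\geq 0$), we obtain $\beta_\eps(\rho\|\gamma)\leq\tr\left[(\gamma\otimes\rho')\Lambda\right]\leq\lambda\,\tr\left[(\gamma\otimes\gamma')\Lambda\right]$, i.e.\ $\tr\left[(\gamma\otimes\gamma')\Lambda\right]\geq\lambda^{-1}\beta_\eps(\rho\|\gamma)$. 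As $\Lambda$ ranges over all feasible product tests, minimizing the left-hand side gives the desired lower bound on $\beta_\eps(\rho\otimes\rho'\|\gamma\otimes\gamma')$, which is equivalent to~\eqref{47}. I expect the main obstacle to be the reduction step: one must verify that the marginal operator $\tilde\Lambda^A$ genuinely satisfies $0\leq\tilde\Lambda^A\leq I^A$ — in particular the upper bound, which hinges both on writing the marginal in the symmetric sandwiched form (to keep positivity manifest) and on $\rho'$ being normalized — and one must remember that lower-bounding a minimum requires quantifying over all feasible tests rather than substituting a single constructed one.
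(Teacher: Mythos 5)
Your proof is correct and is essentially the paper's own argument: both rest on the operator inequality $\rho'\leq 2^{D_{\max}(\rho'\|\gamma')}\gamma'$ together with reducing any feasible test $\Lambda$ for the product problem to a feasible single-system test by taking its $\rho'$-weighted marginal over $A'$, then minimizing over $\Lambda$. The only difference is cosmetic and slightly to your credit: by writing the marginal in the sandwiched form $\tr_{A'}\left[\left(I^A\otimes\sqrt{\rho'}\right)\Lambda\left(I^A\otimes\sqrt{\rho'}\right)\right]$ you make the verification $0\leq\tilde\Lambda^A\leq I^A$ explicit, a point the paper's proof leaves implicit when it asserts that dropping the constraint on $\Gamma$ can only decrease the minimum.
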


\begin{proof}
By definition, 
\be
2^{-D_{\min}^\eps\left(\rho\otimes\rho'\|\gamma\otimes\gamma'\right)}=\min_{\tr\left[\left(\rho\otimes\rho'\right)\Lambda\right]\geq 1-\eps}\tr\left[\left(\gamma\otimes\gamma'\right)\Lambda\right]
\ee
where the minimum is over all effects $\Lambda\in\pos(AA')$ that satisfies $\Lambda\leq I^{AA'}$. The key idea is to use the inequality
\be
\gamma'\geq 2^{-D_{\max}(\rho'\|\gamma')}\rho'\;.
\ee
This inequality follows directly from the definition of $D_{\max}(\rho'\|\gamma')$. Therefore, from the above two equations we get
\begin{align}
&2^{-D_{\min}^\eps\left(\rho\otimes\rho'\|\gamma\otimes\gamma'\right)}\nonumber\\
&\quad\quad\quad\geq2^{-D_{\max}(\rho'\|\gamma')}\min_{\tr\left[\left(\rho\otimes\rho'\right)\Lambda\right]\geq 1-\eps}\tr\left[\left(\gamma\otimes\rho'\right)\Lambda\right]\nonumber\\
&\quad\quad\quad=2^{-D_{\max}(\rho'\|\gamma')}\min_{\tr\left[\rho\Gamma\right]\geq 1-\eps}\tr\left[\gamma\Gamma\right]\;,\label{49}
\end{align}
were the second minimum is over all effects $\Gamma\in\pos(A)$ of the form 
\be\label{050}
\Gamma\eqdef\tr_{A'}\left[\left(I^A\otimes\rho'\right)\Lambda^{AA'}\right]\;.
\ee 
By removing the constraint~\eqref{050} on $\Gamma$ and taking instead the minimum over all operators $0\leq\Gamma\leq I^{A}$ we get that the minimization $\min_{\tr\left[\rho\Gamma\right]\geq 1-\eps}\tr\left[\gamma\Gamma\right]$ equals by definition to $2^{-D_{\min}(\rho\|\gamma)}$. Therefore, since the removal of the constraint ~\eqref{050} can only decrease the second minimization in~\eqref{49} we conclude that
\be
2^{-D_{\min}^\eps\left(\rho\otimes\rho'\|\gamma\otimes\gamma'\right)}\geq 2^{-D_{\max}^\eps\left(\rho'\|\gamma'\right)}2^{-D_{\min}^\eps\left(\rho\|\gamma\right)}\;.
\ee
This completes the proof.
\end{proof}

\subsubsection{The Relative Entropy of Athermality}

The relative entropy of athermality of a state $(\rho^A,\gamma^A)$ is defined in terms of the Umegaki relative entropy  as
$
D(\rho^A\|\gamma^A)
$. This function quantifies the athermality of the state $(\rho^A,\gamma^A)$, and is related to the free energy via  $D(\rho\|\gamma)=\beta\big(F(\rho^A)-F(\gamma^A)\big)$, where $F(\rho^A)$ is the free energy of $\rho^A$ (see~\cite{BHO+2013}). 
The relative entropy distance can also be expressed as
\ba\label{18p26}
D\left(\rho\|\gamma\right)&=-H(\rho)-\tr\left[\rho\log\gamma\right]\\
&=-H(\rho)-\tr\left[\mP(\rho)\log\gamma\right]\\
&=D\left(\mP(\rho)\big\|\gamma\right)+H\big(\mP(\rho)\big)-H(\rho)\\
&=D\left(\mP(\rho)\big\|\gamma\right)+C(\rho)\;.
\ea
That is, the athermality of the state $(\rho,\gamma)$ can be decomposed into two components:
\begin{enumerate}
\item Its nonuniformity that is quantified by $D\left(\mP(\rho)\big\|\gamma\right)$.
\item Its asymmetry (or coherence between energy eigenspaces) that is quantified by the time-translation asymmetry measure $C(\rho)$.
\end{enumerate}
Moreover, since the regularization of the coherence vanishes (see~\eqref{016}) we conclude that
\be\label{18p27}
\lim_{n\to\infty}\frac1nD\left(\mP_n\left(\rho^{\otimes n}\right)\big\|\gamma^{\otimes n}\right)=D\left(\rho\|\gamma\right)\;.
\ee

\section{Time-Translation Symmetry}\label{sectts}

We start by developing the resource theory of time-translation asymmetry. Specifically, we will provide necessary and sufficient conditions for state conversions in this model. As we mentioned in the preliminary section, we are only considering in this paper physical systems whose Hamiltonians are well defined. It turns out that the degeneracy of these Hamiltonians play an important role in the manipulation of asymmetry.

\subsection{Degenerate vs Non-degenerate Hamiltonians}\label{sec:exactasy}

Let $H^{A}$ and $H^{B}$ be the Hamiltonians of two systems $A$ and $B$, of dimensions  $m\eqdef |A|$ and $n\eqdef |B|$. The Hamiltonians can be expressed in their spectral decomposition as
	\be\label{hamilab}
	H^A=\sum_{x=1}^ma_x|x\lr x|^A\quad\text{and}\quad H^B=\sum_{y=1}^nb_x|y\lr y|^B\;,
	\ee
	where $\{a_x\}$ and $\{b_y\}$ are the energy eigenvalues of $H^A$ and $H^B$, respectively.

\begin{definition}
We say that the Hamiltonians $H^A$ and $H^B$, as defined in~\eqref{hamilab}, are \emph{relatively non-degenerate} if for all $x,x'\in[m]$ and $y,y'\in[n]$ we have
\be
a_x-a_{x'}=b_y-b_{y'}\quad\Rightarrow\quad x=x'\;\text{ and }\;y=y'\;.
\ee
If the condition above does not hold we say that the Hamiltonians are relatively degenerate.
\end{definition}

Note that if $H^A$ and $H^B$ are relatively non-degenerate, then each of them is also non-degenerate. For example, suppose $H^A$ is degenerate with $a_x=a_{x'}$ for some $x\neq x'\in[m]$. Then, for $y=y'$ we get $a_x-a_{x'}=0=b_y-b_{y'}$ even though $x\neq x'$. Therefore, relative non-degeneracy is a stronger notion than non-degeneracy. In fact, relative non-degeneracy of $H^A$ and $H^B$ is equivalent to the non-degeneracy of the joint Hamiltonian $H^{AB}=H^A\otimes I^B+I^A\otimes H^B$.
Moreover, in the generic case in which $H^A$ and $H^B$ are arbitrary (chosen at random) the Hamiltonians are relatively non-degenerate. 	
For this case, time-translation covariant channels have a very simple characterization.

\begin{myt}{}
\begin{theorem}\label{thm1}
{\rm Let $A$ and $B$ be two physical systems with relatively non-degenerate Hamiltonians. Then, $\mN\in\cptp(A\to B)$ is a time-translation covariant channel if and only if
\be
\mN^{A\to B}=\Delta^{B\to B}\circ\mN^{A\to B}\circ\Delta^{A\to A}\;,
\ee
where $\Delta^{A\to A}$ and $\Delta^{B\to B}$ are the completely dephasing channels of systems $A$ and $B$, respectively. In other words, for physical systems with relatively non-degenerate Hamiltonians only classical channels are time-translation covariant.}
\end{theorem}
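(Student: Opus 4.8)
The plan is to characterize time-translation covariant channels via their action on matrix units in the energy eigenbasis, using the defining covariance relation~\eqref{ttcc} to constrain which off-diagonal coherences can survive. The key observation is that relative non-degeneracy of $H^A$ and $H^B$ means all the energy gaps $a_x - a_{x'}$ and $b_y - b_{y'}$ are distinct except for the trivial coincidences, so the ``frequencies'' appearing on the two sides of~\eqref{ttcc} are forced to match in a very rigid way.

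First I would pass to the infinitesimal (generator) form of covariance. Equation~\eqref{ttcc} holds for all $t\in\mbb{R}$, and differentiating at $t=0$ gives the commutator condition
\be
\mN^{A\to B}\left([H^A,\rho^A]\right)=[H^B,\mN^{A\to B}(\rho^A)]\quad\forall\,\rho\;.
\ee
Applying this to the matrix unit $\rho = |x\lr x'|^A$, whose commutator with $H^A$ gives an eigenvalue $(a_x - a_{x'})$, shows that $\mN^{A\to B}(|x\lr x'|^A)$ must be an eigenoperator of the superoperator $[H^B,\,\cdot\,]$ with the \emph{same} eigenvalue $a_x - a_{x'}$. But $[H^B,\,\cdot\,]$ acts on $|y\lr y'|^B$ with eigenvalue $b_y - b_{y'}$. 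Hence the only matrix units $|y\lr y'|^B$ that can appear in $\mN^{A\to B}(|x\lr x'|^A)$ are those for which $b_y - b_{y'} = a_x - a_{x'}$.

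Now I would invoke relative non-degeneracy. For an off-diagonal input $x\neq x'$: the equation $b_y - b_{y'} = a_x - a_{x'}$ with a nonzero right-hand side forces, by the definition of relative non-degeneracy, that $x=x'$, a contradiction; so $\mN^{A\to B}(|x\lr x'|^A) = 0$ whenever $x\neq x'$. This means $\mN\circ\Delta^{A\to A} = \mN$ is false in general, but rather that $\mN$ \emph{annihilates} all off-diagonal inputs, which is precisely $\mN = \mN\circ\Delta^{A\to A}$. For a diagonal input $x = x'$: the right-hand side is $0$, so the surviving $|y\lr y'|^B$ must satisfy $b_y - b_{y'} = 0$, i.e. $y = y'$ (here I use non-degeneracy of $H^B$, which follows from relative non-degeneracy as noted in the excerpt). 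Thus the output is diagonal, giving $\Delta^{B\to B}\circ\mN = \mN$ on diagonal inputs. Combining the two observations yields $\mN = \Delta^{B\to B}\circ\mN\circ\Delta^{A\to A}$.

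For the converse, I would check directly that any channel of the form $\Delta^{B\to B}\circ\mN\circ\Delta^{A\to A}$ is covariant: since $\Delta^{A\to A}$ commutes with $e^{-iH^A t}(\cdot)e^{iH^A t}$ (the dephasing channel kills the off-diagonal terms that the time evolution rotates, and acts trivially on the diagonal), and likewise for $\Delta^{B\to B}$, the covariance relation~\eqref{ttcc} holds automatically. The main obstacle I anticipate is the rigor of the eigenoperator argument: I must justify passing from the one-parameter family of identities to the generator identity (straightforward by differentiation, since everything is finite-dimensional and analytic in $t$) and then argue cleanly that eigenoperators of $[H^B,\,\cdot\,]$ with distinct eigenvalues are linearly independent, so that matching eigenvalues genuinely forces the vanishing of the forbidden components. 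This eigenspace decomposition of the adjoint action of $H^B$ is the crux, and relative non-degeneracy is exactly the hypothesis that makes the relevant eigenspaces one-dimensional in the needed sense.
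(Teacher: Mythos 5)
Your proof is correct, and it reaches the paper's key constraint by a somewhat different technical route. The paper works in the Choi picture: covariance of $\mN$ is shown to be equivalent to invariance of the Choi matrix $J^{AB}_\mN$ under conjugation by $e^{-iH^At}\otimes e^{iH^Bt}$, and expanding $J^{AB}_\mN$ in matrix units, the requirement $c_{xyx'y'}e^{i(-a_x+a_{x'}+b_y-b_{y'})t}=c_{xyx'y'}$ for all $t$ kills every coefficient with $a_x-a_{x'}\neq b_y-b_{y'}$; relative non-degeneracy then forces $x=x'$ and $y=y'$, i.e.\ a diagonal Choi matrix. You instead differentiate the covariance relation at $t=0$ and match eigenvalues of the adjoint actions $[H^A,\cdot\,]$ and $[H^B,\cdot\,]$ on matrix units. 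The two arguments are equivalent in content --- your matrix element $\la y|\mN(|x\lr x'|)|y'\ra$ is exactly the paper's Choi coefficient $c_{xyx'y'}$, and both hinge on the same Bohr-frequency matching plus relative non-degeneracy --- but yours trades the paper's purely algebraic observation (a phase equals one for all $t$ iff the frequency vanishes) for a differentiation step (harmless in finite dimension, as you note) together with the linear independence of eigenoperators of $[H^B,\cdot\,]$ with distinct eigenvalues. What your version buys is a clean separation of the two halves of the conclusion (annihilation of off-diagonal inputs giving $\mN=\mN\circ\Delta^{A\to A}$, diagonality of outputs on diagonal inputs giving $\Delta^{B\to B}\circ\mN=\mN$) and an explicit verification of the converse, which the paper absorbs into its chain of equivalences. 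One cosmetic flaw: your sentence ``$\mN\circ\Delta^{A\to A}=\mN$ is false in general, but rather \ldots which is precisely $\mN=\mN\circ\Delta^{A\to A}$'' contradicts itself as written; annihilating all off-diagonal matrix units is exactly the statement $\mN=\mN\circ\Delta^{A\to A}$, so the first clause should simply be deleted --- the surrounding mathematics is right.
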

\end{myt}
\begin{proof}
We start by expressing~\eqref{ttcc} in the Choi representation. 	Specifically, by replacing $\rho$ in~\eqref{ttcc} with the unnormalized maximally entangled state $\Phi^{A\tA}$, the RHS becomes
\be\label{58}
e^{-iH^Bt}\mE^{\tA\to B}\left(\Phi^{A\tA}\right)e^{iH^Bt}=e^{-iH^Bt}J^{AB}_\mE e^{iH^Bt}\;,
\ee
and the LHS of~\eqref{ttcc} can be expressed as
\ba\label{59}
\mE^{\tA\to B}\left(e^{-iH^{\tA} t}\Phi^{A\tA}  e^{iH^{\tA} t}\right)&=\mE^{\tA\to B}\left(e^{-iH^A t}\Phi^{A\tA} e^{iH^{A} t}\right)\\
&=e^{-iH^At}J^{AB}_{\mE}e^{i{H}^At}\;,
\ea
where in the first equality we used the fact that $|\Phi^{A\tA}\ra=\frac{1}{\sqrt{|A|}}\sum_{x=1}^{|A|}|x\ra^A|x\ra^{\tA}$ (here $\{|x\ra^A\}$ and $\{|x\ra^{\tA}\}$ are eigenbases of $H_A$ and $H_{\tA}$, repectively) has the property that 
\be
e^{-iH^{\tA} t}\big|\Phi^{A\tA}\big\ra=e^{-iH^{A} t}\big|\Phi^{A\tA}\big\ra\;.
\ee
Hence, by equating~\eqref{58} with~\eqref{59} we get that in the Choi representation the condition on $\mE$ in~\eqref{ttcc} is equivalent to	
\be
	e^{-i{H}^At}\otimes e^{iH^Bt}J^{AB}_{\mE}e^{i{H}^At}\otimes e^{-iH^Bt}=J^{AB}_{\mE}\;.
	\ee	
	Now, substituting into the above equation $J^{AB}_\mE=\sum_{x,x',y,y'}c_{xyx'y'}|x\lr x'|\otimes|y\lr y'|$ (where $c_{xyx'y'}$ are some coefficients) gives
	\be
	c_{xyx'y'}e^{i\left(-a_{x}+a_{x'}+b_{y}-b_{y'}\right)t}=c_{xyx'y'}\quad\forall\;t\in\mbb{R}\;.
	\ee
	Hence, $c_{xyx'y'}=0$ unless 
	\be
	a_x-b_y=a_{x'}-b_{y'}\;.
	\ee
	Combining this with the fact that $H^A$ and $H^B$ are relatively non-degenerate we get that 	$c_{xyx'y'}=0$ unless $x=x'$ and $y=y'$. Hence, the Choi matrix $J_\mE^{AB}$ can be expressed as
	\be
	J^{AB}_\mE=\sum_{x,y}c_{xyxy}|x\lr x|^A\otimes|y\lr y|^B
	\ee
	so that $\mE^{A\to B}$ is a classical channel. This completes the proof. 
\end{proof}	
	
	We consider now the interesting case in which $A=B$. In this case, we have in particular $H^A=H^B$ so we cannot apply the result above to this case.

\begin{definition}	
	Let $H^A$ be the Hamiltonian of a system $A$ with energy eigenvalues $\{a_x\}_{x=1}^m$. We say that $H^A$ has a non-degenerate Bohr spectrum if it has the property  that for any $x,y,x',y'\in[m]$ 
\ba
a_x-a_y=a_{x'}-a_{y'}\quad\iff\quad &x=x'\text{ and }y=y'\\
\text{or}\quad &x=y\text{ and }x'=y'\;;\nonumber
\ea
that is, there are no degeneracies in the nonzero
differences of the energy levels of $H^A$.
\end{definition}

Observe that almost all Hamiltonians have a non-degenerate Bohr spectrum (i.e. the set of all Hamiltonians that do not have a non-degenerate Bohr spectrum is of measure zero). Therefore, the results below that involves Hamiltonian with non-degenerate Bohr spectrum will apply to almost all systems. 
Such time-translation covariant channels with respect to  non-degenerate Bohr spectrums have the following characterization.

\begin{lemma}\label{lem1}
Let $H^A$ be a Hamiltonian with a non-degenerate Bohr spectrum, $\mE\in\cptp(A\to A)$, and $m\eqdef|A|$. Then, $\mE\in\cov(A\to A)$ iff there exists a conditional probability distribution $\{p_{y|x}\}_{x,y\in[m]}$, and an $m\times m$ positive semidefinite matrix $Q$ (with components denoted as $\{q_{xy}\}_{x,y\in[m]}$) whose diagonal components are $q_{xx}=p_{x|x}$ (for all $x\in[m]$) such that the Choi matrix of $\mE^{A\to A}$ is given by 
\ba\label{bohr}
	J^{A\tA}_\mE&=\sum_{x,y\in[m]}p_{y|x}|x\lr x|^A\otimes |y\lr y|^{\tA}\\
	&+\sum_{\substack{x\neq y\\x,y\in[m]}}q_{xy}|x\lr y|^A\otimes |x\lr y|^{\tA}\;.
	\ea
\end{lemma}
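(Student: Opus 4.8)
The plan is to follow the Choi-representation strategy used in the proof of Theorem~\ref{thm1}, but now specialized to the case $A=B$ (so that the input and output Hamiltonians coincide), and then to exploit the non-degenerate Bohr spectrum to read off the allowed structure of the Choi matrix. First I would repeat the computation leading from~\eqref{ttcc} to the Choi-level covariance condition, which here takes the form
\be
\left(e^{-iH^At}\otimes e^{iH^At}\right)J_\mE^{A\tA}\left(e^{iH^At}\otimes e^{-iH^At}\right)=J_\mE^{A\tA}\quad\forall\,t\in\mbb{R}\;,
\ee
exactly as in~\eqref{58}--\eqref{59} but with $H^B$ replaced by $H^A$. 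Expanding $J_\mE^{A\tA}=\sum_{x,x',y,y'}c_{xyx'y'}|x\lr x'|^A\otimes|y\lr y'|^{\tA}$ and comparing phases, this forces $c_{xyx'y'}=0$ unless $a_x-a_{x'}=a_y-a_{y'}$.

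The second step is the new ingredient: invoking the non-degenerate Bohr spectrum to solve $a_x-a_{x'}=a_y-a_{y'}$. By definition this equation holds only when either $x=x'$ and $y=y'$, or $x=y$ and $x'=y'$. The first family of solutions yields the diagonal terms $|x\lr x|^A\otimes|y\lr y|^{\tA}$, whose coefficients I would rename $p_{y|x}\eqdef c_{xyxy}$; the second family yields the coherence terms $|x\lr y|^A\otimes|x\lr y|^{\tA}$ (with $x\neq y$, since the case $x=y$ already appears among the diagonal terms), whose coefficients I would rename $q_{xy}\eqdef c_{xxyy}$. This already produces the claimed form~\eqref{bohr}.

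It then remains to translate the CPTP constraints on $\mE$ into the stated positivity and normalization conditions. Trace preservation is equivalent to $\tr_{\tA}[J_\mE^{A\tA}]=I^A$; since the off-diagonal terms trace to zero, this gives $\sum_y p_{y|x}=1$ for each $x$, so $\{p_{y|x}\}$ is a conditional distribution. Complete positivity is equivalent to $J_\mE^{A\tA}\ge 0$, and here the key observation is that the surviving terms make $J_\mE^{A\tA}$ block diagonal in the product basis: the vectors $\{|xx\ra\}_{x\in[m]}$ span an invariant block on which $J_\mE^{A\tA}$ acts as the matrix $Q$ (with diagonal entries $q_{xx}=p_{x|x}$), while each remaining vector $|xy\ra$ with $x\neq y$ is a one-dimensional block carrying the scalar $p_{y|x}$. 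Hence $J_\mE^{A\tA}\ge 0$ is equivalent to $Q\ge 0$ together with $p_{y|x}\ge 0$ for $x\neq y$, and $Q\ge 0$ forces its diagonal $q_{xx}=p_{x|x}\ge 0$ as well, completing the forward direction. For the converse I would simply check that any $J_\mE^{A\tA}$ of the form~\eqref{bohr} with these properties is positive semidefinite and correctly normalized (hence CPTP), and that each of its terms is invariant under the conjugation above, so that the covariance condition holds and $\mE\in\cov(A\to A)$.

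The main obstacle I anticipate is bookkeeping rather than conceptual: one must match indices correctly when applying the Bohr-nondegeneracy dichotomy—noting in particular that the two solution families overlap precisely on the terms with $x=y=x'=y'$, which must not be double counted—and verify the block-diagonal structure of $J_\mE^{A\tA}$ cleanly, so that the single condition $J_\mE^{A\tA}\ge 0$ decouples exactly into $Q\ge 0$ and the nonnegativity of the remaining $p_{y|x}$.
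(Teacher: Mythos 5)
Your proposal is correct and follows essentially the same route as the paper: pass to the Choi matrix, compare phases under the conjugation $e^{-iH^At}\otimes e^{iH^At}(\cdot)e^{iH^At}\otimes e^{-iH^At}$, invoke the Bohr non-degeneracy dichotomy to isolate the two index families, and then read off trace preservation and positivity. Your explicit block-diagonal decomposition of $J_\mE^{A\tA}$ (the $Q$-block on $\mathrm{span}\{|xx\ra\}$ versus the one-dimensional blocks $p_{y|x}$ on the $|xy\ra$, $x\neq y$) is a slightly more careful rendering of the paper's remark that the two sums have orthogonal support, and it correctly yields the equivalence $J_\mE^{A\tA}\geq 0\iff Q\geq 0$ and $p_{y|x}\geq 0$.
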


\begin{proof}
Following the same lines as in Theorem~\ref{thm1}, by replacing $H^B$ with $H^A$ everywhere, we get that a quantum channel $\mE\in\cptp(A\to A)$ is time-translation covariant iff its Choi matrix $J^{A\tA}_\mE=\sum_{x,x',y,y'}c_{xyx'y'}|x\lr x'|\otimes|y\lr y'|$ satisfies
$c_{xyx'y'}=0$ unless 
	\be
	a_x-a_y=a_{x'}-a_{y'}\;.
	\ee
Since $H^A$ is generic (i.e. has non-degenerate Bohr spectrum) we get that the Choi matrix $J^{A\tA}_\mE$ corresponds to a time-translation covariant channel iff $c_{xyx'y'}=0$ unless $x=x'$ and $y=y'$, or $x=y$ and $x'=y'$. Denoting by $p_{y|x}\eqdef c_{xyxy}$ and $q_{xy}\eqdef c_{xxyy}$ we conclude that $\mE\in\cptp(A\to A)$ is time-translation covariant channel iff $J_\mE^{A\tA'}$ has the form~\eqref{bohr}.  Since $J_\mE^{A\tA}\geq 0$ we have in particular that each $p_{y|x}\geq 0$, and the condition that the marginal $J^A=I^A$ implies that $\sum_{y}p_{y|x}=1$ for all $x=1,...,m$. Note that the two terms on the RHS of~\eqref{bohr} have orthogonal support. Therefore, $J^{AB}\geq 0$ iff both $p_{y|x}\geq 0$ for all $x$ and $y$, and $Q\geq 0$. This completes the proof.
\end{proof}

\begin{remark} 
Observe that even if the spectrum of the Hamiltonian $H^A$ has degeneracies, any quantum channel $\mE\in\cptp(A\to A)$ whose Choi matrix has the form~\eqref{bohr} is necessarily time-translation covariant. Therefore, several of the results below will also be useful for Hamiltonians with degenerate spectrum.
\end{remark}

\subsection{Exact Interconversions}

In this subsection we consider the exact conversion of one state to another under time-translation covariant channels. Specifically, let $\{|x\ra^A\}_{x\in[m]}$ be the energy-eigenbasis of an Hamiltonian $H^A$, and let
\be\label{12}
\rho^A=\sum_{x,x'\in[m]}r_{xx'}|x\lr x'|^A\;\text{ and }\;\sigma^A=\sum_{x,x'\in[m]}s_{xx'}|x\lr x'|^A
\ee
be two density matrices in $\md(A)$ with components $\{r_{xx'}\}$ and $\{s_{xx'}\}$, respectively.

\begin{myt}{}	
\begin{theorem}\label{ttcs}
{\rm Let $\rho,\sigma\in\md(A)$ be as in~\eqref{12} and suppose that $r_{xx'}\neq0$ for all $x,x'\in[m]$, and the Hamiltonian $H^A$ has a non-degenerate Bohr spectrum.
Then, the following statements are equivalent:
\ben
\item There exists $\mE\in\cov(A\to A)$ such that $\sigma=\mE(\rho)$. 
\item The $m\times m$ matrix $Q$, with components
\be\label{1369}
q_{xy}\eqdef\begin{cases} 
\min\left\{1,\frac{s_{xx}}{r_{xx}}\right\} & \text{ if }x=y\\
\frac{s_{xy}}{r_{xy}}& \text{ otherwise.}
\end{cases}
\ee
is positive semidefinite.
\een
Moreover, the second statement implies the first statement even if the Hamiltonian $H^A$ has a degenerate Bohr spectrum.}
\end{theorem}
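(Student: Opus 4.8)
The plan is to reduce the existence of a covariant converter to the Choi-matrix characterization of Lemma~\ref{lem1} and, ultimately, to an elementary transportation-feasibility question. First I would invoke Lemma~\ref{lem1} to write the Choi matrix of an arbitrary $\mE\in\cov(A\to A)$ through a conditional distribution $\{p_{y|x}\}_{x,y\in[m]}$ and a positive-semidefinite matrix $Q'$ whose off-diagonal entries are $q'_{xy}$ ($x\neq y$) and whose diagonal is $q'_{xx}=p_{x|x}$. Contracting this Choi matrix against $\rho$ using the standard identity $\mE(\rho)=\tr_A\big[(\rho^{T}\otimes I^{\tA})J_\mE^{A\tA}\big]$ (transpose in the energy eigenbasis), a short computation gives that $\mE(\rho)$ has off-diagonal entries $[\mE(\rho)]_{xy}=q'_{xy}r_{xy}$ for $x\neq y$, and diagonal entries $[\mE(\rho)]_{yy}=\sum_{x}p_{y|x}r_{xx}$.

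Imposing $\mE(\rho)=\sigma$ then decouples into two independent sets of conditions. The off-diagonal equations $s_{xy}=q'_{xy}r_{xy}$, together with the hypothesis $r_{xy}\neq0$, force $q'_{xy}=s_{xy}/r_{xy}$ for all $x\neq y$; these are precisely the off-diagonal entries of the matrix $Q$ in the statement, and the Hermiticity of $Q$ follows from that of $\rho$ and $\sigma$. The diagonal equations read $\q=P\p$, where $\p$ and $\q$ denote the diagonals of $\rho$ and $\sigma$ and $P=(p_{y|x})$ is column stochastic. Hence a covariant channel with $\mE(\rho)=\sigma$ exists if and only if there is a column-stochastic $P$ with $P\p=\q$ whose diagonal $p_{x|x}$, placed as the diagonal of the (otherwise fixed) matrix $Q'$, renders $Q'$ positive semidefinite.

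For necessity, nonnegativity and column stochasticity give $p_{x|x}\le1$, while the $x$-th diagonal equation $s_{xx}=\sum_{y}p_{x|y}r_{yy}\ge p_{x|x}r_{xx}$ gives $p_{x|x}\le s_{xx}/r_{xx}$; thus $p_{x|x}\le\min\{1,s_{xx}/r_{xx}\}$. Since $Q$ shares its off-diagonal entries with the realized matrix $Q'$ but has the componentwise larger diagonal $\min\{1,s_{xx}/r_{xx}\}$, we have $Q=Q'+D$ with $D\succeq0$ diagonal, so $Q\succeq Q'\succeq0$. For sufficiency, I would saturate the bound by setting $p_{x|x}=\min\{1,s_{xx}/r_{xx}\}$, whence $Q'=Q\succeq0$ by hypothesis, and then complete the off-diagonal entries $\{p_{y|x}\}_{y\neq x}$. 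Passing to $M\eqdef P\,\mathrm{diag}(\p)$ recasts the remaining constraints as the existence of a nonnegative matrix with column sums $r_{xx}$, row sums $s_{xx}$, and prescribed diagonal $M_{xx}=\min\{r_{xx},s_{xx}\}$. The residual column and row sums are $(r_{xx}-s_{xx})^{+}$ and $(s_{xx}-r_{xx})^{+}$, which are never both positive at the same index; the off-diagonal part is therefore a bipartite transportation problem between ``surplus'' indices and ``deficit'' indices carrying equal total mass, and is always solvable (e.g.\ by a rank-one supply-times-demand assignment). Dividing by $r_{xx}>0$ returns a bona fide stochastic $P$, and hence the desired $\mE\in\cov(A\to A)$.

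I expect the main obstacle to be the sufficiency construction: confirming that saturating every diagonal entry at its maximal value $\min\{1,s_{xx}/r_{xx}\}$ is simultaneously compatible with column stochasticity and with the marginal constraint $P\p=\q$. The resolution is the observation that fixing the diagonal of $M$ at $\min\{r_{xx},s_{xx}\}$ forces the leftover row and column marginals into disjoint sources and sinks, collapsing an a priori coupled linear-feasibility problem into a transparent bipartite flow that is automatically feasible. The positive-semidefiniteness hypothesis enters only to justify the maximal diagonal choice, and plays no role in the flow construction itself.
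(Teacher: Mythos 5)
Your proposal is correct and follows essentially the same route as the paper's proof: invoke Lemma~\ref{lem1}, contract the Choi matrix to force the off-diagonal entries $s_{xy}/r_{xy}$, bound the diagonal by $p_{x|x}\le\min\{1,s_{xx}/r_{xx}\}$ (with positive semidefiniteness preserved under adding a nonnegative diagonal), and then exhibit a column-stochastic $P$ saturating that bound. Your concluding ``rank-one supply-times-demand'' transportation solution is precisely the paper's explicit construction~\eqref{coedf}, namely $p_{y|x}=\frac{1}{\mu r_x}(s_y-r_y)_+(r_x-s_x)_+$ for $x\neq y$ with $\mu=\frac12\left\|\s-\r\right\|_1$.
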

\end{myt}
\begin{remark}
In the proof below we will see that if $r_{xy}=0$ for some off diagonal terms (i.e. $x\neq y$) then $s_{xy}$ must also be zero. However, in this case, we will see that for any $x\neq y\in[m]$ with $r_{xy}=0$, the components of $q_{xy}$ can be arbitrary. This means that in this case the condition becomes cumbersome, as we will need to require that there \emph{exists} $Q$ as defined above but with no restriction on the components $q_{xy}$ for which $r_{xy}=0$.
\end{remark}
\begin{proof}	
	 From Lemma~\ref{lem1} it follows that there exists $\mE\in\cov(A\to A)$ such that $\sigma=\mE(\rho)$ iff there exists a conditional probability distribution $\{p_{y|x}\}$, and an $m\times m$ positive semidefinite matrix $Q$, such that
	\ba
	\sigma=\mE(\rho)&=\tr_A\left[J^{A\tA}_\mE(\rho^T\otimes I^{\tA})\right]\\
	&=\sum_{x, y}p_{y|x}r_{xx}|y\lr y|+\sum_{x\neq y}q_{xy}r_{xy}|x\lr y|
	\ea
	That is, $\sigma=\mE(\rho)$ iff 
	\ba
	&s_{yy}=\sum_{x=1}^mp_{y|x}r_{xx}\quad\quad\forall\;y\in[m]\quad\text{and}\\
	&s_{xy}=q_{xy}r_{xy}\quad\quad\forall\;x\neq y\in[m]\;.
	\ea
	Hence, for the off diagonal terms, $s_{xy}=0$ whenever $r_{xy}=0$. Since
	we assume that  that all the off-diagonal terms of $\rho$ are non-zero, i.e.\ $r_{xy}\neq 0$ for $x\neq y$, there is no freedom left in the choice of the off diagonal terms of $Q$ and we must have $q_{xy}=\frac{s_{xy}}{r_{xy}}$.
Since $Q$ must be positive semidefinite we will maximize its diagonal terms $\{p_{x|x}\}_{x=1}^m$ given the constraint that $s_{yy}=\sum_{x=1}^mp_{y|x}r_{xx}$. This constraint immediately gives $s_{yy}\geq p_{y|y}r_{yy}$ so that we must have $p_{y|y}\leq\frac{s_{yy}}{r_{yy}}$. Clearly, we also have $p_{y|y}\leq 1$ so we conclude that
	\be
	p_{y|y}\leq\min\left\{1,\frac{s_{yy}}{r_{yy}}\right\}\;.
	\ee
	Remarkably, this condition is sufficient since there exists conditional probabilities $\{p_{y|x}\}$, with both $p_{y|y}=\min\left\{1,\frac{s_{yy}}{r_{yy}}\right\}$ and 	$s_{yy}=\sum_{x=1}^mp_{y|x}r_{xx}$.	Indeed,  for simplicity set $r_x\eqdef r_{xx}$ and $s_x\eqdef s_{xx}$, and define
	\be\label{coedf}
	p_{y|x}\eqdef
	\begin{cases}
	\min\left\{1,\frac{s_{x}}{r_{x}}\right\} &\text{if }x=y\\
	\frac{1}{\mu r_x}(s_{y}-r_{y})_+(r_x-s_x)_+ & \text{otherwise}
	\end{cases}
	\ee
	where
	\be
	\mu\eqdef\sum_{y\in[m]}(s_{y}-r_y)_+=\frac12\|\s-\r\|_1\;,
	\ee
	and we used the notation $(s_{y}-r_y)_+\eqdef s_y-r_y$ if $s_y\geq r_y$ and $(s_{y}-r_y)_+\eqdef 0$ if $s_y< r_y$. Clearly, $p_{y|x}\geq 0$, and it is straightforward to check that $\sum_{y=1}^mp_{y|x}=1$ and $s_{y}=\sum_{x=1}^mp_{y|x}r_{x}$; that is, the above conditional probability distribution satisfies all the required conditions. This completes the proof for the case that $H^A$ has non-degenerate Bohr spectrum.

Finally, if $H^A$ has degenerate Bohr spectrum and $Q\geq 0$ then	we still get that the Choi matrix of the form~\eqref{bohr} (with $p_{y|x}$ as in~\eqref{coedf} and $q_{xy}$ as in~\eqref{1369}) corresponds to a quantum channel $\mE\in\cptp(A\to A)$ with the property that $\sigma=\mE(\rho)$.
As discussed below the proof of Lemma~\ref{lem1}, all channels with a Choi matrix of the form~\eqref{bohr} are time-translation covariant. Hence, $\mE\in\cov(A\to A)$.
	This completes the proof.
\end{proof}	
\begin{remark}
In the proof above we saw that if $r_{xy}=0$ for some $x\neq y$ then $\sigma=\mE(\rho)$ for some $\mE\in\cov(A\to A)$ only if $s_{xy}=0$. This in particular implies that if $\rho$ has a block diagonal form $\rho=\begin{pmatrix}\trho & \0\\
\0 & \0
\end{pmatrix}$, and if it can be converted by a time-translation covariant channel to $\sigma$, then $\sigma$ must have the form $\sigma=\begin{pmatrix}\tsigma & \0\\
\0 & D
\end{pmatrix}$ where $D$ is some diagonal matrix.
\end{remark}

As an example for the theorem above, consider the qubit case in which
both $$\rho=\begin{pmatrix} a & z\\\bar{z} &1-a  &\end{pmatrix}\quad\text{and}\quad\sigma=\begin{pmatrix} b & w\\\bar{w} &1-b  &\end{pmatrix}$$ are qubit states.
W.l.o.g.\ suppose that $a\geq b$ (we can always rearrange the order of the diagonals of $\rho$ and $\sigma$ by a permutation in $\cov(A\to A)$).
In this case the matrix $Q$ can be expressed as
\be
Q=\begin{pmatrix}
\frac ba\; &\; \frac wz\\
\; &\; \\
\frac{\bar{w}}{\bar{z}}\; &\; 1
\end{pmatrix}\;.
\ee
Therefore, $Q\geq 0$ iff
\be\label{coddd}
\frac ba\geq\left|\frac wz\right|^2\;.
\ee
Observe that if $\rho$ is a pure state, so that $|z|=\sqrt{a(1-a)}$, then the above equation holds iff
$|w|^2\leq b(1-a)$. Now, since $\sigma\geq 0$ we have $|w|^2\leq b(1-b)$ so that $1-\frac{|w|^2}{b}\geq b$. Therefore,
for any $a$ in the range
\be
a\in\left[b,1-\frac{|w|^2}{b}\right]
\ee
we get both $|w|^2\leq b(1-a)$ and $a\geq b$. That is, for any mixed state $\sigma$ there exists a pure state $\psi$ that can be converted to $\sigma$. On the other hand, if $\sigma$ is pure (i.e. $|w|^2=b(1-b)$) and $\rho$ arbitrary qubit, then the condition in~\eqref{coddd} becomes
\be
\left|z\right|^2\geq a(1-b)\;.
\ee
Since $\rho\geq 0$ we also have $|z|^2\leq b(1-b)$. Combining both inequalities we find that the only way $\rho$ can be converted to a pure qubit state $\sigma$ is if
$b=a$ (since $a\geq b$ was the initial assumption) and $|z|^2=a(1-a)$. That is, $\rho$ is a pure state itself, and up to a diagonal unitary equals to $\sigma$. Hence, pure coherence cannot be obtained from mixed coherence, and deterministic interconversion among inequivalent pure resources is not possible.

The example above shows that there is no unique ``golden unit" that can be used as the ultimate resource in two dimensional systems. Instead, any pure resource (i.e. pure state that is not an energy eigenstate) is maximal in the sense that there is no other resource that can be converted into it (up to the equivalence class of diagonal unitaries). However, the set of all pure qubit resources is maximal (i.e. any mixed state can be reached from some pure state by time-translation covariant operations). We now show that this latter property holds in general.

\begin{corollary}\label{puretomix}
Let $\sigma\in\md(A)$ be an arbitrary state, and denote by $p_x\eqdef\la x|\sigma|x\ra$ the diagonal elements of $\sigma$ in the energy eigenbasis $\{|x\ra\}_{x=1}^m$ of system $A$. Then, the pure quantum state
\be
|\psi\ra\eqdef\sum_{x=1}^m\sqrt{p_x}|x\ra
\ee
can be converted to $\sigma$ by a time-translation covariant channel.
\end{corollary}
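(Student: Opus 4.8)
The plan is to apply Theorem~\ref{ttcs} directly, taking the source to be the pure state $\psi\eqdef|\psi\lr\psi|$ and the target to be $\sigma$. First I would record the matrix elements of $\psi$ in the energy eigenbasis: from $|\psi\ra=\sum_x\sqrt{p_x}|x\ra$ one reads off $r_{xy}=\sqrt{p_x}\sqrt{p_y}$, and in particular the diagonal $r_{xx}=p_x$ coincides with the diagonal $s_{xx}=p_x$ of $\sigma$. Assuming momentarily that all $p_x>0$, every off-diagonal $r_{xy}$ is nonzero, so the hypotheses of Theorem~\ref{ttcs} are met and it suffices to check that the matrix $Q$ of~\eqref{1369} is positive semidefinite. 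On the diagonal $q_{xx}=\min\{1,s_{xx}/r_{xx}\}=\min\{1,1\}=1$, and off the diagonal $q_{xy}=s_{xy}/r_{xy}=s_{xy}/\sqrt{p_x}\sqrt{p_y}$.

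The crux is to recognize $Q$ as a symmetric rescaling of $\sigma$. With $D\eqdef\mathrm{diag}(\sqrt{p_1},\dots,\sqrt{p_m})$, one has $(D^{-1}\sigma D^{-1})_{xy}=s_{xy}/\sqrt{p_x}\sqrt{p_y}$ for \emph{all} $x,y$; this matches $q_{xy}$ off the diagonal, and also on it, since $s_{xx}/p_x=1$. Hence $Q=D^{-1}\sigma D^{-1}$ is a congruence of $\sigma\ge0$ by the Hermitian matrix $D^{-1}$, so $Q\ge0$. By the sufficiency direction of Theorem~\ref{ttcs} this produces a channel $\mE\in\cov(A\to A)$ with $\mE(\psi)=\sigma$, as claimed. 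I expect this step to be essentially immediate; the only genuine work is the preceding identification of $Q$ with $D^{-1}\sigma D^{-1}$.

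The lone gap is the assumption $r_{xy}\neq0$, which fails precisely when some $p_x=0$. I would close it by passing to the support: if $p_x=0$ then $\sigma\ge0$ forces the whole $x$-th row and column of $\sigma$ to vanish, so both $\psi$ and $\sigma$ live on $A_S\eqdef\spa\{|x\ra:p_x>0\}$, where the argument above applies verbatim. Concretely, one may take the Choi matrix of the form~\eqref{bohr} with $p_{y|x}=\delta_{xy}$ and with $q_{xy}=s_{xy}/\sqrt{p_x}\sqrt{p_y}$ for $x\neq y$ in $A_S$ and $q_{xy}=0$ otherwise; the positivity condition of Lemma~\ref{lem1} then reduces to the block statement $Q=(D_S^{-1}\sigma D_S^{-1})\oplus I\ge0$, and a direct computation confirms $\mE(\psi)=\sigma$. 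Finally, although Theorem~\ref{ttcs} is stated for a non-degenerate Bohr spectrum, the channel just built needs no such assumption. Indeed, the two families of terms in~\eqref{bohr}, namely $|x\lr x|\otimes|y\lr y|$ and $|x\lr y|\otimes|x\lr y|$, are each invariant under $e^{-iH^At}\otimes e^{iH^At}\,(\cdot)\,e^{iH^At}\otimes e^{-iH^At}$ for all $t$, because the phases picked up on the two tensor factors cancel. Hence the constructed $\mE$ lies in $\cov(A\to A)$ regardless of any degeneracy, and the corollary holds as stated.
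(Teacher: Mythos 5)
Your proposal is correct and follows the same overall route as the paper's proof---invoke Theorem~\ref{ttcs} and certify that the matrix $Q$ of~\eqref{1369} is positive semidefinite---but the way you certify positivity differs, and the difference is worth spelling out. The paper computes the $k$th \emph{leading} principal minor of $Q$ and shows it equals $\det(\sigma_k)/(p_1\cdots p_k)\geq 0$; your identification $Q=D^{-1}\sigma D^{-1}$ with $D\eqdef\mathrm{diag}(\sqrt{p_1},\dots,\sqrt{p_m})$ is the same identity seen globally, since for every leading $k\times k$ block one has $\det\bigl(D_k^{-1}\sigma_k D_k^{-1}\bigr)=\det(\sigma_k)/(p_1\cdots p_k)$. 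Your congruence argument is in fact tighter: nonnegativity of the leading principal minors alone does \emph{not} imply positive semidefiniteness (Sylvester's criterion in that form characterizes positive definiteness; for semidefiniteness one needs all principal minors), so the paper's argument as written has a small hole that your version closes automatically, while the congruence $\sigma\geq 0\Rightarrow D^{-1}\sigma D^{-1}\geq 0$ is airtight. You also patch two points the paper glosses over: the case $p_x=0$, where the hypothesis $r_{xy}\neq 0$ of Theorem~\ref{ttcs} fails (your support reduction, using that $\sigma\geq 0$ forces the corresponding rows and columns of $\sigma$ to vanish, is the right fix), and the non-degenerate Bohr spectrum assumption, which Theorem~\ref{ttcs} requires but the corollary does not state; your direct verification that any Choi matrix of the form~\eqref{bohr} is time-translation covariant for an arbitrary Hamiltonian (the phases acquired by the two tensor factors of $|x\lr y|\otimes|x\lr y|$ cancel) shows the corollary holds in full generality. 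So your write-up is both correct and strictly more careful than the paper's own proof.
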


\begin{proof}
Observe that the diagonal elements $Q$ are all 1, and the off diagonal terms are given by
\be
q_{xy}=\frac{s_{xy}}{\sqrt{p_xp_y}}\quad\quad\forall\;x,y\in[m]\;,\;x\neq y.
\ee
Therefore, we can express $Q=D_\p^{-1}\sigma D_{\p}^{-1}$, where $D_{\p}$ is the diagonal matrix whose diagonal is $(\sqrt{p_1},...,\sqrt{p_m})$. Since $D_\p>0$ and $\sigma\geq 0$ it follows that $Q\geq 0$.
\end{proof}

\section{Quantum Athermality in the Single-Shot Regime}
\label{sec4}

In Sec.~\ref{sec:exactasy} we saw that if $A$ and $B$ are two physical systems with relatively non-degenerate Hamiltonians, then a quantum channel $\mN\in\cptp(A\to B)$ is time translation covariant if and only if it is classical. Since thermal operations are time-translation covariant, it follows that for relatively non-degenerate Hamiltonians thermal operations must be classical. This observation has the following consequence.

 \begin{corollary}\label{cor:rgsg}
 Let $\mf$ be either $\cto$ or $\gpc$, and let $A$ and $B$ be two physical systems with relatively non-degenerate Hamiltonians. Let $(\rho^A,\gamma^A)$ and $(\sigma^B,\gamma^B)$ be two athermality states on system $A$ and $B$, and $\r^A$, $\s^B$, $\g^A$, and $\g^B$, be the probability vectors whose components are the elements on the diagonals of $\rho^A$, $\sigma^B$, $\gamma^A$ and $\gamma^B$, respectively. Then, the following are equivalent:
 \ben
 \item $\left(\rho^A,\gamma^A\right)\xrightarrow{\mf}\left(\sigma^B,\gamma^B\right)$.
 \item $\sigma^B$ is diagonal in the energy-eigenbasis and $\left(\r^A,\g^A\right)\succ\left(\s^B,\g^B\right)$.
 \een
 \end{corollary}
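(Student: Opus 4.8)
The plan is to reduce both implications to two structural facts already established: that under relatively non-degenerate Hamiltonians every time-translation covariant channel is classical (Theorem~\ref{thm1}), and that for quasi-classical athermality states closed-thermal convertibility coincides with relative majorization (Eq.~\eqref{19}). Both $\cto$ and $\gpc$ consist of Gibbs-preserving, time-translation covariant channels---using $\cto\subset\gpc$ together with the fact that $\gpc$ channels are by definition Gibbs-preserving and covariant---so I will treat the two choices of $\mf$ uniformly, and in the harder direction I will construct a map in $\cto$ and then invoke $\cto\subset\gpc$ to cover the $\gpc$ case.

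For the implication $(1)\Rightarrow(2)$, suppose $\mE\in\mf(A\to B)$ satisfies $\mE(\rho^A)=\sigma^B$ and $\mE(\gamma^A)=\gamma^B$. Because $\mE$ is time-translation covariant and the Hamiltonians are relatively non-degenerate, Theorem~\ref{thm1} gives $\mE=\Delta^{B}\circ\mE\circ\Delta^{A}$. The outer dephasing forces $\sigma^B=\mE(\rho^A)$ to be diagonal in the energy eigenbasis, which is the first half of (2). The inner dephasing shows that $\mE$ depends on its input only through the diagonal, so its action on diagonal states is captured by a single column-stochastic matrix $E$; applying this to $\rho^A$ gives $\s^B=E\r^A$, and applying it to the (diagonal) Gibbs state $\gamma^A$ together with Gibbs preservation gives $\g^B=E\g^A$. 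By the definition of relative majorization, these two relations are exactly $(\r^A,\g^A)\succ(\s^B,\g^B)$.

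For $(2)\Rightarrow(1)$, I first pinch: by Lemma~\ref{lem2} the channel $\mP\in\tho(A\to A)$ maps $\rho^A$ to the quasi-classical state $\mP(\rho^A)$ with diagonal $\r^A$ while fixing $\gamma^A$. Given $(\r^A,\g^A)\succ(\s^B,\g^B)$, Eq.~\eqref{19} supplies $\mE'\in\cto(A\to B)$ taking the quasi-classical state $(\r^A,\g^A)$ to $(\s^B,\g^B)$; since $\sigma^B$ is assumed diagonal with diagonal $\s^B$, this yields $\mE'(\mP(\rho^A))=\sigma^B$ and $\mE'(\gamma^A)=\gamma^B$, so the composition $\mE'\circ\mP$ realizes the conversion. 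To see $\mE'\circ\mP\in\cto(A\to B)$, I write $\mE'=\lim_k\mE'_k$ with $\mE'_k\in\tho(A\to B)$; each $\mE'_k\circ\mP$ is a thermal operation (compositions of thermal operations are thermal), so the limit lies in $\cto$. Finally $\cto\subset\gpc$ gives the same conclusion when $\mf=\gpc$.

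The step I expect to demand the most care is the reverse direction: the bookkeeping that lets Eq.~\eqref{19}, a statement about purely quasi-classical states, be applied to the possibly coherent input $\rho^A$ only after it has been pinched, together with the verification that pre-composing a $\cto$ map with the thermal pinching channel remains inside $\cto$. The forward direction is essentially immediate once the classicality of $\mE$ supplied by Theorem~\ref{thm1} is invoked.
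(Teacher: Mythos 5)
Your proof is correct and follows essentially the same route the paper intends: the corollary is stated there as an immediate consequence of Theorem~\ref{thm1} (covariance plus relative non-degeneracy forces classicality, hence a column-stochastic action realizing relative majorization) combined with the quasi-classical equivalence~\eqref{19}. Your explicit pre-composition with the pinching channel $\mP$ in the reverse direction is harmless but actually redundant, since any $\mE'\in\cto(A\to B)$ supplied by~\eqref{19} is itself covariant and therefore, by Theorem~\ref{thm1}, already satisfies $\mE'=\Delta^B\circ\mE'\circ\Delta^A$, so $\mE'(\rho^A)=\mE'(\mP(\rho^A))=\sigma^B$ directly.
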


 \begin{remark}
Note that in the generic case of relatively non-degenerate Hamiltonians, $\gpc$ can only destroy the coherence between the energy levels of the input state $\rho^A$. In this case, coherence cannot be manipulated, but only destroyed. 
\end{remark}


Consider the conversion of one athermality state $(\rho^A,\gamma^A)$ to another athermality state
$(\sigma^{B},\gamma^{B})$ under any of the free operations, $\mf$, discussed above. Such a conversion is equivalent to a conversion with the same input and output Gibbs states, since appending a Gibbs state is a reversible thermal operation.  To see this explicitly, observe first that
\ba
&\left(\rho^A,\gamma^A\right)\xleftrightarrow{\;\mf\;\;}\left(\rho^A\otimes\gamma^{B},\gamma^{AB}\right)\\&\left(\sigma^{B},\gamma^{B}\right)\xleftrightarrow{\;\mf\;\;}\left(\gamma^A\otimes\sigma^{B},\gamma^{AB}\right)\;,
\ea
where $\mf$ is one of the four sets $\tho$, $\cto$, $\gpc$, and $\gp$, and the symbol $\xleftrightarrow{\;\mf\;\;}$ indicates conversion under $\mf$ in both directions. 
Therefore, the conversion of $\left(\rho^A,\gamma^A\right)$ to $\big(\sigma^{B},\gamma^{B}\big)$ is equivalent to the conversion of $\big(\rho^A\otimes\gamma^{B},\gamma^{AB}\big)$
to the state $\big(\gamma^A\otimes\sigma^{B},\gamma^{AB}\big)$. Note that the latter conversion has the same input and output Gibbs state $\gamma^{AB}$.
Therefore, interconversions among states with the same Gibbs state (and in particular with $|A|=|B|$) is general enough to capture also interconversions with $|B|\neq |A|$ (as long as we do not impose some additional non-degeneracy constraints); see Fig.~\ref{appending}. 

\begin{figure}[h]\centering    \includegraphics[width=0.45\textwidth]{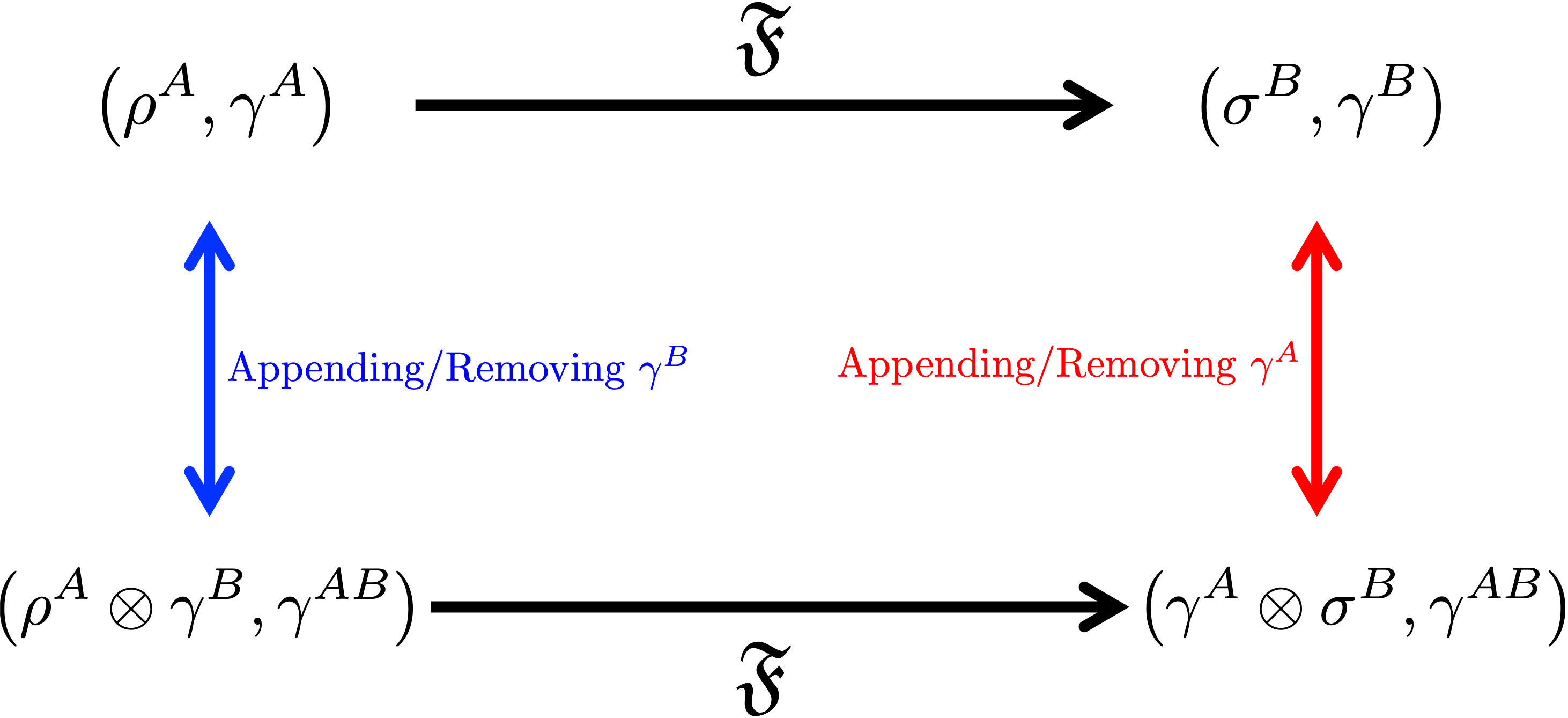}
  \caption{\linespread{1}\selectfont{\small Equivalence of conversions. The top conversion with two different Gibbs states $\gamma^A$ and $\gamma^B$ is equivalent to the bottom conversion with the same Gibbs state $\gamma^{AB}$.}}
  \label{appending}
\end{figure}

We now focus on interconversions among states that are all in $\md(A)$, and unless necessary, will drop the system superscript $A$ from the states. However, we will assume that the Hamiltonian $H^A$ has a non-degenerate Bohr spectrum. This will reduce a bit from the generality of the results, however, as discussed above, this is the generic case and almost all Hamiltonians having such a spectrum.

\subsection{Exact Conversions}

Consider a conversion of the form $(\rho,\gamma)\xrightarrow{\gpc}(\sigma,\gamma)$, where $\rho,\sigma,\gamma\in\md(A)$, and \emph{all} the off-diagonal terms of $\rho$ are non-zero.
In this case, Theorem~\ref{ttcs} states that $\rho$ can be converted to $\sigma$ by a time-translation covariant channel iff the matrix $Q$ as defined in~\eqref{1369} is positive semidefinite.  Since CGP channels are in particular covariant under the time-translation group, the condition $Q\geq 0$ is a necessary (but not sufficient) condition for
$(\rho,\gamma)\xrightarrow{\gpc}(\sigma,\gamma)$. To get the full necessary and sufficient conditions, let $J^{AB}$ be the Choi matrix of a time-translation covariant channel $\mE\in\cov(A\to A)$ that satisfies $\mE(\rho)=\sigma$ and $\mE(\gamma)=\gamma$. Recall that the Choi matrix of such a channel has the form (cf.~\eqref{bohr})
\be
	J^{A\tA}=\sum_{x, y}p_{y|x}|x\lr x|^A\otimes |y\lr y|^{\tA}+\sum_{x\neq y}\frac{s_{xy}}{r_{xy}}|x\lr y|^A\otimes |x\lr y|^{\tA}\;,
	\ee
where $P=(p_{y|x})$ is some column stochastic matrix, and we assumed that the off diagonal terms of $\rho$ are non-zero. Let $\r$ and $\s$ be the probability vectors consisting of the diagonals of $\rho$ and $\sigma$, respectively, and identify the diagonal matrix $\gamma$ with the Gibbs vector $\g$ consisting of its diagonal. Then, the Choi matrix above facilitates such a channel $\mE$ iff it is positive semidefinite \emph{and} 
\be\label{1621}
P\r=\s\quad\text{and}\quad P\g=\g\;.
\ee
Note that the above condition implies that $(\r,\g)\succ(\s,\g)$, however, it is not sufficient since we also require that $J^{A\tA}\geq 0$. This latter condition is equivalent to the requirement that the matrix obtained by replacing the diagonal elements of $Q$ (as defined in~\eqref{1369}) with $\{p_{x|x}\}_{x\in[m]}$ is positive semidefinite. 
We summarize these considerations in the following lemma.

\begin{lemma}\label{lem1841}
Let $(\rho^A,\gamma^A)$ and $(\sigma^A,\gamma^A)$ be two athermality states of a system $A$, whose Hamiltonian $H^A$ has a non-degenerate Bohr spectrum. Using the same notations as in~\eqref{12}, suppose that $r_{xy}\neq 0$ for all $x\neq y$. Then, the following statements are equivalent:
\ben
\item 
$
 (\rho^A,\gamma^A)\xrightarrow{\gpc}(\sigma^A,\gamma^A)
$.
\item There exists a column stochastic matrix $P$ that satisfies both~\eqref{1621} and the matrix
 \be\label{matrix18}
\sum_{x=1}^mp_{x|x}|x\lr x|+\sum_{x\neq y\in[m]}\frac{s_{xy}}{r_{xy}}|x\lr y|\geq 0\;.
 \ee
 \een
 Moreover, the second statement implies the first statement even if the Hamiltonian $H^A$ has a degenerate Bohr spectrum.
\end{lemma}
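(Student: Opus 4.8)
The plan is to unpack the statement $(\rho,\gamma)\xrightarrow{\gpc}(\sigma,\gamma)$ into the existence of a single Choi matrix and then translate the three defining requirements---that $\mE$ be time-translation covariant, that $\mE(\rho)=\sigma$, and that $\mE(\gamma)=\gamma$---into the algebraic conditions in the statement. Since $H^A$ has a non-degenerate Bohr spectrum, Lemma~\ref{lem1} already pins down the form of any covariant channel: its Choi matrix must be of the form~\eqref{bohr}, parametrized by a conditional distribution $\{p_{y|x}\}$ and a positive semidefinite matrix $Q$ with $q_{xx}=p_{x|x}$. Thus a covariant $\mE$ realizing the conversion exists iff there is a choice of $\{p_{y|x}\}$ and $Q\geq0$ consistent with both equalities $\mE(\rho)=\sigma$ and $\mE(\gamma)=\gamma$, and the bulk of the argument is extracting exactly these constraints.

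For the forward direction I would start from such an $\mE$ and repeat the computation in the proof of Theorem~\ref{ttcs}: applying~\eqref{bohr} to $\rho$ yields $s_{xy}=q_{xy}r_{xy}$ for $x\neq y$ together with $s_{yy}=\sum_x p_{y|x}r_{xx}$. Because every off-diagonal $r_{xy}\neq0$ by hypothesis, the first relation forces $q_{xy}=s_{xy}/r_{xy}$, while the second is precisely $P\r=\s$ with $P=(p_{y|x})$ column stochastic. Since $\gamma$ is diagonal in the energy basis, only the $\{p_{y|x}\}$ block of~\eqref{bohr} contributes to $\mE(\gamma)$, so the Gibbs-preserving condition $\mE(\gamma)=\gamma$ becomes $P\g=\g$; together these are~\eqref{1621}. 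Finally, positivity of $Q$ with diagonal $q_{xx}=p_{x|x}$ and off-diagonal $q_{xy}=s_{xy}/r_{xy}$ is exactly the positivity of the matrix~\eqref{matrix18}.

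The reverse direction reverses this reading: given a column stochastic $P$ satisfying~\eqref{1621} and~\eqref{matrix18}, I would define the Choi matrix via~\eqref{bohr} with $q_{xy}=s_{xy}/r_{xy}$ for $x\neq y$ and $q_{xx}=p_{x|x}$, verify the marginal condition $J^A=I^A$ from $\sum_y p_{y|x}=1$, and recover the two equalities by running the above computation backwards. The step I expect to be the \emph{main obstacle}---and the one worth spelling out carefully---is the equivalence between $J^{A\tA}\geq0$ and positivity of~\eqref{matrix18}. One must observe that the Choi matrix splits, with no cross terms, into a piece supported on $\mathcal{K}\eqdef\spa\{|xx\rangle : x\in[m]\}$, whose restriction is exactly the matrix~\eqref{matrix18} (diagonal $p_{x|x}$, off-diagonal $s_{xy}/r_{xy}$), plus a piece on $\mathcal{K}^\perp$ that is diagonal with entries $p_{y|x}\geq0$ for $x\neq y$. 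Positivity of $J^{A\tA}$ therefore decouples into the manifestly satisfied nonnegativity of these stray $p_{y|x}$ and the genuine constraint~\eqref{matrix18}, which is what renders the stated condition simultaneously necessary and sufficient.
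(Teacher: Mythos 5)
Your proposal is correct and follows essentially the same route as the paper: the paper derives this lemma by invoking the Choi-matrix form~\eqref{bohr} from Lemma~\ref{lem1}, reading off $q_{xy}=s_{xy}/r_{xy}$ and the constraints~\eqref{1621} from $\mE(\rho)=\sigma$ and $\mE(\gamma)=\gamma$, and using the orthogonal-support decomposition of $J^{A\tA}$ (already noted in the proof of Lemma~\ref{lem1}) to reduce $J^{A\tA}\geq 0$ to~\eqref{matrix18}. Your careful treatment of the splitting into the diagonal subspace $\spa\{|xx\rangle\}$ and its complement is exactly the argument the paper relies on, just spelled out more explicitly.
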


The Lemma above does not provide much computational simplification over the results in~\cite{GJB+2018},  since determining the existence of such a column stochastic matrix $P$ is itself a semidefinite programming (SDP) problem. However, the significance of this lemma is that it makes the role of quantum coherence in such conversions of athermality much more apparent, as demonstrated by the following theorem.
Moreover, we will see below that in the qubit case the lemma above provides a simple criterion for exact inter-conversions under GPC.

\begin{myt}{}
\begin{theorem}\label{thm31}
{\rm Let $(\rho^A,\gamma^A)$ and $(\sigma^A,\gamma^A)$ be two quantum athermality states of dimension $m\eqdef|A|$, whose Hamiltonian $H^A$ has a non-degenerate Bohr spectrum.
For any $x,y\in[m]$ let $r_{xy}\eqdef\la x|\rho|y\ra$ and $s_{xy}\eqdef\la x|\sigma|y\ra$ be, respectively, the $xy$-component of $\rho$ and $\sigma$ in the energy-eigenbasis. 
Suppose that $r_{xy}\neq 0$ for all $x,y\in[m]$ and that $r_{xx}=s_{xx}$ for all $x\in[m]$. Then, the following statements are equivalent:
\ben
\item $(\rho^A,\gamma^A)\xrightarrow{\gpc}(\sigma^A,\gamma^A)$.
 \item 
$
Q^A\eqdef I^A+\sum_{x\neq y\in[m]}\frac{s_{xy}}{r_{xy}}|x\lr y|^A\geq 0
$.
 \een
 Moreover, the second statement implies the first statement even if the Hamiltonian $H^A$ has a degenerate Bohr spectrum.}
\end{theorem}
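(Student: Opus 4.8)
The plan is to obtain this theorem as a clean specialization of Lemma~\ref{lem1841}, exploiting the extra hypothesis $r_{xx}=s_{xx}$. That hypothesis says exactly that the diagonal (population) vectors coincide, $\r=\s$, so no transport of populations is needed and the conversion becomes a pure manipulation of coherences. The entire content of the theorem is then that, in this regime, the column-stochastic matrix $P$ in Lemma~\ref{lem1841} may be taken to be the identity, whereupon the positivity condition~\eqref{matrix18} collapses to the single criterion $Q^A\ge 0$. I would therefore prove the two directions by translating them through Lemma~\ref{lem1841}.

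For sufficiency, suppose $Q^A\ge 0$. I would choose $P=I^A$. This matrix is column stochastic, and since $\r=\s$ it satisfies $P\r=\r=\s$ together with $P\g=\g$, so both equalities in~\eqref{1621} hold. With $p_{x|x}=1$ for every $x$, the matrix appearing in~\eqref{matrix18} becomes precisely $I^A+\sum_{x\neq y}\frac{s_{xy}}{r_{xy}}|x\lr y|=Q^A\ge 0$. Hence all hypotheses of Lemma~\ref{lem1841} are met and $(\rho^A,\gamma^A)\xrightarrow{\gpc}(\sigma^A,\gamma^A)$.

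For necessity, suppose the conversion is possible. Lemma~\ref{lem1841} then supplies a column-stochastic $P=(p_{y|x})$ obeying~\eqref{1621} for which the matrix $M\eqdef\sum_x p_{x|x}|x\lr x|+\sum_{x\neq y}\frac{s_{xy}}{r_{xy}}|x\lr y|$ is positive semidefinite. The matrices $M$ and $Q^A$ share all off-diagonal entries and differ only on the diagonal, with $Q^A-M=\sum_x(1-p_{x|x})|x\lr x|$. Because each column of $P$ sums to one with nonnegative entries, we have $p_{x|x}\le 1$, so $Q^A-M$ is a nonnegative diagonal matrix, hence positive semidefinite. Therefore $Q^A=M+(Q^A-M)$ is a sum of two positive semidefinite matrices and $Q^A\ge 0$.

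The one point needing (mild) care is this necessity direction: Lemma~\ref{lem1841} does not hand us $Q^A$ itself but only $M$, whose diagonal entries $p_{x|x}$ can be strictly below one. The structural fact bridging the gap is that positive semidefiniteness is preserved under adding a nonnegative diagonal, combined with the elementary observation that column stochasticity caps each $p_{x|x}$ at one. I expect no deeper obstacle; the sufficiency direction is immediate once $\r=\s$ is used to justify the choice $P=I^A$. Conceptually, the merit of the statement is that matching populations isolates the coherence-transfer constraint into the single condition $Q^A\ge 0$, making the role of quantum coherence transparent.
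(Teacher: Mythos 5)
Your proposal is correct and takes essentially the same route as the paper's own proof: sufficiency by choosing $P=I^A$ in Lemma~\ref{lem1841} (valid since $\r=\s$), and necessity by adding the positive semidefinite diagonal matrix $\sum_{x=1}^m(1-p_{x|x})|x\lr x|$ to the matrix in~\eqref{matrix18} to recover $Q^A\geq 0$. The only difference is presentational—you spell out explicitly that column stochasticity forces $p_{x|x}\leq 1$, a point the paper leaves implicit.
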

\end{myt}

\begin{remark}
The condition in the theorem above that $\rho$ and $\sigma$ have the same diagonals means that $\rho$ and $\sigma$ have the same non-uniformity and they only differ by their coherence (asymmetry) properties. In fact, observe that the condition $Q^A\geq 0$ is identical to the condition given in~Theorem~\ref{ttcs} for the case that the diagonals of $\rho$ and $\sigma$ are the same. Therefore, in this case we have $(\rho^A,\gamma^A)\xrightarrow{\rm CGP}(\sigma^A,\gamma^A)$ if and only if $\rho^A$ can be converted to $\sigma^A$ by time-translation covariant operations. In particular, the Gibbs state, $\gamma^A$, does not play a role in such conversions since $\rho$ and $\sigma$ have the same non-uniformity (i.e. same diagonals).
\end{remark}

\begin{proof}
Since the diagonals of $\rho$ and $\sigma$ are the same, we get that if $Q^A\geq 0$ then by taking the stochastic matrix $P$ to be the identity matrix, all the conditions of Lemma~\ref{lem1841} are satisfied so that $(\rho^A,\gamma^A)\xrightarrow{\rm CGP}(\sigma^A,\gamma^A)$. Conversely, if $(\rho^A,\gamma^A)\xrightarrow{\rm CGP}(\sigma^A,\gamma^A)$ then by Lemma~\ref{lem1841} there exists a stochastic matrix $P$ with a diagonal $\{p_{x|x}\}$ that satisfies~\eqref{matrix18}. By adding the positive semidefinite matrix $\sum_{x=1}^m(1-p_{x|x})|x\lr x|$ to the matrix in~\eqref{matrix18} we get that also $Q^A\geq 0$. This completes the proof.
\end{proof}

The theorem above has the following consequence.

\begin{corollary}\label{puretomixgpc}
Let $\rho\in\md(A)$ be an arbitrary state, and denote by $p_x\eqdef\la x|\rho|x\ra$ the diagonal elements of $\rho$ in the energy eigenbasis $\{|x\ra\}_{x=1}^m$ of system $A$. Then, 
\be
\left(\psi^A,\gamma^A\right)\xrightarrow{\gpc}\left(\rho^A,\gamma^A\right)
\ee
where $|\psi^A\ra\eqdef\sum_{x=1}^m\sqrt{p_x}|x\ra$.
\end{corollary}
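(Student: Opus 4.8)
The plan is to apply Theorem~\ref{thm31} directly, treating the pure state $\psi^A$ as the input $\rho$ and the arbitrary state $\rho^A$ as the output $\sigma$. First I would check that the hypotheses of Theorem~\ref{thm31} are met for this pairing. Writing $\psi^A=\op{\psi}{\psi}$ with $|\psi^A\ra=\sum_{x}\sqrt{p_x}|x\ra$, the matrix elements of the input are $r_{xy}=\sqrt{p_xp_y}$. Provided all $p_x>0$ (one must note the degenerate case where some $p_x=0$ separately, since then the corresponding row and column of $\rho$ must vanish and the reduced system is treated instead), these are all nonzero, so the condition $r_{xy}\neq 0$ for all $x,y$ holds. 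Crucially, the diagonal elements of the pure input are $r_{xx}=p_x=\la x|\rho|x\ra=s_{xx}$, which is exactly the matching-diagonals hypothesis $r_{xx}=s_{xx}$ required by the theorem. Hence Theorem~\ref{thm31} applies, and the conversion $(\psi^A,\gamma^A)\xrightarrow{\gpc}(\rho^A,\gamma^A)$ holds if and only if the matrix $Q^A\eqdef I^A+\sum_{x\neq y}\frac{s_{xy}}{r_{xy}}\op{x}{y}$ is positive semidefinite.

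The heart of the argument is then showing $Q^A\ge 0$. Here I would exploit the specific product structure $r_{xy}=\sqrt{p_x p_y}$ of the pure-state denominators. The components of $Q^A$ are $q_{xy}=\frac{s_{xy}}{\sqrt{p_xp_y}}$ for $x\neq y$ and $q_{xx}=1=\frac{s_{xx}}{p_x}=\frac{s_{xx}}{\sqrt{p_xp_x}}$, so in fact a single unified formula $q_{xy}=\frac{s_{xy}}{\sqrt{p_xp_y}}$ holds for \emph{all} $x,y$. This means $Q^A=D^{-1}\sigma D^{-1}$ where $D\eqdef\mathrm{diag}(\sqrt{p_1},\dots,\sqrt{p_m})$ — that is, $Q^A$ is obtained from $\rho^A$ (playing the role of $\sigma$) by conjugation with the invertible diagonal matrix $D^{-1}$. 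Since conjugation by an invertible matrix preserves positive semidefiniteness ($v^*D^{-1}\sigma D^{-1}v=(D^{-1}v)^*\sigma(D^{-1}v)\ge 0$) and $\sigma=\rho^A\ge 0$, it follows immediately that $Q^A\ge 0$.

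I expect the main obstacle to be not the positivity argument itself — which is the short congruence observation above, essentially the same leading-principal-minor computation already used in the proof of Corollary~\ref{puretomix} — but rather the bookkeeping of degenerate cases where some diagonal entry $p_x$ vanishes. When $p_x=0$, the denominator $r_{xx}=p_x$ vanishes, and positivity of $\rho^A$ forces the entire $x$-th row and column of $\sigma$ to be zero, so the corresponding energy level decouples and one restricts $Q^A$ to the support $\{x:p_x>0\}$. I would handle this by first reducing to the full-support case, noting that the claimed conversion on the smaller system lifts trivially to the full system. This is the only place care is needed; the congruence $Q^A=D^{-1}\rho^A D^{-1}$ does the rest.
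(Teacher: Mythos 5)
Your proof is correct and follows essentially the same route as the paper: reduce via Theorem~\ref{thm31} to the positivity of $Q^A$, then use the product structure $r_{xy}=\sqrt{p_x p_y}$ of the pure input to conclude $Q^A\ge 0$. The paper establishes that last step by citing Corollary~\ref{puretomix} (whose proof computes principal minors), whereas your congruence $Q^A=D^{-1}\rho^A D^{-1}$ is a cleaner packaging of the same computation; your explicit treatment of the degenerate case $p_x=0$ is extra care that the paper's two-line proof omits.
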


\begin{proof}
Since $\psi^A$ and $\rho^A$ have the same diagonals, it follows from Theorem~\ref{thm31} and the discussion above that $\left(\psi^A,\gamma^A\right)\xrightarrow{\gpc}\left(\rho^A,\gamma^A\right)$ if and only if $\psi^A$ can be converted to $\rho^A$ by time-translation covariant operations. The latter conversion is possible due to Corollary~\ref{puretomix}. This completes the proof.
\end{proof}

Lemma~\ref{lem1841} can also be used to give the precise conditions for interconversions under GPC of qubit athermality states. For this purpose,
let $\rho,\sigma,\gamma\in\md(A)$ with $|A|=2$. Denote
\be\label{18p79}
\rho=\begin{pmatrix}
r & a \\
\bar{a} & 1-r
\end{pmatrix}
\;\text{,}\;\;
\sigma=\begin{pmatrix}
s & b \\
\bar{b} & 1-s
\end{pmatrix}
\;,\;\;
\gamma=\begin{pmatrix}
g & 0 \\
0 & 1-g
\end{pmatrix}\;.
\ee 
We also denote the diagonals of the matrices above by $\r\eqdef(r,1-r)^T$, $\s\eqdef(s,1-s)^T$ and $\g=(g,1-g)^T$. We would like to find the conditions under which $(\rho,\gamma)\xrightarrow{\gpc}(\sigma,\gamma)$. Recall that if $a=0$ then we must have $b=0$ since GPC cannot generate coherence between energy levels. Therefore, the case $a=0$ has already been covered by the quasi-classical regime. Note also that the case $g=\frac12$ also corresponds to the quasi-classical case (since in this case $\rho$ and $\sigma$ can be diagonalized). We will therefore assume in the rest of this subsection that $a\neq 0$ and $g\neq\frac12$.

\begin{myt}{}
\begin{theorem}\label{thmqu}[cf.~\cite{CSHO2015}]
{\rm Let $\rho,\sigma,\gamma\in\md(A)$ be three qubit states as above and suppose $a\neq 0$ and $g\neq\frac12$. Then, for $\r\neq \g$, $(\rho,\gamma)\xrightarrow{\gpc}(\sigma,\gamma)$ if and only if $(\r,\g)\succ (\s,\g)$ and
\be\label{90}
\frac{|b|^2}{|a|^2}\leq\frac{\det\begin{pmatrix}
s & 1-r \\
g & 1-g
\end{pmatrix}\det\begin{pmatrix}
r & 1-s \\
g & 1-g
\end{pmatrix}}{\left(r-g\right)^2}
\ee
For $\r=\g$, $(\rho,\gamma)\xrightarrow{\gpc}(\sigma,\gamma)$ if and only if $\s=\g$ and $|a|\geq |b|$.}
\end{theorem}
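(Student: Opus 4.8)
The plan is to reduce everything to Lemma~\ref{lem1841} and then solve the resulting finite linear-algebra problem explicitly, exploiting that in two dimensions there is only a single off-diagonal coherence to track. By Lemma~\ref{lem1841}, $(\rho,\gamma)\xrightarrow{\gpc}(\sigma,\gamma)$ holds iff there is a column stochastic matrix $P=(p_{y|x})$ obeying the marginal constraints \eqref{1621}, namely $P\r=\s$ and $P\g=\g$, together with positivity of the matrix in \eqref{matrix18}. For a qubit this latter matrix is simply $\begin{pmatrix} p_{1|1} & b/a \\ \bar b/\bar a & p_{2|2}\end{pmatrix}$, whose positivity, given that $P$ is stochastic so that $p_{1|1},p_{2|2}\in[0,1]$, is equivalent to the single scalar inequality $p_{1|1}p_{2|2}\ge |b|^2/|a|^2$. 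Thus the whole question becomes whether there exists a stochastic $P$ satisfying \eqref{1621} whose diagonal product $p_{1|1}p_{2|2}$ reaches $|b|^2/|a|^2$.

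First I would dispose of the generic case $\r\neq\g$. Writing $P$ in column-stochastic form, so that the second row is determined by the first, the two scalar equations coming from \eqref{1621} read $p_{1|1}r+p_{1|2}(1-r)=s$ and $p_{1|1}g+p_{1|2}(1-g)=g$; the coefficient determinant of this $2\times2$ system is exactly $r-g\neq 0$, so $P$ is \emph{uniquely} determined. Cramer's rule then gives
\[
p_{1|1}=\frac{\det\begin{pmatrix} s & 1-r \\ g & 1-g\end{pmatrix}}{r-g},\qquad
p_{2|2}=\frac{\det\begin{pmatrix} r & 1-s \\ g & 1-g\end{pmatrix}}{r-g},
\]
so that $p_{1|1}p_{2|2}$ equals precisely the right-hand side of \eqref{90}. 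Because this $P$ is the unique matrix meeting the marginal constraints, a \emph{legitimate} (entrywise nonnegative) stochastic $P$ exists iff this particular $P$ has all entries in $[0,1]$, which is by definition the relative majorization $(\r,\g)\succ(\s,\g)$. Combining the admissibility of $P$ with the positivity requirement then yields exactly the stated criterion: $(\r,\g)\succ(\s,\g)$ together with \eqref{90}.

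For $\r=\g$ (i.e.\ $r=g$) the coefficient determinant vanishes and the system \eqref{1621} is consistent only when $s=g$, forcing $\s=\g$; this is also precisely what $(\g,\g)\succ(\s,\g)$ demands, since any stochastic $E$ with $E\g=\g$ maps $\g$ to $\s$. Granting $\s=\g$, sufficiency is immediate by taking $P=I$, which trivially satisfies \eqref{1621} and has $p_{1|1}p_{2|2}=1$, reducing \eqref{matrix18} to $1\ge|b|^2/|a|^2$, i.e.\ $|a|\ge|b|$. Necessity is equally clean: every stochastic $P$ has $p_{1|1},p_{2|2}\in[0,1]$, hence $p_{1|1}p_{2|2}\le 1$, so positivity of \eqref{matrix18} forces $|b|^2/|a|^2\le 1$.

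The only place demanding real care is the equivalence, in the $\r\neq\g$ branch, between the abstract existence of a stochastic matrix realizing the conversion (relative majorization) and the entrywise admissibility of the single explicit $P$ produced by Cramer's rule. The essential input making this collapse work is the \emph{uniqueness} of $P$ forced by $r\neq g$: it reduces an existential statement over all stochastic matrices to a check on one explicitly computed matrix. Everything else is a direct determinant computation, and I would verify at the end that the two displayed numerators indeed reproduce the determinants appearing in \eqref{90}.
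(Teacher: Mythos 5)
Your proof is correct and follows essentially the same route as the paper's: reduce via Lemma~\ref{lem1841} to the existence of a column stochastic $P$ obeying \eqref{1621} with $p_{1|1}p_{2|2}\ge |b|^2/|a|^2$, observe that for $r\neq g$ the marginal constraints determine $P$ uniquely (coefficient determinant $r-g$), compute it by Cramer's rule so that the diagonal product is exactly the right-hand side of \eqref{90}, and identify admissibility of that unique $P$ with $(\r,\g)\succ(\s,\g)$. If anything, your handling of the degenerate branch $\r=\g$ is slightly cleaner than the paper's, which asserts that the linear system then has a unique solution $p_{0|0}=p_{1|1}=1$ (in fact the two equations coincide and there are many solutions), whereas you correctly obtain necessity from the bound $p_{1|1}p_{2|2}\le 1$ valid for every stochastic $P$ and sufficiency from the particular choice $P=I$.
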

\end{myt}

\begin{remark}
If $s=g$ (but $r\neq g$) then the condition in~\eqref{90} can be simplified. Specifically, in this case we get that $(\rho,\gamma)$ can be converted to $(\sigma,\gamma)$ by GPC if and only if
\be
\frac{|b|^2}{|a|^2}\leq\det(\gamma)\;.
\ee
\end{remark}

\begin{proof}
From Lemma~\ref{lem1841} it follows that $(\rho,\gamma)$ can be converted to $(\sigma,\gamma)$ by GPC iff there exists a $2\times 2$ column stochastic matrix $P=\{p_{y|x}\}_{x,y\in\{0,1\}}$ that satisfies
$P\r=\s$, $P\g=\g$, and 
\be
\begin{pmatrix}
p_{0|0} & b/a\\
\bar{b}/\bar{a} & p_{1|1}
\end{pmatrix}\geq 0\;.
\ee 
Note that this last condition is equivalent to
\be\label{18p80}
\frac{|b|^2}{|a|^2}\leq p_{0|0}p_{1|1}\;.
\ee
The conditions $P\r=\s$ and $P\g=\g$ can be expressed as the following linear systems of equations
\be\label{linsys}
\begin{bmatrix}
r & 1-r \\
g & 1-g
\end{bmatrix}\begin{bmatrix}
p_{0|0}  \\
p_{0|1} 
\end{bmatrix}=\begin{bmatrix}
s  \\
g 
\end{bmatrix}
\ee
and
\be\label{linsys2}
\begin{bmatrix}
r & 1-r \\
g & 1-g
\end{bmatrix}\begin{bmatrix}
p_{1|0}  \\
p_{1|1} 
\end{bmatrix}=\begin{bmatrix}
1-s  \\
1-g 
\end{bmatrix}\;.
\ee
Note that the equations involving $p_{1|0}$ and $p_{1|1}$ follows trivially from the ones involving $p_{0|0}$ and $p_{0|1}$ since $P$ is column stochastic.
From Cramer's rule it then follows that for the case that $r\neq g$
\be
p_{0|0}=\frac{\det\begin{pmatrix}
s & 1-r \\
g & 1-g
\end{pmatrix}}{\det\begin{pmatrix}
r & 1-r \\
g & 1-g
\end{pmatrix}}
\quad\text{and}\quad
p_{1|1}=\frac{\det\begin{pmatrix}
r & 1-s \\
g & 1-g
\end{pmatrix}}{\det\begin{pmatrix}
r & 1-r \\
g & 1-g
\end{pmatrix}}\;.
\ee
Finally, substituting the above expression in~\eqref{18p80} gives~\eqref{90}. 

For the case that $r=g$ we also have $s=g$ (otherwise, $(\r,\g)\not\succ(\s,\g)$) and the linear system of equations in~\eqref{linsys} has a unique solution given by $p_{0|0}=p_{1|1}=1$. Therefore, in this case, \eqref{18p80} gives $|b|\leq |a|$. This completes the proof.
\end{proof}

From the remark below Theorem~\ref{thmqu} it follows that already in the qubit case, conversions under GPC have a certain type of discontinuity. To see this, consider the case $s=g$, and observe that the condition $|a|^2\det(\gamma)\geq|b|^2$ is stronger than the condition $|a|\geq |b|$ that one obtains if also $r=g$. In particular, since $\det(\gamma)\leq\frac14$, there exists an $\eps>0$ and $\rho,\sigma,\gamma\in\md(A)$ such that for any $\rho\in\mb^\eps(\sigma)$ (here $\mb^\eps(\sigma)$ is the ball of all density matrices that are $\eps$-close to $\sigma$) we have that $(\rho,\gamma)$ cannot be converted by GPC to $(\sigma,\gamma)$ unless $\rho=\sigma$ (up to free diagonal unitary).
As an explicit example, let 
\be
\sigma=\frac16\begin{pmatrix}
2 & \sqrt{2} \\
\sqrt{2} & 4
\end{pmatrix}
\quad\text{and}\quad
\gamma=\frac13\begin{pmatrix}
1 & 0 \\
0 & 2
\end{pmatrix}\;.
\ee
According to the theorem above, in this example,  $(\rho,\gamma)\xrightarrow{\gpc}(\sigma,\gamma)$ if and only if either $\rho=\sigma$ or 
\be
a\geq\frac{b}{\sqrt{\det(\gamma)}}=\frac12\;.
\ee
However, note that for sufficiently small $\eps>0$ the condition $\rho\in\mb^\eps(\sigma)$ would imply that $a$ cannot be too far away from $b=\sqrt{2}/6<1/2$. Therefore, for sufficiently small $\eps>0$ the condition $\rho\in\mb^\eps(\sigma)$ implies that $(\rho,\gamma)\xrightarrow{\gpc}(\sigma,\gamma)$ if and only if $\rho=\sigma$ (up to a free diagonal unitary).

\subsection{Approximate Single-Shot Conversions}

For the case that $\mf=\gpc$, the conversion distance given in~\eqref{32} can be expressed as
\ba
&d_\mf\big((\rho^A,\gamma^A)\to(\sigma^{A'},\gamma^{A'})\big)\eqdef\\ &\min_{\mE\in\cov(A\to A')}\left\{\frac12\left\|\sigma-\mE(\rho)\right\|_1\;:\;\gamma^{A'}=\mE\left(\gamma^A\right)\right\}.
\ea
Since the trace distance between two density matrices can be expressed as
\be
\frac12\left\|\sigma-\mE(\rho)\right\|_1=\min_{\substack{\Lambda\in\pos(A')\\\Lambda\geq \sigma-\mE(\rho)}}\tr\left[\Lambda\right]\;,
\ee
the conversion distance can be expressed as the following minimization problem
\be
d_\mf\big((\rho,\gamma)\to(\sigma,\gamma)\big)=\min \tr\left[\Lambda\right]
\ee
subject to:
\begin{enumerate}
\item $\Lambda^{A'}\geq \sigma^{A'}-\tr_A\left[J^{AA'}\left(\rho^T\otimes I^{A'}\right)\right]$.
\item $J^A=I^A$.
\item $\gamma^{A'}=\tr_A\left[J^{AA'}\left(\gamma^A\otimes I^{A'}\right)\right]$.
\item $\big[J^{AA'},e^{-i{H}^At}\otimes e^{iH^{A'}t}\big]=0$ for all $t\in\mbb{R}$.
\item $\Lambda\in\pos(A')$ and $J\in\pos(AA')$.
\end{enumerate}
The optimization problem above can be solved efficiently with semidefinite programs, however, when the output state is quasi-classical the conversion distance takes a much simpler form.


Consider the conversion distance from an arbitrary state $(\rho^A,\gamma^A)$ to a quasi-classical state $\big(\sigma^{A'},\gamma^{A'}\big)$. In this case, a channel $\mE\in\cptp(A\to A')$  is time-translation covariant if and only if $\mE\circ\mP=\mE$ since the output of the channel is classical (and therefore time-translation invariant). Therefore, in this case we get
\ba
&d_\gpc\left((\rho^A,\gamma^A)\to(\sigma^{A'},\gamma^{A'})\right)\\
&\eqdef \min_{\mE\in\cov(A\to A')}\left\{\frac12\left\|\sigma-\mE(\rho)\right\|_1\;:\;\gamma^{A'}=\mE\left(\gamma^A\right)\right\}\\
&\geq\min_{\mE\in\cov(A\to A')}\left\{\frac12\left\|\mP\big(\sigma-\mE(\rho)\big)\right\|_1\;:\;\gamma^{A'}=\mE\left(\gamma^A\right)\right\}\\
&=\min_{\mE\in\cov(A\to A')}\left\{\frac12\left\|\sigma-\mE\circ\mP(\rho)\right\|_1\;:\;\gamma^{A'}=\mE\left(\gamma^A\right)\right\}\\
&\geq d_\gp\left((\mP(\rho^A),\gamma^A)\to(\sigma^{A'},\gamma^{A'})\right)\;.
\ea
On the other hand, we have
\ba
&d_\gp\left((\mP(\rho^A),\gamma^A)\to(\sigma^{A'},\gamma^{A'})\right)\\
&=\min_{\mE\in\cptp(A\to A')}\left\{\frac12\left\|\sigma-\mE\circ\mP(\rho)\right\|_1\;:\;\gamma^{A'}=\mE\left(\gamma^A\right)\right\}\\
&\geq \min_{\mE\in\cptp(A\to A')}\left\{\frac12\left\|\sigma-\mP\circ\mE\circ\mP(\rho)\right\|_1\;:\;\gamma^{A'}=\mE\left(\gamma^A\right)\right\}\\
&\geq\min_{\mE\in\cov(A\to A')}\left\{\frac12\left\|\sigma-\mE(\rho)\right\|_1\;:\;\gamma^{A'}=\mE\left(\gamma^A\right)\right\}\\
&=d_\gpc\left((\rho^A,\gamma^A)\to(\sigma^{A'},\gamma^{A'})\right)\;.
\ea
Therefore, combining the two expressions above we get that
\ba\label{18p93}
&d_\gpc\left((\rho^A,\gamma^A)\to(\sigma^{A'},\gamma^{A'})\right)\\&= d_\gp\left((\mP(\rho^A),\gamma^A)\to(\sigma^{A'},\gamma^{A'})\right)\;.
\ea
One can then use the expression given in~\cite{R2016} for the conversion distance between two quasi-classical states. 

The above observation can be used to get an exact closed formula for the $\eps$-single-shot distillable athermality defined on any quantum athermality state $(\rho,\gamma)$ as in~\eqref{33} with $\mf=\gpc$.
Note that since the golden unit $(|0\lr 0|^X,\u_n^X)$ that appears in~\eqref{33} is quasi-classical, it cannot be used to define the single-shot cost of an arbitrary quantum athermality state since quasi-classical states cannot be converted by GPC to states with coherence between energy levels.  Therefore, in this subsection we only consider single-shot distillation of athermality.

\begin{myt}{}
\begin{theorem}\label{thmdis}
Let $\rho,\gamma\in\md(A)$ and $\eps\in[0,1]$. Then, the $\eps$-single-shot distillable athermality of the state $(\rho,\gamma)$ is given by
\be\label{50}
\distill_{\gpc}^{\eps}\left(\rho,\gamma\right)=D_{\min}^\eps\left(\mP(\rho)\big\|\gamma\right)\;.
\ee
\end{theorem}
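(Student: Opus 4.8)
The plan is to reduce the claim to two facts already established in the excerpt: the conversion-distance identity~\eqref{18p93} for quasi-classical targets, and the known closed-form expression~\eqref{34} for single-shot distillation under GPO. Once both are in hand, the theorem follows by direct substitution, so essentially no new analysis is required.

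First I would observe that the golden unit $(|0\lr 0|^X,\u_m^X)$ appearing in the definition~\eqref{33} of $\distill_\gpc^\eps$ is quasi-classical: the state $|0\lr 0|^X$ is diagonal in the energy eigenbasis of $X$ and hence commutes with its designated Gibbs state $\u_m^X$ (which, as noted below~\eqref{334}, is well defined for all $m\in\mbb{R}_+$). This places the target in exactly the regime covered by~\eqref{18p93}, so I would apply that identity with $\sigma^{A'}=|0\lr 0|^X$ and $\gamma^{A'}=\u_m^X$ to obtain, for every admissible $m$,
\be
d_\gpc\big((\rho,\gamma)\to(|0\lr 0|^X,\u_m^X)\big)=d_\gp\big((\mP(\rho),\gamma)\to(|0\lr 0|^X,\u_m^X)\big)\;.
\ee

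Next, since these two conversion distances agree for each fixed $m$, the feasible sets $\{m:d\leq\eps\}$ entering the suprema that define $\distill_\gpc^\eps(\rho,\gamma)$ and $\distill_\gp^\eps(\mP(\rho),\gamma)$ coincide. Feeding this into~\eqref{33} therefore yields $\distill_\gpc^\eps(\rho,\gamma)=\distill_\gp^\eps(\mP(\rho),\gamma)$. I would then invoke the GPO formula~\eqref{34}, namely $\distill_\gp^\eps(\tau,\gamma)=D_{\min}^\eps(\tau\|\gamma)$ for an arbitrary input state $\tau$, specialized to $\tau=\mP(\rho)$, to conclude $\distill_\gpc^\eps(\rho,\gamma)=D_{\min}^\eps(\mP(\rho)\|\gamma)$, which is~\eqref{50}.

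Because the substantive work lies in the derivation of~\eqref{18p93}, there is no deep obstacle remaining; the argument is an assembly of prior results. The only points that genuinely require care are the applicability checks: verifying that the golden-unit target truly qualifies as a quasi-classical output so that~\eqref{18p93} applies (it does, since both its components are diagonal in the energy eigenbasis), and confirming that~\eqref{34} holds verbatim when its input is the pinched state $\mP(\rho)$ rather than a generic $\rho$ (it does, as~\eqref{34} is stated for arbitrary input states). Finally, one should record that permitting $m\in\mbb{R}_+$ in the supremum creates no difficulty, consistent with the paper's own convention for $\u_m^X$.
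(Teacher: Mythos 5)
Your proposal is correct and follows essentially the same route as the paper's own proof: apply the identity~\eqref{18p93} to the quasi-classical golden unit $(|0\lr 0|^X,\u_m^X)$ to get $\distill_{\gpc}^{\eps}(\rho,\gamma)=\distill_{\gp}^{\eps}(\mP(\rho),\gamma)$, then invoke the first equality of~\eqref{34} with input $\mP(\rho)$. The applicability checks you spell out (quasi-classicality of the target, validity of~\eqref{34} for the pinched input, and the convention $m\in\mbb{R}_+$) are left implicit in the paper but are exactly the right points to verify.
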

\end{myt}
\begin{proof}
The relation in~\eqref{18p93} immediately implies that
\begin{align}
&\distill_{\gpc}^{\eps}\left(\rho,\gamma\right)=\log\sup_{0<m\in\mbb{R}}\nonumber\\
&\quad\quad\quad\quad\Big\{m\;:\;d_\gp\Big(\left(\mP(\rho),\gamma\right)\to\left(|0\lr 0|,\u_m\right)\Big)\leq\eps\Big\}\nonumber\\
&\quad\quad\quad\quad\quad\quad\quad=\distill_{\gp}^{\eps}\left(\mP(\rho),\gamma\right)\;.
\end{align}
Therefore, combining this with the first equality of~\eqref{34} yields~\eqref{50}. This completes the proof.
\end{proof}

\section{Quantum Athermality in the Asymptotic Regime}\label{sec5}

Unlike Gibbs preserving operations, both thermal operations and GPC operations cannot generate coherence between energy levels. This means that any number of copies of the golden unit  $(|0\lr 0|,\u)$ cannot be converted even to a single copy of an athermality state $(\rho^A,\gamma^A)$ that exhibits coherence between energy levels. However, it turns out that this type of irreversibility between the (infinite) cost to prepare the state $(\rho^A,\gamma^A)$ versus the finite rate at which it can be used to distill golden units of athermality, can be removed if we allow for a relatively small amount of coherence to be added to the system. 

This section is organized as follows. We start by showing that the distillable athermality of $(\rho^A,\gamma^A)$ equals to the Umegaki relative entropy (historically, a version of this result was first proved in~\cite{BHO+2013}, however, our proof has a pedagogical value as it is relatively short and is based on the single-shot regime). We then introduce a few new concepts, such as asymptotic scaling, energy spread, and sublinear athermality resources, in order to show how reversibility can be restored by appending the free operations with resources that are asymptotically negligible.

%

\subsection{Distillation of Quantum Athermality}

The formula given in~\eqref{50} for the $\eps$-single-shot distillable athermality can be used to derive the asymptotic distillable athermality. Specifically, we have
\be\label{asye2}
\distill_{\gpc}\left(\rho,\gamma\right)=\lim_{\eps\to 0^+}\limsup_{n\to\infty}\frac1nD_{\min}^\eps\left(\mP_n(\rho^{\otimes n})\big\|\gamma^{\otimes n}\right)\;.
\ee
We now use this observation, and obtain a closed formula for the RHS of the equation above.

\begin{myt}{}
\begin{theorem}\label{droat}(cf.~\cite{BHO+2013})
Let $(\rho,\gamma)$ be an athermality state of a quantum system $A$. Then,
\be
\distill_{\mf}\left(\rho,\gamma\right)=D\left(\rho\|\gamma\right)\;.
\ee
where $D$ is the Umegaki relative entropy, and $\mf$ stands for either $\tho$, $\cto$, or $\gpc$.
\end{theorem}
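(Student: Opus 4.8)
The plan is to establish the formula $\distill_{\gpc}(\rho,\gamma)=D(\rho\|\gamma)$ and then leverage the chain of inclusions $\tho\subset\cto\subset\gpc$ to conclude the same for all three free-operation classes. The key entry point is the closed expression~\eqref{asye2}, namely
\be
\distill_{\gpc}(\rho,\gamma)=\lim_{\eps\to 0^+}\limsup_{n\to\infty}\tfrac1n D_{\min}^\eps\big(\mP_n(\rho^{\otimes n})\big\|\gamma^{\otimes n}\big)\;,
\ee
so the entire problem reduces to computing the regularized, smoothed hypothesis-testing divergence of the \emph{pinched} state $\mP_n(\rho^{\otimes n})$ against $\gamma^{\otimes n}$. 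First I would invoke the quantum Stein's lemma (the asymptotic equipartition property for $D_{\min}^\eps$), which gives
\be
\lim_{\eps\to 0^+}\lim_{n\to\infty}\tfrac1n D_{\min}^\eps\big(\tau_n\big\|\gamma^{\otimes n}\big)=\lim_{n\to\infty}\tfrac1n D\big(\tau_n\big\|\gamma^{\otimes n}\big)
\ee
whenever the sequence $\tau_n$ is suitably well-behaved; here $\tau_n=\mP_n(\rho^{\otimes n})$. The point is that pinching commutes with the i.i.d.\ structure and does not affect the diagonal, so this regularization is controlled.

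The heart of the argument is then the identity~\eqref{18p27}, established earlier in the preliminaries, which states exactly
\be
\lim_{n\to\infty}\tfrac1n D\big(\mP_n(\rho^{\otimes n})\big\|\gamma^{\otimes n}\big)=D(\rho\|\gamma)\;.
\ee
This follows from the decomposition $D(\rho\|\gamma)=D(\mP(\rho)\|\gamma)+C(\rho)$ together with the sublinearity of coherence, $\lim_n \tfrac1n C(\rho^{\otimes n})=0$, recorded in~\eqref{016}. Chaining the Stein's-lemma step with this identity yields $\distill_{\gpc}(\rho,\gamma)=D(\rho\|\gamma)$ directly. I would present this as the achievability-plus-optimality package, since $D_{\min}^\eps$ is precisely the optimal single-shot distillation rate under GPO by~\eqref{34}, and the GPC distillation rate coincides with the GPO rate of the pinched state by Theorem~\ref{thmdis}.

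To finish, I would transfer the result across the operation classes. The upper bound $\distill_{\mf}(\rho,\gamma)\le D(\rho\|\gamma)$ for $\mf\in\{\tho,\cto\}$ is immediate from monotonicity of the relative entropy of athermality under any Gibbs-preserving covariant operation (these are weaker than $\gpc$ in the sense $\tho\subset\cto\subset\gpc$, so their distillation rates are no larger). For the matching lower bound, I would observe that the achievability of rate $D(\rho\|\gamma)$ should already hold at the level of thermal operations: the distillation protocol achieving the pinched rate can be realized by the pinching channel itself (a thermal operation by Lemma~\ref{lem2}) followed by the quasi-classical distillation of $(\mP(\rho),\gamma)$, whose rate is governed by relative majorization and matches $D(\mP(\rho)\|\gamma)$ asymptotically. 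The main obstacle I anticipate is the careful justification of the Stein-type regularization for the non-i.i.d.\ pinched sequence $\mP_n(\rho^{\otimes n})$ (which is not of product form), and ensuring the achievability protocol genuinely lives in $\tho$ rather than only in $\gpc$; the coherence-vanishing estimate~\eqref{016} is what rescues both, since it shows the pinched state is asymptotically indistinguishable in rate from $\rho^{\otimes n}$ itself.
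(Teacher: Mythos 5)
There is a genuine gap at the central step. You reduce everything to computing $\lim_{\eps\to0^+}\limsup_{n}\frac1nD_{\min}^\eps\left(\mP_n(\rho^{\otimes n})\big\|\gamma^{\otimes n}\right)$ and then invoke ``quantum Stein's lemma'' for the sequence $\tau_n=\mP_n(\rho^{\otimes n})$. But Stein's lemma is a statement about i.i.d.\ states, and $\tau_n$ is not i.i.d.: $\mP_n$ is the pinching with respect to the \emph{total} Hamiltonian $H^{A^n}$, which is strictly coarser than $\mP^{\otimes n}$, so $\mP_n(\rho^{\otimes n})$ retains coherences between product energy eigenvectors of equal total energy and is not even a product state. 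Your justification that ``pinching commutes with the i.i.d.\ structure'' is therefore false, and for a general non-i.i.d.\ sequence the regularized smoothed hypothesis-testing divergence need not coincide with the regularized Umegaki relative entropy --- that equality is precisely an AEP-type statement which would itself have to be proven. The estimate \eqref{016} only controls the relative-entropy rate, i.e.\ it yields \eqref{18p27}; it does not supply the concentration needed to replace $D$ by $D_{\min}^\eps$ for the non-i.i.d.\ sequence $\tau_n$, which is exactly the obstacle you flag but do not resolve.

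The paper's proof is arranged so that Stein's lemma is only ever applied to i.i.d.\ pairs. For the converse, it uses $\mP(\gamma)=\gamma$ and the data-processing inequality to \emph{remove} the pinching, $\distill^\eps_{\gpc}(\rho,\gamma)=D_{\min}^\eps\left(\mP(\rho)\big\|\mP(\gamma)\right)\leq D_{\min}^\eps(\rho\|\gamma)$, after which the genuine i.i.d.\ Stein's lemma gives the upper bound $D(\rho\|\gamma)$ (and hence for $\tho,\cto\subset\gpc$ as well). For achievability it uses a two-parameter block argument: fix $k$, apply $\mP_k$ to $\rho^{\otimes k}$ (a thermal operation by Lemma~\ref{lem2}), and regard $\sigma_k\eqdef\mP_k(\rho^{\otimes k})$ as a single \emph{fixed} quasi-classical state; then by \eqref{39}, $\distill_\tho(\rho,\gamma)\geq\frac1k\distill_\tho\left(\sigma_k,\gamma^{\otimes k}\right)=\frac1kD\left(\sigma_k\big\|\gamma^{\otimes k}\right)$, where the asymptotics hidden in $\distill_\tho\left(\sigma_k,\gamma^{\otimes k}\right)$ run over i.i.d.\ copies of $\sigma_k$, so only the quasi-classical i.i.d.\ result is needed; finally $k\to\infty$ via \eqref{18p27}. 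Note also a secondary imprecision in your transfer step: distilling after a \emph{single-copy} pinching, i.e.\ from $(\mP(\rho),\gamma)$, only achieves $D\left(\mP(\rho)\big\|\gamma\right)=D(\rho\|\gamma)-C(\rho)$, which is strictly smaller whenever $\rho$ carries coherence; the collective pinching $\mP_k$ with $k\to\infty$ is essential. Your ingredients (Lemma~\ref{lem2}, \eqref{016}, \eqref{18p27}) are the right ones, but they must be assembled in this double-limit order to avoid invoking an unproven non-i.i.d.\ Stein's lemma.
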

\end{myt}

\begin{proof}
Since $\mP(\gamma)=\gamma$ we get from~\eqref{50} and the data processing inequality (DPI) that
\ba\label{107}
\distill_{\gpc}^{\eps}\left(\rho,\gamma\right)&=D_{\min}^\eps\left(\mP(\rho)\big\|\mP(\gamma)\right)\\
\GG{DPI\to}&\leq D_{\min}^\eps\left(\rho\|\gamma\right)\;,
\ea
and consequently
\ba
\distill_{\gpc}\left(\rho,\gamma\right)&\leq\lim_{\eps\to 0^+}\limsup_{n\to\infty}\frac1nD_{\min}^\eps\left(\rho^{\otimes n}\big\|\gamma^{\otimes n}\right)\\
&=D(\rho\|\gamma)\;,
\ea
where the equality follows from the quantum Stein's lemma. Since $\tho$ and $\cto$ are subsets of $\gpc$ the above inequality also holds if we replace $\gpc$ with $\tho$ or $\cto$.

To get the opposite inequality, fix $k\in\mbb{N}$ and apply the pinching channel $\mP_k\in\cto(A^k\to A^k)$ on $k$ copies of $\rho$. The resulting state, $\sigma_k\eqdef\mP_k(\rho^{\otimes k})$, is quasi-classical. Now, from~\eqref{39} we get
\ba
\distill_{\tho}\left(\rho,\gamma\right)&\geq \frac1k\distill_{\tho}\left(\rho^{\otimes k},\gamma^{\otimes k}\right)\\
&\geq \frac1k\distill_{\tho}\left(\sigma_k,\gamma^{\otimes k}\right)\;,
\ea
where in the second inequality we used the fact that the pinching channel is a thermal operation (see Lemma~\ref{lem2}) so that by definition, $\sigma_k$ cannot have a higher distillable rate than $\rho^{\otimes k}$. Since $\sigma_k$ is a quasi-classical state we have that $\distill_{\tho}\left(\sigma_k,\gamma^{\otimes k}\right)$ equals $D\left(\sigma_k\big\|\gamma^{\otimes k}\right)$. Therefore, the equation above gives
\be
\distill_{\tho}\left(\rho,\gamma\right)
\geq\frac1kD\left(\mP_k\left(\rho^{\otimes k}\right)\big\|\gamma^{\otimes k}\right)\;.
\ee
Now, since the above inequality holds for all $k\in\mbb{N}$ we conclude that
\ba
\distill_{\tho}\left(\rho,\gamma\right)&\geq\limsup_{k\to\infty}\frac1kD\left(\mP_k\left(\rho^{\otimes k}\right)\big\|\gamma^{\otimes k}\right)\\
\GG{\eqref{18p27}\to}&=D(\rho\|\gamma)\;.
\ea
This completes the proof.
%
%
%
%
\end{proof}


\subsection{Scaling of Time-Translation Asymmetry}

In this subsection we show that the coherence of $n$ copies of a states grows logarithmically with $n$.
Let $A$ be a physical system with Hamiltonian $H\in\pos(A)$ and a state $\psi\in\pure(A)$ given by 
$$
H^A=\sum_{x=1}^ma_x|x\lr x|\quad\text{and}\quad|\psi\ra=\sum_{x=1}^m\sqrt{p_x}|x\ra\;,
$$ 
where $m=|A|$. 
For any $n\in\mbb{N}$, the state $\psi^{\otimes n}$ has the form 
\ba
|\psi\ra^{\otimes n}&=\sum_{x^n\in[m]^n}\sqrt{p_{x^n}}|x^n\ra\\
&=\sum_{x^n\in[m]^n}2^{-\frac n2\big(H(\t(x^n))+D\left(\t(x^n)\|\p\right)\big)}|x^n\ra
\ea
where we used~\eqref{tpxn}. For any type $\t\in\type(n,m)$ define
\be\label{124}
|\t\ra^{A^n}\eqdef \frac{1}{{n\choose nt_1,...,nt_m}^{1/2}}\sum_{x^n\in x^n(\t)}|x^n\ra\;,
\ee
where the sum runs over all sequences $x^n$ of the same type $\t$.
With the above notations
\be\label{psi15}
|\psi\ra^{\otimes n}=\sum_{\t\in\type(n,m)}\sqrt{r_{\t,n}}|\t\ra^{A^n}
\ee
where
\be
r_{\t,n}\eqdef |x^n(\t)|2^{-n\big(H(\t)+D\left(\t\|\p\right)\big)}\;.
\ee
Note that the vectors $|\t\ra^{A^n}$ are eigenvectors of the Hamiltonian of system $A^n$. Specifically,
\be\label{15p106}
H^{\otimes n}|\t\ra^{A^n}=n\sum_{x=1}^mt_xa_x|\t\ra^{A^n}\;,
\ee
so that the energy in the state $|\t\ra^{A^n}$ is $n$ times the average energy with respect to the type $\t$. In the generic case,  the energy eigenvalues $\{a_1,...,a_m\}$ are rationally independent; i.e. for any set of $m$ integers $\ell_1,...,\ell_m\in\mbb{Z}$ we have 
\be
\ell_1a_1+\cdots+\ell_ma_m=0\;\iff \;\ell_1=\cdots=\ell_m=0\;.
\ee
Under this mild assumption (which we will \emph{not} assume, but still worth mentioning), each type in $\type(n,m)$ determines uniquely the energy of the system. 

Since each $|\t\ra^{A^n}$ is an energy eigenstate, it follows from~\eqref{psi15} that we can always write $|\psi\ra^{\otimes n}$ as a linear combination of $|\type(n,m)|\leq (n+1)^m$ energy eigenstates.  In other words, the coherence of $|\psi\ra^{\otimes n}$ can be compressed into an $(n+1)^m$ dimensional vector (which is polynomial in $n$). 

The observation above has the following consequence. In Corollary~\ref{puretomixgpc} we saw that for any mixed state in $\md(A)$ there exists a pure state in $\pure(A)$ that can be converted to it by GPC. Combining this with the above observation implies that the pure state coherence cost of preparing a state $\rho^{\otimes n}\in\md(A^n)$ cannot exceed $m\log(n+1)$, and specifically we have
\be\label{108}
C\left(\rho^{\otimes n}\right)\leq m\log(n+1)
\ee
where $C$ is the coherence measure defined in~\eqref{coherence}.
Therefore, the asymmetry cost rate, i.e. coherence cost per copy of $\rho$, cannot exceed $m\frac{\log(n+1)}{n}$ which goes to zero as $n\to\infty$. On the other hand, the non-uniformity cost (i.e. quasi-classical athermality cost) does \emph{not} go to zero in the asymptotic limit since the energy of $\rho^{\otimes n}$ grows linearly with $n$.

To summarize, athermality comprise of two types of resources, namely nonuniformity (also known as informational non-equilibrium) and time-translation asymmetry (or coherence in short). Therefore, the athermality asymptotic cost of an athermality state also comprise of two types, nonuniformity cost and coherence cost. The latter however goes to zero in the asymptotic limit, and therefore needs to be rescaled differently. This makes the QRT of athermality very subtle, and consequently several important questions in the theory are still open (see Sec.~\ref{sec6}).

\subsection{The Energy Spread}

The energy spread of a given pure state $\psi\in\pure(A)$ is defined as the difference between the maximal and minimal energies that appear when writing $\psi$ as a superposition of energy eigenvectors.
In the discussion above we saw that $n$ copies of a state $\psi\in\pure(A)$ can be expressed as a linear combination of no more that $(n+1)^m$ energy eigenvectors. Among these energy eigenvectors are 
the zero energy eigenvector (corresponding to the type $\t=(1,0,...,0)^T$) and the maximal energy eigenvector (corresponding to the type $\t=(0,...,0,1)^T$). Therefore, since the energy in the decomposition~\eqref{psi15} spreads from zero to $na_m$ (where $a_m$ is the maximal energy of a single copy of system $A$), we conclude that the energy spread of $\psi^{\otimes n}$ is $na_m$.

The energy spread can be reduced drastically if one allows for a small deviation from the state $\psi^{\otimes n}$. Explicitly, 
for any $\eps>0$ we can split $|\psi^{\otimes n}\ra$ into two parts
\be\label{splitp}
|\psi\ra^{\otimes n}=\sum_{\t\in\ms_{n,\eps}}\sqrt{r_{\t,n}}|\t\ra^{A^n}+\sum_{\t\in\ms_{n,\eps}^c}\sqrt{r_{\t,n}}|\t\ra^{A^n}\;,
\ee
where
\be\label{msn}
\ms_{n,\eps}\eqdef\big\{\t\in\type(n,m)\;:\;\frac12\|\t-\p\|_1\leq\eps\big\}\;,
\ee
and $\ms_{n,\eps}^c$ is the complement of $\ms_{n,\eps}$ in $\type(n,m)$.
By substituing the bounds in~\eqref{738} into the definition of the coefficients $r_{\t,n}$ we get that
\be
\frac1{(n+1)^m}2^{-nD(\t\|\p)}\leq r_{\t,n}\leq 2^{-nD(\t\|\p)}\;.
\ee
Therefore, the fidelity of $|\psi^{\otimes n}\ra$ with the second term on the RHS of~\eqref{splitp} is given by
\ba\label{pinsker}
\sum_{\t\in\ms_{n,\eps}^c}r_{\t,n}&\leq \sum_{\t\in\ms_{n,\eps}^c}2^{-nD(\t\|\p)}\\
\GG{Pinsker's\; inequality\to}&\leq \sum_{\t\in\ms_{n,\eps}^c}2^{-2n\eps^2}\\
&\leq 2^{-2n\eps^2}\big|\type(n,m)\big|\\
\GG{\eqref{typeb}\to}&\leq 2^{-2n\eps^2}(n+1)^m\\
&\xrightarrow{n\to\infty}0\;.
\ea
That is, for any $\eps>0$ and sufficiently large $n$, the state $|\psi\ra^{\otimes n}$ can be made arbitrarily close to the state
\be\label{18p112}
|\psi_\eps^n\ra\eqdef\frac1{\sqrt{\nu_\eps}}\sum_{\t\in\ms_{n,\eps}}\sqrt{r_{\t,n}}|\t\ra^{A^n}
\ee
where
$
\nu_\eps\eqdef\sum_{\t\in\ms_{n,\eps}}r_{\t,n}
$ is the normalization factor.
Now, from~\eqref{15p106}  the energy of any state $|\t\ra^{A^n}$ with type $\t\in\ms_{n,\eps}$ is $\mu_\t\eqdef n\sum_{x=1}^mt_xa_x$.
Expressing $\t=\p+\r$ we get that $\frac12\|\r\|_1\leq\eps$ and
\be
|\mu_\t-\mu_\p|\leq n\sum_{x=1}^ma_x|t_x-p_x|=n\sum_{x=1}^ma_x|r_x|\leq 2n\eps\sum_{x=1}^ma_x.
\ee
Therefore, for any two types $\t,\t'\in\type(n,m)$ that are $\eps$-close to $\p$ we have
\be\label{18p107}
|\mu_\t-\mu_{\t'}|\leq 4n\eps\sum_{x=1}^ma_x
\ee
In other words, the energy spread of the state $|\psi^n_\eps\ra$ is no greater than $4n\eps\sum_{x=1}^ma_x$.

Note that by taking $\eps>0$ sufficiently small we can make the energy spread $4n\eps\sum_{x=1}^ma_x$ much smaller than $na_m$. However, we still get that the energy spread of $\psi_\eps^n$ is linear in $n$. We show now that there exist states in $\pure(A^n)$ that are very close to $\psi^{\otimes n}$ but with energy spread that is sublinear in $n$.

\begin{lemma}\label{lem:nchin}
Let $\psi\in\pure(A)$ and $\alpha\in\left(1/2,1\right)$. Then, there exists a sequence of pure state $\{\chi_n\}_{n\in\mbb{N}}$ in $\pure(A^n)$ with the following properties:
\begin{enumerate}
\item The limit
\be\label{18p108}
\lim_{n\to\infty}\left\|\psi^{\otimes n}-\chi_n\right\|_1=0\;.
\ee
\item The energy spread of $\chi_n$ is no more than $4n^{\alpha}\sum_{x=1}^ma_x$.
\end{enumerate}  
\end{lemma}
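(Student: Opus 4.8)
The plan is to reuse the truncation construction of \eqref{18p112}, but now with a cutoff $\eps$ that shrinks with $n$ at a carefully tuned rate. First I would set $\eps_n\eqdef n^{\alpha-1}$ and define $\chi_n\eqdef|\psi^n_{\eps_n}\ra\la\psi^n_{\eps_n}|$, where $|\psi^n_{\eps_n}\ra$ is the state in \eqref{18p112} obtained by keeping only the types lying in $\ms_{n,\eps_n}$. Property 2 is then immediate, since $\chi_n$ is by construction supported on a subset of the energy eigenstates $\{|\t\ra^{A^n}\}_{\t\in\type(n,m)}$, whose number is bounded by $(n+1)^m$ through \eqref{typeb}. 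Property 3 follows just as directly: every type retained in $\ms_{n,\eps_n}$ is $\eps_n$-close to $\p$, so the energy-spread estimate \eqref{18p107} applies with $\eps=\eps_n$ and gives a spread of at most $4n\eps_n\sum_{x=1}^m a_x=4n^{\alpha}\sum_{x=1}^m a_x$, exactly as required.

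The remaining work is Property 1. For the two pure states $\psi^{\otimes n}$ and $\chi_n$ the trace distance is governed by their overlap via $\tfrac12\|\psi^{\otimes n}-\chi_n\|_1=\sqrt{1-|\la\psi^{\otimes n}|\psi^n_{\eps_n}\ra|^2}$, and a one-line computation using the orthonormality of the $|\t\ra^{A^n}$ gives $|\la\psi^{\otimes n}|\psi^n_{\eps_n}\ra|^2=\nu_{\eps_n}=\sum_{\t\in\ms_{n,\eps_n}}r_{\t,n}$. Hence it suffices to show that the discarded weight $\sum_{\t\in\ms_{n,\eps_n}^c}r_{\t,n}$ vanishes. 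I would simply rerun the estimate \eqref{pinsker} with the $n$-dependent cutoff: Pinsker's inequality bounds each discarded coefficient by $r_{\t,n}\le 2^{-nD(\t\|\p)}\le 2^{-2n\eps_n^2}$, and summing over at most $(n+1)^m$ types yields
\be
\sum_{\t\in\ms_{n,\eps_n}^c}r_{\t,n}\le 2^{-2n\eps_n^2}(n+1)^m=2^{-2n^{2\alpha-1}}(n+1)^m\;.
\ee

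The crux of the argument --- and the point where the hypothesis $\alpha\in(1/2,1)$ enters twice --- is the exponent $n^{2\alpha-1}$ above. The bound $\alpha<1$ is what makes the energy spread $4n^{\alpha}\sum_x a_x$ sublinear in Property 3, while $\alpha>1/2$ is precisely what forces $2\alpha-1>0$, so that $n^{2\alpha-1}\to\infty$ and the exponential suppression $2^{-2n^{2\alpha-1}}$ overwhelms the polynomial type count $(n+1)^m$; the expression then tends to zero and establishes \eqref{18p108}. I expect the main obstacle to be exactly this tension in the choice of $\eps_n$: it must decay fast enough to keep the spread sublinear, yet slowly enough that the large-deviation factor $2^{-2n\eps_n^2}$ still beats the number of types, and $\alpha=\tfrac12$ is the borderline where the balance breaks. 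A minor loose end I would dispatch separately is the nonemptiness of $\ms_{n,\eps_n}$: it always contains a type within $m/(2n)$ of $\p$ in $\tfrac12\|\cdot\|_1$, which is $\le\eps_n$ once $n\ge(m/2)^{1/\alpha}$, and for the finitely many smaller $n$ one may take $\chi_n$ to be the single nearest-type eigenstate (zero spread), which affects neither Property 2 nor Property 3 nor the limit in Property 1.
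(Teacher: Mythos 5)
Your proposal is correct and follows essentially the same route as the paper's proof: the same cutoff $\eps_n=n^{\alpha-1}$, the same truncated state $\chi_n=\psi^n_{\eps_n}$ from \eqref{18p112}, the Pinsker-based estimate \eqref{pinsker} rerun with the $n$-dependent cutoff for Property 1, the type-count bound \eqref{typeb} for Property 2, and \eqref{18p107} for Property 3. Your additional touches (the explicit fidelity-to-trace-distance step for pure states and the nonemptiness of $\ms_{n,\eps_n}$ for small $n$) are details the paper leaves implicit, and they are handled correctly.
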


\begin{remark}
Note that due to the inequality $\type(n,m)\leq (n+1)^m$ it follows that any pure state in $A^n$, including $|\chi_n\ra$, can be expressed as a linear combination of no more than $(n+1)^m$ energy eigenstates.
\end{remark}

\begin{proof}
Set $\eps_n\eqdef n^{\alpha-1}$. Since $\alpha\in\left(\frac12,1\right)$ we have $\lim_{n\to\infty}\eps_n=0$ and $\lim_{n\to\infty}n\eps_n^2=\infty$. The latter implies that if we replace $\eps$ in~\eqref{pinsker} with $\eps_n$ we still get the zero limit of~\eqref{pinsker}. Hence, the pure state $\chi_n\eqdef\psi_{\eps_n}^n$ satisfies~\eqref{18p108}.
 Finally, from~\eqref{18p107} we get that the energy spread of $\chi_n$ cannot exceed 
\be
4n\eps_n\sum_{x=1}^ma_x=4n^{\alpha}\sum_{x=1}^ma_x\;.
\ee
This completes the proof.
\end{proof}


\subsection{Sublinear Athermality Resources}

The lemma above asserts that the state $\psi^{\otimes n}$ is very close to a state $\chi_n$, whose energy spread is sublinear in $n$. However,  the average energy $\la\chi_n|H^{\otimes n}|\chi_n\ra$ grows linearly in $n$. This is consistent with our next assumption that systems whose energy grows sub-linearly in $n$ can be viewed as asymptotically negligible resources.  
%

\begin{definition}
A sublinear athermality resource (SLAR) is a sequence of quantum athermality systems $\{R_n\}_{n\in\mbb{N}}$, such that $|R_n|$ grows polynomially with $n$, and there exists two constants independent of $n$, $0\leq \alpha<1$ and $c>0$, such that
\be
\left\|H^{R_n}\right\|_{\infty}\leq cn^\alpha\quad\quad\forall\;n\in\mbb{N}\;.
\ee
\end{definition}

The key assumption in the definition above is that the energy of systems $R_n$ grows sublinearly with $n$. Therefore, in the asymptotic limit in which $n\to\infty$ the resourcefulness of any sequence of athermality states $\big\{(\omega^{R_{n}},\gamma^{R_n})\big\}_{n\in\mbb{N}}$ becomes negligible relative to the resourcefulness of $n$ copies of the golden unit $(|0\lr 0|^A,\u^A)$. Specifically, in Appendix~\ref{appendixb} we show that the distillation rate of athermality as given in Theorem~\ref{droat} does not change if we replace CTO (or GPC) by CTO+SLAR (or GPC+SLAR). While this  small amount of an athermality resource does not change the distillation rate, we will see now that it does change the cost rate and thereby sufficient to restore reversibility.

\subsection{Cost of Pure States}

For any athermality system $R$ (i.e. system $R$ has a well define Hamiltonian $H^R$ and a Gibbs state $\gamma^R$) we define the $R$-assisted conversion distance of one athermality state $(\rho^A,\gamma^A)$ to another athermality state $(\sigma^B,\gamma^B)$ as
\ba\label{129}
&d_{\cto}^R\Big(\left(\rho^A,\gamma^A\right)\to \left(\sigma^B,\gamma^B\right)\Big)\\
&\eqdef\inf_{\omega\in\md(R)}
d_{\cto}\Big(\left(\rho^A\otimes\omega^{R},\gamma^{AR}\right)\to \left(\sigma^B,\gamma^B\right)\Big)\;.
\ea 
That is, $d_{\cto}^R$ is the smallest distance that $(\rho^A,\gamma^A)$ can be reached by CTO to $(\sigma^B,\gamma^B)$ with the help of a system $R$, whose Hamiltonian $H^R$ (or equivalently its Gibbs state $\gamma^R$) is fixed. 
With this at hand, we define the $R$-assisted $\eps$-cost of  $(\rho^A,\gamma^A)$ to be
\begin{align*}
&\cost_\cto^{\eps,R}\left(\rho^A,\gamma^A\right)\eqdef\\
&\log\inf_{0<m\in\mbb{R}}\left\{m\;:d_{\cto}^R\left((|0\lr 0|^X,\u_m^X)\to (\rho^A,\gamma^A)\right)\leq\eps\right\}\;.
\end{align*}
The type of free operations that we consider here are CTO assisted with SLAR. We therefore set in this subsection $\mf$ to be CTO+SLAR. Using the definitions above, we define the asymptotic cost of a state $(\rho^A,\gamma^A)$ under $\mf$ as
\ba
&\cost_\mf\left(\rho^A,\gamma^A\right)\eqdef\\
&\inf_{\{R_n\}}\lim_{\eps\to 0^+}\liminf_{n\to\infty}\frac1n\cost_\cto^{\eps,{R_n}}\left(\rho^{\otimes n},\gamma^{\otimes n}\right)
\ea
where the infimum is over all SLARs $\{R_n\}_{n\in\mbb{N}}$.
%
%

\begin{myt}{}
\begin{theorem}\label{purecostf}
Let $(\psi^A,\gamma^A)$ be an athermality state with $\psi\in\pure(A)$. Then,
\be
\cost_\mf\left(\psi^A,\gamma^A\right)=D\left(\psi^A\big\|\gamma^A\right)\;,
\ee
where $D$ is the Umegaki relative entropy.
\end{theorem}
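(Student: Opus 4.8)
The plan is to prove the two matching bounds $\cost_{\mf}(\psi^A,\gamma^A)\ge D(\psi\|\gamma)$ (the converse) and $\cost_{\mf}(\psi^A,\gamma^A)\le D(\psi\|\gamma)$ (achievability), where $\mf$ is $\cto$ assisted with SLAR. The converse is a standard second-law argument, while the achievability is where the sublinear resources do the essential work, via the decomposition $D(\psi\|\gamma)=D(\mP(\psi)\|\gamma)+C(\psi)$ of~\eqref{18p26} into a non-uniformity part and a (cheap) coherence part.

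\emph{Converse.} I would use monotonicity of the Umegaki relative entropy $D(\cdot\|\gamma)$ under Gibbs-preserving maps together with the negligibility of any SLAR. Fix a SLAR $\{R_n\}$ and a state $\omega^{R_n}$. Any $\eps$-protocol maps a golden unit $(|0\lr 0|^X,\u_m^X)$ of size $m$ together with $(\omega^{R_n},\gamma^{R_n})$, by a free (hence Gibbs-preserving) channel, to a state $\eps$-close to $\psi^{\otimes n}$. Data processing and additivity over tensor products give $\log m + D(\omega^{R_n}\|\gamma^{R_n})\ge D(\text{output}\|\gamma^{\otimes n}\otimes\gamma^{R_n})$. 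Writing $\log\gamma^{R_n}=-\beta H^{R_n}-\log Z_n$ and using $-H(\omega)\le 0$ yields $D(\omega^{R_n}\|\gamma^{R_n})\le \beta\|H^{R_n}\|_\infty+\log|R_n|=O(n^\alpha)+O(\log n)=o(n)$, since $\alpha<1$ and $|R_n|$ is polynomial. Combining with asymptotic continuity of the relative entropy (the per-copy correction is $O(\eps\log\|\gamma^{-1}\|_\infty)$ and vanishes as $\eps\to 0$) gives $\tfrac1n\log m\ge D(\psi\|\gamma)-\delta(\eps)-o(1)$; letting $n\to\infty$ then $\eps\to 0$ proves $\cost_{\mf}\ge D(\psi\|\gamma)$.

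\emph{Achievability.} By Lemma~\ref{lem:nchin}, $\psi^{\otimes n}$ is trace-norm close to a state $\chi_n$ supported on at most $(n+1)^m$ energy eigenstates and, crucially, with energy spread at most $4n^\alpha\sum_x a_x$, i.e. sublinear in $n$; so it suffices to prepare $\chi_n$. The protocol has two stages. \emph{Stage one} prepares the quasi-classical $\mP(\chi_n)$ from a golden unit using the reversibility of the quasi-classical (non-uniformity) resource theory: since $\mP(\chi_n)\approx\mP_n(\psi^{\otimes n})$ and $\tfrac1n D(\mP_n(\psi^{\otimes n})\|\gamma^{\otimes n})\to D(\psi\|\gamma)$ by~\eqref{18p27} (the coherence regularizes to zero,~\eqref{016}), this costs a golden unit of size $m$ with $\tfrac1n\log m\approx D(\psi\|\gamma)$. \emph{Stage two} coherifies $\mP(\chi_n)$ into $\chi_n$ using a coherent reference. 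Because all energies occurring in $\chi_n$ lie in a window of width $O(n^\alpha)$, after an energy shift (which leaves the Gibbs state, and hence the resource, invariant) the reference $R_n$ need only resolve that window, so it can be taken with $\|H^{R_n}\|_\infty=O(n^\alpha)$ and polynomial dimension, i.e. a bona fide SLAR. A time-translation covariant operation on $A^nR_n$ then transfers coherence from $R_n$ to the system, producing $\tilde\chi_n$ with $\|\tilde\chi_n-\chi_n\|_1\to 0$; together with $\chi_n\to\psi^{\otimes n}$ this prepares $\psi^{\otimes n}$ within vanishing error at rate $D(\psi\|\gamma)$, giving $\cost_{\mf}\le D(\psi\|\gamma)$.

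The main obstacle --- and the step where the flawed argument of~\cite{BHO+2013} must be repaired --- is making the coherification of stage two rigorous with a genuinely sublinear-energy reference. The key quantitative point is that the quality of reference frame needed to simulate the non-covariant, coherence-creating preparation by a \emph{covariant} operation is controlled by the \emph{energy spread} of the target, not by its (linear) absolute energy; Lemma~\ref{lem:nchin} is engineered precisely so that $\chi_n$ has sublinear spread, so a reference of sublinear energy --- asymptotically free as a SLAR --- suffices and the transfer error vanishes. Establishing a covariant-simulation bound whose error scales with the ratio of the target's energy spread to the reference's energy range is the technical crux; the remaining accounting (continuity of $\mP$, reversibility of the quasi-classical theory, and the estimate $D(\omega^{R_n}\|\gamma^{R_n})=o(n)$) is routine.
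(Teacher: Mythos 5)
Your converse is essentially sound (data processing of $D(\cdot\|\cdot)$ under Gibbs-preserving channels, the estimate $D(\omega^{R_n}\|\gamma^{R_n})\le\beta\|H^{R_n}\|_\infty+\log|R_n|=o(n)$, and asymptotic continuity); the paper does not even spell that direction out, since it follows from $\cost_\mf\ge\distill_\mf$ together with the lemma of Appendix~\ref{appendixb}. The genuine gap is in your achievability, and it is precisely the step you yourself flag as ``the technical crux'': the coherification of stage two is asserted, not proved. Worse, the route you sketch does not obviously converge: a covariant-simulation bound whose error scales like the ratio of the target's energy spread to the reference's energy range gives nothing here, because you take both to be $O(n^\alpha)$, so that ratio is $O(1)$ rather than vanishing. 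Your stage one also carries an unacknowledged debt: you price the preparation of the \emph{mixed} quasi-classical state $\mP(\chi_n)$ by appealing to~\eqref{18p27}, but~\eqref{18p27} is a statement about the Umegaki relative entropy, whereas the cost of a quasi-classical state is governed by $D_{\max}^{\eps}$ (cf.~\eqref{34}); since $\{\mP_n(\psi^{\otimes n})\}_n$ is not an i.i.d.\ sequence, identifying its asymptotic cost with the regularized Umegaki quantity requires an AEP-type argument you do not supply. Finally, turning the mixed state $\mP(\chi_n)$ back into the pure state $\chi_n$ by a covariant channel needs justification in its own right: coherence must be injected \emph{and} the classical randomness removed simultaneously.

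The paper's proof closes exactly this gap with an exact, error-free construction that bypasses both issues. Instead of $\mP(\chi_n)$, it prepares the \emph{pure} eigenstate $|z^n\rl z^n|$, where $z^n$ is a sequence of the minimal-energy type among the types retained in $\chi_n$; its cost is a one-line computation, $D_{\max}\left(|z^n\rl z^n|\,\big\|\,\gamma^{\otimes n}\right)=-\log\la z^n|\gamma^{\otimes n}|z^n\ra\to nD(\psi\|\gamma)$, with no AEP needed. All of the coherence of $\chi_n=\sum_{\t}\sqrt{q_\t}\,|\t\ra$ is instead carried by the SLAR state $\phi^{R_n}=\sum_j\sqrt{q_j}\,|j\ra$, where $H^{R_n}$ is chosen with spectrum $\{\mu_j-\mu_1\}$ mirroring the (sublinearly spread) energies $\mu_1\le\cdots\le\mu_{k_n}$ appearing in $\chi_n$. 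Then $\phi^{R_n}\otimes|z^n\ra$ and $|1\ra^{R_n}\otimes|\chi_n\ra$ are two pure states with \emph{identical} energy distributions: branch $j$ has energy $(\mu_j-\mu_1)+\mu_1=\mu_j$ with amplitude $\sqrt{q_j}$ in both. Hence an energy-conserving permutation unitary---an exact thermal operation, with no error term to control---maps one to the other; it simply swaps the which-energy information between $R_n$ and $A^n$. The whole upper bound then reduces to the classical $D_{\max}$ computation above. If you wish to salvage your two-stage protocol you would need (i) an AEP for the pinched (non-i.i.d.) sequence to price stage one, and (ii) an explicit covariant re-cohering map for stage two; the paper's choice of a pure eigenstate as the classical intermediate makes both unnecessary.
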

\end{myt}

\begin{proof}
Let $\ms_{n,\eps}$ be be the set of types given in~\eqref{msn} and set $\ms_n\eqdef\ms_{n,\eps_n}$ with $\eps_n\eqdef n^{\alpha-1}$. Let also $\{\chi_n\}_{n\in\mbb{N}}$ be the sequence of pure states that satisfies all the properties outlined in Lemma~\ref{lem:nchin}. In particular, each $\chi_n$ is very close to $\psi^{\otimes n}$ (for $n$ sufficiently large) and the energy spread of $\chi_n$ is given by $4n^\alpha\sum_{x=1}^ma_x$ for some $\alpha\in(\frac12,1)$. Recall that each $\chi_n$ has the form (cf.~\eqref{18p112})
\be\label{chisuper}
|\chi_n\ra=\sum_{\t\in\ms_n}\sqrt{q_{\t}}|\t\ra^{A^n}\;,
\ee
where $\{q_\t\}$ are some coefficients in $\mbb{R}_+$ (that form a probability distribution over the set of types in $\ms_n$).
Let $k_n\eqdef|\ms_n|$ be the number of terms in the superposition above (hence $k_n\leq (n+1)^m$), and let $\{\mu_j\}_{j=1}^{k_n}$ be the set of all energy eigenvalues of the Hamiltonian $H^{\otimes n}$ that corresponds to all the energy eigenvectors $\{|\t\ra^{A^n}\}_{\t\in\ms_n}$. That is, each $j\in[k_n]$ corresponds exactly to one type $\t$ that appears in the superposition~\eqref{chisuper}. Although the energies eigenvalues $\{\mu_j\}$ depend also on $n$, we did not add a subscript $n$ to ease on the notations. W.l.o.g.\ we also assume that $\mu_1\leq\cdots\leq\mu_{k_n}$, so that its energy spread $\mu_{k_n}-\mu_1\leq 4n^\alpha\sum_{x=1}^ma_x$ (see Lemma~\ref{lem:nchin}). We will also denote by $\t^{\min,n}$ the type in $\ms_n$ that corresponds to the smallest energy $\mu_1$.

Let $R_n$ be a $k_n$-dimensional quantum (reference) system whose Hamiltonian has non-degenerate spectrum given by
\be
H^{R_n}=\sum_{j=1}^{k_n}(\mu_j-\mu_1)|j\lr j|^{R_n}\;.
\ee
Note that the Hamiltonian $H^{R_n}$ has the same eigenvalues as the energies that appear in $\chi_n$ shifted by $\mu_1$. Set $\lambda_j\eqdef\mu_j-\mu_1$ to be the $j$th eigenvalue of $H^{R_n}$, and observe that 
\be
0=\lambda_1\leq \lambda_2\leq\cdots\leq \lambda_k\leq 2n^{\alpha}\sum_{x=1}^ma_x\;.
\ee 
Let $z^n\in[m]^n$ be a sequence of type $\t^{\min,n}$ so that $H^{\otimes n}|z^n\ra^{A^n}=\mu_1|z^n\ra^{A^n}$. Let also
\be
|\phi^{R_n}\ra\eqdef \sum_{j=1}^{k}\sqrt{q_j}|j\ra^{R_n}
\ee
where $q_{j}\eqdef q_{\t}$ with $\t$ being the type that corresponds to $j$; i.e.\ $\t$ is the type satisfying $H^{\otimes n}|\t\ra^{A^n}=\mu_j|\t\ra^{A^n}$. By construction, $\{R_n\}_{n\in\mbb{N}}$ is a SLAR, and the pure state
\be\label{136}
\phi^{R_n}\otimes|z^n\lr z^n|^{A^n}
\ee
has the exact same energy distribution as the pure state
\be\label{137}
|1\lr 1|^{R_n}\otimes \chi_n^{A^n}
\ee
(recall that $|1\ra^{R_n}$ corresponds to the zero energy of system $R_n$). Hence, the above two states are equivalent resources and can be converted from one to the other by reversible thermal operations (i.e.\ an energy preserving unitary). We now use this resource equivalency to compute the cost of $\psi^{\otimes n}$ in terms of the cost of the quasi-classical state $|z^n\lr z^n|$.

Let $\eps\in(0,1/2)$ and let $n\in\mbb{N}$
 be sufficiently large such that $\psi^{\otimes n}$ is $\eps$-close to $\chi_n$. Therefore, any positive real number $0<m\in\mbb{R}$ that satisfies 
 \be
 d_{\cto}^{R_n}\left((|0\lr 0|^X,\u_m^X)\to (\chi_n^{A^n},\gamma^{A^n})\right)\leq\eps
 \ee
also satisfies 
 \be
 d_{\cto}^{R_n}\left((|0\lr 0|^X,\u_m^X)\to (\psi^{\otimes n},\gamma^{A^n})\right)\leq 2\eps\;.
 \ee
 In particular, this means that
\be
\cost_{\cto}^{2\eps,R_n}\left(\psi^{\otimes n},\gamma^{A^n}\right)\leq\cost_{\cto}^{\eps,R_n}\left(\chi_n^{A^n},\gamma^{A^n}\right)\;.
\ee
We will therefore focus now on bounding the expression on the RHS above.

By adding the resource $(|1\lr 1|^{R_n},\gamma^{R_n})$ we can only increase the cost. Therefore,
\ba
\cost_{\cto}^{\eps,R_n}&\left(\chi_n^{A^n},\gamma^{A^n}\right)\\
&\leq\cost_{\cto}^{\eps,R_n}\left(|1\lr 1|^{R_n}\otimes\chi_n^{A^n},\gamma^{R_nA^n}\right)\\
&=\cost_{\cto}^{\eps,R_n}\left(\phi^{R_n}\otimes|z^n\lr z^n|^{A^n},\gamma^{R_nA^n}\right)\\
&\leq \cost_{\cto}^{\eps}\left(|z^n\lr z^n|^{A^n},\gamma^{A^n}\right)\\
&=D_{\max}^\eps\left(|z^n\lr z^n|^{A^n}\big\|\gamma^{A^n}\right)\\
&\leq D_{\max}\left(|z^n\lr z^n|^{A^n}\big\|\gamma^{A^n}\right)\;,
\ea
where in the first equality we used the resource equivalency between the athermality states in~\eqref{136} and~\eqref{137}. In the second inequality we used the fact that the cost of $|z^n\lr z^n|$ without the assistance of $R_n$ cannot be smaller than the cost of $\phi^{R_n}\otimes|z^n\lr z^n|$ with the assistance of $R_n$, since the latter is defined in terms of an infimum over all states $\omega\in\md(R_n)$ (cf.~\eqref{129}). In the second equality we used the second relation of~\eqref{34} combined with the fact that in the quasi-classical regime GPO has the same conversion power as CTO (see~\eqref{19}). Finally, in the last inequality we used the fact that $D_{\max}$ is always no smaller than its smoothed version. Combining this with the previous equation and with the definition of $\cost_{\mf}\left(\psi^{A},\gamma^{A}\right)$, we conclude that
\be
\cost_{\mf}\left(\psi^{A},\gamma^{A}\right)\leq \liminf_{n\to\infty}\frac1nD_{\max}\left(|z^n\lr z^n|^{A^n}\big\|\gamma^{A^n}\right)\;.
\ee 
Now, observe that
\ba\label{143}
D_{\max}\left(|z^n\lr z^n|^{A_n}\big\|\gamma^{A^n}\right)&=-\log\big\la z^n\big|\gamma^{A^n}\big|z^n\big\ra\\
\Gg{\gamma^{A^n}=\left(\gamma^A\right)^{\otimes n}\to}&=-\sum_{x=1}^mnt_x^{n,\min}\log\la x|\gamma^{A}|x\ra\;,
\ea
where in the last equality we used the fact that the sequence $z^n$ has a type $\t^{\min,n}$.
Combining this with the previous equation we conclude that
\ba\label{18p127}
\cost_{\mf}\left(\psi^{A},\gamma^{A}\right)&\leq-\lim_{n\to\infty}\sum_{x=1}^mt_x^{n,\min}\log\la x|\gamma^{A}|x\ra\\
&=-\sum_{x=1}^{m}p_x\log\la x|\gamma^{A}|x\ra\\
&=D(\psi\|\gamma)\;,
\ea
where we used the fact that $\t^{\min,n}\in\ms_n$ so that $\t^{\min,n}$ is $\eps_n$-close to $\p$. Therefore, since $\lim_{n\to\infty}\eps_n=0$ we have $\lim_{n\to\infty}\t^{\min,n}=\p$. This completes the proof.
\end{proof}

\section{Discussion and Outlook}\label{sec6}

Quantum athermality can be viewed as a  composite resource consisting of non-uniformity and quantum coherence. While the study of non-uniformity is well understood, the role coherence plays in quantum thermodynamics is far less understood. In this paper, we first developed  the resource theory of time-translation asymmetry which is the type of quantum coherence appearing in thermodynamics. Remarkably, we were able to find (Theorem~\ref{ttcs}) a relatively simple criterion, determining if there exists a time-translation covariant channel between two given quantum states. We restricted our attention to Hamiltonians with non-degenerate Bohr spectrum as almost all Hamiltonians have such a spectrum. However, it is worth noting that some important Hamiltonians, such as the Hamiltonian of the harmonic oscillator, do not have such a spectrum. For such Hamiltonians, some of our results do not apply, although significant progress has been made recently in this direction~\cite{Marvian2021,KH2022,KT2022}.

We used the resource theory of time-translation asymmetry to develop the theory of quantum athermality in the single-shot regime. We considered three types of free operations: thermal operations (TO), closed thermal operations (CTO), and Gibbs preserving covariant operations (GPC). In particular, Theorem~\ref{thm31} demonstrated in a rigorous way that two athermality states $\rho,\sigma\in\md(A)$ with the same diagonal elements have the same non-uniformity content, and can only be different in their coherence content. 

In this respect, it would be interesting to know if the same result holds also for CTO. 
One of the long-standing open problems in the resource-theoretic approach to quantum thermodynamics is whether under CTO, quantum athermality comprise of just non-uniformity and coherence. That is, since GPC is a larger set of operations than CTO it could well be that some interconversions between two athermality resources is possible under GPC operations but not under CTO. If this is the case, it would mean that quantum athermality contains another type of resource that is not captured solely by coherence and nonuniformity.

In the asymptotic regime, however, GPC does not provide any advantage over CTO. Both sets of operations lead to the same distillable rate given in terms of the Umegaki relative entropy (see Theorem~\ref{droat}). Since coherence is needed to create athermality states that are not quasi-classical, the cost rate of a non-quasi-classical state diverges. To get a meaningful result, we followed the idea of~\cite{BHO+2013} to borrow a small amount of coherence, and showed that, for pure states, with the assistance of an asymptotically negligible quantum athermality, we can restore into the fully quantum domain, the reversibility that exists in the quasi-classical regime.

We defined asymptotically negligible resources as sequences $\{R_n\}_{n\in\mbb{N}}$ whose maximal energy grows sublinearly with $n$. The intuition behind this definition is that the energy of $n$ copies of a system $A$ grows linearly with $n$ so that for sufficiently large $n$, the energy of $A^n$ is much larger than that of $R_n$. Indeed, such a sublinear athermality resource (SLAR) cannot increase the distillable athermality (see Appendix~\ref{appendixb}).

In Theorem~\ref{purecostf} we showed that the cost rate of a \emph{pure} athermality resource, under $\mf\eqdef\cto+{\rm SLAR}$, is given by the Umegaki relative entropy. Moreover, in Theorem~\ref{droat} and Appendix~\ref{appendixb} we showed that the distillable rate under $\mf$ of \emph{any} athermality resource is given by the Umegaki relative entropy. When combining these two results together we conclude that the rate of converting (by $\mf$) many copies of a mixed state $(\rho^A,\gamma^A)$ to many copies of a pure state $(\psi^{B},\gamma^{B})$ is given by
\be
{\rm Rate}_\mf\Big(\left(\rho^A,\gamma^A\right)\to \left(\psi^{B},\gamma^{B}\right)\Big)=\frac{D(\rho^A\|\gamma^A)}{D(\psi^{B}\|\gamma^{B})}\;.
\ee
For the specific case that also $\rho^A$ is pure, the above formula indicates that the resource theory of pure athermality is reversible under $\mf$. For the mixed-state case the problem is still open.

As discussed above, under GPC and CTO, coherence among energy level is a resource that cannot be measured by the golden unit $(|0\lr 0|^X,\u_m^X)$ introduced in~\cite{WW2019} (see~\eqref{334}) for athermality under GPO. The reason is that this golden unit is quasi-classical, and it cannot be converted by GPC (or CTO) to any athermality state that is not quasi-classical (even if we take $m=\infty$).  For this reason, one has to specify another golden unit that quantifies the coherence content of quantum athermality. We discuss now a candidate of such a golden unit.

For a given athermality state $(\rho,\gamma)$ we can interpret the state $(\mP(\rho),\gamma)$ as the non-uniformity contained in $(\rho,\gamma)$. If fact, we saw in Theorem~\ref{thmdis} that for any $\eps>0$, $\distill_{\gpc}^{\eps}\left(\rho,\gamma\right)=\distill_{\gp}^{\eps}\left(\mP(\rho),\gamma\right)$, which supports this assertion. It is somewhat less clear how to characterize or isolate the time-translation asymmetry contained in $(\rho,\gamma)$. 

Consider an athermality state $(\sigma,\gamma)$ with the property that $\mP(\sigma)=\gamma$. Such an athermality state has zero nonuniformity, and consequently it contains only time-translation asymmetry. We can therefore call such states purely-coherent athermality states.
In Corollary~\ref{puretomixgpc} we saw that the purely-coherent athermality state $(\psi_\gamma,\gamma)$,  given by
\be
|\psi_\gamma\ra\eqdef\sum_{x=1}^m\sqrt{g_x}|x\ra\quad\text{and}\quad\gamma=\sum_{x=1}^mg_x|x\lr x|\;,
\ee
can be converted to any other purely-athermality state of the form $(\rho,\gamma)$, where $\rho\in\md(A)$ has the same diagonal as $\gamma$. Therefore, the athermality state $(\psi_\gamma,\gamma)$ can be taken to be the golden unit for the coherence content of quantum athermality. Note however that unlike the golden unit $(|0\lr 0|,\u_m)$ used for the nonuniformity content of athermality, $(\psi_\gamma,\gamma)$ depends explicitly on the Hamiltonian. 

With this golden unit, we can now ask what is the coherence cost of an athermality state $(\rho,\gamma)$. To compute the asymptotic cost of preparing many copies, say $n$, of a given athermality state $(\rho,\gamma)$ we can minimize the integers $m,k$ for which the conversion
\be
\big(\psi_\gamma,\gamma\big)^{\otimes k}\otimes\big(|0\lr 0|,\u\big)^{\otimes m}\;\xrightarrow{\gpc}\;\big(\rho,\gamma\big)^{\otimes n}
\ee
is possible with a small error that vanishes in the limit $n\to\infty$. We leave the investigation along these lines for future work.

\begin{acknowledgments}
The author would like to thank David Jennings, Thomas Theurer, and Marco Tomamichel for useful discussions. The author also thanks both Thomas Theurer and Ludovico Lami for extremely useful comments on the first draft of the paper. The new, shorter proof, for Corollary~\ref{puretomix} that appear in the current version is due to Ludovico Lami. 
The author acknowledge support from the Natural Sciences and Engineering Research Council of Canada (NSERC).
\end{acknowledgments}

\bibliographystyle{apsrev4-2}
\bibliography{QRT}

\begin{appendix}

\section{Possible gaps in the original proof of~\cite{BHO+2013}}

The proof given in~\cite{BHO+2013} seems to have several gaps. Here we point out one such gap, and discuss an implicit assumption made in~\cite{BHO+2013}.

In $(35)$ and $(37)$ of their Supplemental Material (SM) the authors of~\cite{BHO+2013} consider two states
\be
\rho^{\otimes n}=\sum_{k,g}p_k|\Psi_{k,g}\lr \Psi_{k,g}|\;,\;\;\rho_n=\sum p_k|t_k,s_g\lr t_k,s_g|\;,
\ee
where for simplicity, the authors consider rank 2 state \be\rho=p|\phi_1\lr\phi_1|+(1-p)|\phi_2\lr \phi_2|\;.
\ee
In the first step of their protocol, one first create the diagonal state $\rho_n$ which has the same spectrum as $\rho^{\otimes n}$. Since $\rho_n$ is diagonal, its eigenvectors $\{|t_k,s_g\ra\}$ depend only on the Gibbs state $\gamma^{\otimes n}$.  The authors of~\cite{BHO+2013} do not specify in (37) (of their SM) the range of $k$, but from (45) in the SM it becomes clear that $k\in{\rm Typ}_\rho\eqdef[np-\sqrt{n},np+\sqrt{n}]$   (see the sentence above (38) in the SM of~\cite{BHO+2013} for the definition ${\rm Typ}_\rho\eqdef[np-\sqrt{n},np+\sqrt{n}]$).

The author then move to claim that
 ``From the result of the previous section it is not hard to see that this [i.e. the cost of preparing $\rho_n$] can be done at a rate given by the relative entropy distance of $\rho$ to the Gibbs state, since in the limit of many copies, the regularized relative entropy distance is the same".
 However, there exists a simple argument why, in general, the cost of preparing $\rho_n$ is not equal to $D(\rho\|\gamma)$, where $\gamma$ is the Gibbs state. 
 
The argument goes as follows. Consider  the two states $\rho$ and $\sigma\eqdef V\rho V^\dag$, where $V$ is some unitary matrix. Since the eigenvalues of both $\rho$ and $\sigma$ are $p$ and $1-p$, it follows that the construction of $\rho_n$ would be exactly the same whether our initial state is $\rho$ or whether it is $\sigma$. This is because $\rho_n$ does not depend explicitly on the eigenvectors of $\rho$ (only the eigenvalues). However, clearly, there exists a unitary $V$ such that 
 \be
 D(V\rho V^*\|\gamma)\neq D(\rho\|\gamma)\;.
 \ee
Since $\rho_n$ as defined above would be the same for both $\rho$ and $\sigma\eqdef V\rho V^*$ the cost rate of preparing $\rho_n$ cannot be equal to $D(\rho\|\gamma)$.

Another issue with the proof in~\cite{BHO+2013} is that the matrix in $(31)$ of the SM of~\cite{BHO+2013} is not a unitary matrix as claimed.
Indeed, by direct calculation
\ba
&U^{\rm inv}(U^{\rm inv})^\dag=\\
&\sum_{i,i',j}u_{ij}\bar{u}_{i'j}|E_i\lr E_{i'}|\otimes|h-E_i+E_j\lr h-E_{i'}+E_j|\\
&\neq I
\ea
Perhaps the intention of the authors of~\cite{BHO+2013} is to include a sum over $h$ in the definition of $U^{\rm inv}$, and allowing $h$ to go from $-\infty$ to $+\infty$ so that
\ba
&U^{\rm inv}(U^{\rm inv})^\dag=\\
&\sum_{i,i',j}u_{ij}\bar{u}_{i'j}|E_i\lr E_{i'}|\otimes\sum_{h=-\infty}^\infty |h-E_i+E_j\lr h-E_{i'}+E_j|\\
&=\sum_{i,i',j}u_{ij}\bar{u}_{i'j}|E_i\lr E_{i'}|\otimes\sum_{h=-\infty}^\infty |h-E_i\lr h-E_{i'}|\\
&=\sum_{i,i'}\delta_{i,i'}|E_i\lr E_{i'}|\otimes\sum_{h=-\infty}^\infty |h-E_i\lr h-E_{i'}|\\
&= I\;.
\ea
Note that one has to allow for the Hamiltonian of the reference system to have an unbounded negative spectrum.  Such Hamiltonians are known to lead to instabilities of the physical system, and occur for example in relation to spin-statistics theorem. However, we point out, that in the present paper the author only assumes ancillary systems of finite dimensions and with Hamiltonians whose spectrum is non-negative (i.e. bounded from below).

\section{Distillation under GPC+SLAR}\label{appendixb}

\begin{lemma}
Let $(\rho,\gamma)$ be an athermality state, and let $\mf\eqdef\gpc+{\rm SLAR}$. Then,
\be
\distill_{\mf}\left(\rho,\gamma\right)=D\left(\rho\|\gamma\right)\;.
\ee
\end{lemma}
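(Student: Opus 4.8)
The statement is an equality, so the plan is to prove the two matching inequalities, of which only the upper bound is nontrivial. The lower bound is immediate: every $\gpc$ protocol is a special case of a $\gpc+{\rm SLAR}$ protocol (take the SLAR to act trivially), so $\distill_{\mf}(\rho,\gamma)\geq\distill_{\gpc}(\rho,\gamma)$, and Theorem~\ref{droat} already gives $\distill_{\gpc}(\rho,\gamma)=D(\rho\|\gamma)$. Everything therefore reduces to showing $\distill_{\mf}(\rho,\gamma)\leq D(\rho\|\gamma)$, i.e.\ that an asymptotically negligible athermality resource cannot push the distillation rate above the Umegaki relative entropy.

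To obtain the upper bound I would fix an arbitrary SLAR $\{R_n\}_{n\in\mbb{N}}$ and an arbitrary helper state $\omega\in\md(R_n)$, and note that the $R_n$-assisted single-shot distillation from $\rho^{\otimes n}$ is just the ordinary $\gpc$-distillation from the joint state $\rho^{\otimes n}\otimes\omega^{R_n}$ relative to $\gamma^{\otimes n}\otimes\gamma^{R_n}$. Applying Theorem~\ref{thmdis} to the composite system then gives
\be
\distill_{\gpc}^{\eps}\left(\rho^{\otimes n}\otimes\omega^{R_n},\gamma^{\otimes n}\otimes\gamma^{R_n}\right)=D_{\min}^\eps\left(\mP\left(\rho^{\otimes n}\otimes\omega^{R_n}\right)\big\|\gamma^{\otimes n}\otimes\gamma^{R_n}\right)\;,
\ee
where $\mP$ is the pinching with respect to the total Hamiltonian. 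Since $\mP$ is a $\gpc$ map that fixes the Gibbs state, the data-processing inequality for $D_{\min}^\eps$ (exactly as used in~\eqref{107}) removes the pinching, and the weak subadditivity bound~\eqref{47} then separates the two tensor factors:
\be
D_{\min}^\eps\left(\rho^{\otimes n}\otimes\omega^{R_n}\big\|\gamma^{\otimes n}\otimes\gamma^{R_n}\right)\leq D_{\min}^\eps\left(\rho^{\otimes n}\big\|\gamma^{\otimes n}\right)+D_{\max}\left(\omega^{R_n}\big\|\gamma^{R_n}\right)\;.
\ee

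The crux is to control the correction term $D_{\max}(\omega^{R_n}\|\gamma^{R_n})$ uniformly in $\omega$. Using $\omega^{R_n}\leq I$ and $\gamma^{R_n}\geq\lambda_{\min}(\gamma^{R_n})\,I$ one has $D_{\max}(\omega^{R_n}\|\gamma^{R_n})\leq-\log\lambda_{\min}(\gamma^{R_n})$, and writing $\gamma^{R_n}=e^{-\beta H^{R_n}}/Z_n$ (with $Z_n\leq|R_n|$ after shifting the smallest energy to zero) yields
\be
D_{\max}\left(\omega^{R_n}\big\|\gamma^{R_n}\right)\leq 2\beta\left\|H^{R_n}\right\|_\infty+\log|R_n|\;.
\ee
By the definition of a SLAR, $\|H^{R_n}\|_\infty\leq cn^\alpha$ with $\alpha<1$ and $|R_n|$ grows only polynomially, so this whole bound is $o(n)$. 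Dividing by $n$, taking the supremum over $\omega$ and over all SLARs, then $\limsup_{n\to\infty}$ and finally $\eps\to0^+$, the correction term vanishes and the quantum Stein's lemma gives $\lim_{\eps\to0^+}\limsup_{n\to\infty}\tfrac1n D_{\min}^\eps(\rho^{\otimes n}\|\gamma^{\otimes n})=D(\rho\|\gamma)$, establishing the upper bound.

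The step I expect to need the most care is precisely this uniform estimate on $D_{\max}(\omega^{R_n}\|\gamma^{R_n})$: it must hold for the worst-case helper $\omega$, namely the projector onto the maximal-energy eigenstate of $R_n$, and it is exactly the sublinear-energy condition together with the polynomial-dimension bound in the SLAR definition that forces $\tfrac1n D_{\max}(\omega^{R_n}\|\gamma^{R_n})\to0$. The remainder is a routine assembly of Theorem~\ref{thmdis}, data processing, the weak subadditivity lemma~\eqref{47}, and Stein's lemma, combined with the trivial lower bound from $\gpc\subseteq\mf$.
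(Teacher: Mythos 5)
Your proof is correct and follows essentially the same route as the paper's: the single-shot formula of Theorem~\ref{thmdis} combined with data processing (as in \eqref{107}), the weak subadditivity bound \eqref{47}, a uniform sublinear estimate on $D_{\max}(\omega^{R_n}\|\gamma^{R_n})$, and Stein's lemma, together with the trivial lower bound from $\gpc\subseteq\mf$ and Theorem~\ref{droat}. The only (harmless) difference is in the $D_{\max}$ estimate: you bound $-\log\lambda_{\min}(\gamma^{R_n})\leq 2\beta\|H^{R_n}\|_\infty+\log|R_n|$ using the polynomial-dimension clause of the SLAR definition, whereas the paper shifts the ground energy to zero and claims the bound $\beta\|H^{R_n}\|_\infty$ alone; both are $o(n)$, and your version is in fact the more careful one, since the partition function $\tr[e^{-\beta H^{R_n}}]$ enters $\|(\gamma^{R_n})^{-1}\|_\infty$ multiplicatively rather than as a denominator, so it genuinely requires the dimension bound to control it.
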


\begin{proof}
Let $\{(\omega^{R_n},\gamma^{R_n})\}_{n\in\mbb{N}}$ an an SLAR and observe that from~\eqref{107} it follows that for any $n\in\mbb{N}$ and any $\eps\in(0,1)$
\ba
&\distill_{\gpc}^{\eps}\left(\rho^{\otimes n}\otimes\omega^{R_n},\gamma^{A^n}\otimes\gamma^{R_n}\right)\\
&\quad\quad\quad\leq D_{\min}^\eps\left(\rho^{\otimes n}\otimes\omega^{R_n}\big\|\gamma^{A^n}\otimes\gamma^{R_n}\right)\\
&{\GG{\eqref{47}\to}}\leq D_{\min}^\eps\big(\rho^{\otimes n}\big\|\gamma^{A^n}\big)+D_{\max}\left(\omega^{R_n}\big\|\gamma^{R_n}\right)\;,
\ea
Now, it is well known (see e.g.~\cite{GT2021}) that all quantum relative entropies, in particular, $D_{\max}$, satisfy 
\begin{align*}
D_{\max}(\omega^{R_n}&\big\|\gamma^{R_n})\leq\log\left\|\left(\gamma^{R_n}\right)^{-1}\right\|_{\infty}\\
&=\log\left(\tr\left[e^{-\beta H^{R_n}}\right]\exp\left(\beta\left\|H^{R_n}\right\|_\infty\right)\right)\\
&\leq\beta \left\|H^{R_n}\right\|_\infty+\log\tr\left[e^{-\beta H^{R_n}}\right]\\
&\leq \beta cn^\alpha+\log|R_n|\;,
\end{align*}
where the last line follows from the fact that $(\omega^{R_n},\gamma^{R_n})$ is SLAR so there exists $c>0$ independent of $n$, and $\alpha\in[0,1)$ such that the maximal energy of system $R_n$ does not exceed $cn^\alpha$. Moreover, since $|R_n|$ is polynomial in $n$ we get that for sufficiently large $n$, $\log|R_n|\leq\beta cn^{\alpha}$. Taking the supremum over all possible SLAR systems $R_n$ we get that
for any $\eps>0$
\ba
&\varlimsup_{n\to\infty}\frac1n\distill_{\mf}^{\eps}\left(\rho^{\otimes n},\gamma^{A^n}\right)\\
&=\sup_{\{R_n\}_{n\in\mbb{N}}}\varlimsup_{n\to\infty}\frac1n\distill_{\gpc}^{\eps}\left(\rho^{\otimes n}\otimes\omega^{R_n},\gamma^{A^nR_n}\right)\\
&\leq \sup_{\alpha\in[0,1),c\in\mbb{R}_+}\varlimsup_{n\to\infty}\frac1n\left(D_{\min}^\eps\big(\rho^{\otimes n}\big\|\gamma^{A^n}\big)+2\beta cn^\alpha\right)\\
&=D(\rho^A\|\gamma^A)\;.
\ea
This completes the proof.
\end{proof}

\end{appendix}

\end{document}